\newtheorem{theorem}{Theorem}[section]
\newtheorem{cor}{Corollary}[section]
\newtheorem{lemma}{Lemma}[section]
 \def \sm {\setminus}
 \def \es {\emptyset}
\newenvironment{proof}[1][]%
{\noindent {\setcounter{equation}{0}\it Proof.
}{#1}{}}{\hfill$\Box$\vspace{2ex}}
\def\longbox#1{\parbox{0.85\textwidth}{#1}}
\begin{document}

\title{Square-free graphs with no six-vertex induced~path}

\author{T. Karthick\thanks{Computer Science Unit, Indian Statistical
Institute, Chennai Centre, Chennai 600029, India.}
\and%
Fr\'ed\'eric Maffray\thanks{CNRS, Laboratoire G-SCOP,
Univ.~Grenoble-Alpes, Grenoble, France. Deceased on August 22, 2018.}}

\date{\today}

\maketitle

\begin{abstract}
We elucidate the structure of $(P_6,C_4)$-free graphs by showing that
every such graph either has a clique cutset, or a universal vertex, or
belongs to several special classes of graphs.  Using this result, we show that for any
$(P_6,C_4)$-free graph $G$,  $\lceil\frac{5\omega(G)}{4}\rceil$ and $\lceil\frac{\Delta(G) +
\omega(G) +1}{2}\rceil$ are tight upper bounds for the chromatic number of
$G$.  Moreover, our structural results imply that every ($P_6$,$C_4$)-free graph with no clique cutset has bounded clique-width,  and thus the existence of a polynomial-time algorithm that computes the chromatic number (or stability number) of  any $(P_6,C_4)$-free graph.

\smallskip
\noindent{\bf Keywords}: Square-free graphs; $P_6$-free graphs; Chromatic number; $\chi$-boundedness; Clique
size; Degree.
\end{abstract}

\section{Introduction}\label{sec:intro}

All our graphs are finite and have no loops or multiple edges.  For
any integer $k$, a \emph{$k$-coloring} of a graph $G$ is a mapping
$c:V(G)\rightarrow\{1,\ldots,k\}$ such that any two adjacent vertices
$u,v$ in $G$ satisfy $c(u)\neq c(v)$.  A graph is \emph{$k$-colorable}
if it admits a $k$-coloring.  The \emph{chromatic number} $\chi(G)$ of
a graph $G$ is the smallest integer $k$ such that $G$ is
$k$-colorable.  In general, determining whether a graph is
$k$-colorable or not is well-known to be $NP$-complete for every fixed
$k\ge 3$.  Thus designing algorithms for computing the chromatic
number by putting restrictions on the input graph and obtaining bounds
for the chromatic number are of interest.

A \emph{clique} in a graph $G$ is a set of pairwise adjacent vertices.
Let $\omega(G)$ denote the maximum clique size in a graph $G$.
Clearly $\chi(H)\ge \omega(H)$ for every induced subgraph $H$ of $G$.
A graph $G$ is \emph{perfect} if every induced subgraph $H$ of $G$
satisfies $\chi(H) = \omega(H)$.  The existence of triangle-free
graphs with aribtrarily large chromatic number shows that for general
graphs the chromatic number cannot be upper bounded by a function of
the clique number.  However, for restricted classes of graphs such a
function may exist.  Gy\'arf\'as~\cite{Gyarfas} called such classes of
graphs \emph{$\chi$-bounded} classes.  A family of graphs $\cal{G}$ is
$\chi$-bounded with $\chi$-bounding function $f$ if, for every induced
subgraph $H$ of $G\in \cal{G}$, $\chi(H)\le f(\omega(H))$.  For
instance, the class of perfect graphs is $\chi$-bounded with
$f(\omega)= \omega$.

Given a family of graphs ${\cal F}$, a graph $G$ is \emph{${\cal
F}$-free} if no induced subgraph of $G$ is isomorphic to a member of
${\cal F}$; when ${\cal F}$ has only one element $F$ we say that $G$
is $F$-free.  Several classes of graphs defined by forbidding certain
families of graphs were shown to be $\chi$-bounded: even-hole-free
graphs \cite{ACHRS-evenhole}; odd-hole-free graphs \cite{SS};
quasi-line graphs \cite{CAO-quasiline}; claw-free graphs with
stability number at least 3 \cite{CP-claw}; see also \cite {CCH, CKS,
CPST, KP, KZ} for more instances.

For any integer $\ell$ we let $P_\ell$ denote the path on $\ell$
vertices and $C_\ell$ denote the cycle on $\ell$ vertices.  A cycle on
$4$ vertices is referred to as a \emph{square}.  It is well known that
every $P_4$-free graph is perfect.  Gy\'arf\'as~\cite{Gyarfas} showed
that the class of $P_k$-free graphs is $\chi$-bounded.  Gravier et
al.~\cite{GHM} improved Gy\'arf\'as's bound slightly by showing that
every $P_k$-free graph $G$ satisfies $\chi(G) \le
(k-2)^{\omega(G)-1}$.  In particular every $P_6$-free graph $G$
satisfies $\chi(G) \le 4^{\omega(G)-1}$.  Improving this exponential
bound seems to be a difficult open problem.  In fact the problem of
determining whether the class of $P_5$-free graphs admits a polynomial
$\chi$-bounding function remains open, and the known $\chi$-bounding
function $f$ for such class of graphs satisfies $c(\omega^2/\log w)\le
f(\omega)\le 2^{\omega}$ \cite{KPT-P5}.  So the recent focus is on
obtaining (linear) $\chi$-bounding functions for some classes of
$P_t$-free graphs, where $t\ge 5$.  It is shown in \cite{CKS} that every $(P_5,C_4)$-free graph $G$ satisfies
$\chi(G)\le \lceil\frac{5\omega(G)}{4}\rceil$, and in \cite{CK} that every $(P_2\cup P_3,C_4)$-free graph $G$ satisfies
$\chi(G)\le \lceil\frac{5\omega(G)}{4}\rceil$.  Gaspers and Huang \cite{GH}
studied the class of $(P_6, C_4)$-free graphs (which  generalizes the class of $(P_5,C_4)$-free graphs and the class of $(P_2\cup P_3,C_4)$-free graphs)    and showed that every
such graph $G$ satisfies $\chi(G)\le \frac{3\omega(G)}{2}$.  We
improve their result and establish the best possible bound, as
follows.
\begin{theorem}\label{thm:54bound}
Let $G$ be any $(P_6,C_4)$-free graph.  Then $\chi(G)\le
\lceil\frac{5\omega(G)}{4}\rceil$.  Moreover, this bound is tight.
\end{theorem}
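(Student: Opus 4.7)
The plan is to prove Theorem~\ref{thm:54bound} by induction on $|V(G)|$, driven by the structural decomposition announced in the abstract: every $(P_6,C_4)$-free graph either has a clique cutset, has a universal vertex, or belongs to one of a short list of explicitly described special classes. Tightness will be exhibited by a separate family of examples.

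The two generic reductions are routine. If $K$ is a clique cutset of $G$, write $G=G_1\cup G_2$ with $V(G_1)\cap V(G_2)=K$; each $G_i$ is a proper induced subgraph, hence $(P_6,C_4)$-free with $\omega(G_i)\le\omega(G)$, so the inductive bound gives $\lceil 5\omega(G)/4\rceil$-colorings of $G_1$ and $G_2$ that can be made to agree on $K$ by permuting color names. If $G$ has a universal vertex $v$, then $\omega(G-v)=\omega(G)-1$ and $\chi(G)=\chi(G-v)+1$, and the elementary inequality $\lceil 5(\omega-1)/4\rceil+1\le\lceil 5\omega/4\rceil$ (valid for every $\omega\ge 1$) closes the case.

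The substantive work, and the main obstacle, is to verify the bound on each special class produced by the structure theorem. I expect most such classes to resemble either blow-ups of $C_5$ or bipartite/split-like graphs: for blow-ups of $C_5$, distributing five color classes among the five pentagon positions and rounding up yields exactly $\lceil 5t/2\rceil=\lceil 5\omega/4\rceil$ colors when each position is a clique of size $t$, while the bipartite/split-like pieces are essentially perfect and satisfy the much stronger bound $\chi\le\omega+O(1)$. The delicate case should be the hybrid classes carrying both a large clique $Q$ and a $C_5$-type attachment: there one colors $Q$ with $\omega(G)$ colors and tries to extend using at most $\lceil\omega(G)/4\rceil$ additional colors on the attached vertices, which demands a careful analysis of how that attachment can look under $(P_6,C_4)$-freeness. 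Tightness is finally established by the family obtained by substituting a clique of size $t$ for each vertex of $C_5$ (the lexicographic product $C_5[K_t]$): this graph is $(P_6,C_4)$-free, has $\omega=2t$ and $\alpha=2$, so $\chi\ge |V|/\alpha = 5t/2$, whence $\chi=\lceil 5t/2\rceil = \lceil 5\omega/4\rceil$ for every $t\ge 1$.
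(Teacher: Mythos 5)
Your overall strategy is exactly the paper's: induction on $|V(G)|$, with the clique-cutset and universal-vertex reductions handled as you describe, the remaining graphs classified by the structure theorem, and tightness witnessed by the balanced blowup of $C_5$ (which the paper also uses: $\omega=2q$, $\alpha=2$, so $\chi\ge\lceil 5q/2\rceil=\lceil 5\omega/4\rceil$). Those three ingredients of your write-up are all correct.

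The genuine gap is that you have not proved the bound on the special classes; you have only stated an expectation of what they will look like, and that expectation is too optimistic. The classes produced by the decomposition are blowups of the Petersen graph, of $F_3$, of $H_2,\dots,H_5$ and of the two-parameter family $F_{k,\ell}$, together with bands, belts and boilers, and verifying $\chi\le\lceil\frac{5}{4}\omega\rceil$ for each of these is the bulk of the paper (all of Section~5). None of them reduces to ``a blowup of $C_5$ or a perfect-like graph'' in any immediate way: for blowups of the Petersen graph one needs the fact that the $2$-blowup is exactly $5$-chromatic (a $20$-vertex graph with independence number $4$) plus a separate argument locating a good stable set when some maximum clique is unbalanced; for blowups of $F_3$ and of $F_{k,\ell}$ the paper runs long case analyses constructing five or six stable sets whose union meets every maximum clique in at least four, resp.\ five, vertices; belts and boilers require the auxiliary theory of $\cal C$-pairs and their augmentations, and a degree-counting argument to rule out two simplicial vertices on each side. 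Your sketch of ``color $Q$ with $\omega$ colors and extend with $\lceil\omega/4\rceil$ more'' is not how any of these cases is actually closed, and without carrying out at least one representative case in full the proof is not there: the two generic reductions and the tightness example are the easy ten percent, and the case verification is the theorem.
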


\medskip

The degree of a vertex in $G$ is the number of vertices adjacent to
it.  The maximum degree over all vertices in $G$ is denoted by
$\Delta(G)$.  For any graph $G$, we have $\chi(G)\le \Delta(G)+1$.
Brooks~\cite{Brooks} showed that if $G$ is a graph with $\Delta(G)\ge
3$ and $\omega(G)\le \Delta(G)$, then $\chi(G)\le \Delta(G)$.
Reed~\cite{Reed} conjectured that every graph $G$ satisfies $\chi(G)
\leq \lceil\frac{\Delta(G) + \omega(G) +1}{2}\rceil$.  Despite several
partial results \cite{King,Rabern,Reed}, Reed's conjecture is still
open in general, even for triangle-free graphs.  Using
Theorem~\ref{thm:54bound}, we will show that Reed's conjecture holds
for the class of ($P_6$,$C_4$)-free graphs:
\begin{theorem}\label{thm:reeds}
If $G$ is a $(P_6,C_4)$-free graph, then $\chi(G) \leq
\lceil\frac{\Delta(G) + \omega(G) +1}{2}\rceil$.
\end{theorem}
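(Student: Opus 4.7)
The plan is to combine Theorem~\ref{thm:54bound} with Brooks's theorem and (for a narrow residual range) the structural decomposition of $(P_6,C_4)$-free graphs alluded to in the abstract, together with induction on $|V(G)|$. Write $\omega = \omega(G)$ and $\Delta = \Delta(G)$.

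The main branch is the regime $\Delta \ge \tfrac{3\omega-2}{2}$: here the elementary inequality $\tfrac{5\omega}{4} \le \tfrac{\Delta+\omega+1}{2}$ together with monotonicity of the ceiling gives, by Theorem~\ref{thm:54bound},
\[
\chi(G) \;\le\; \Bigl\lceil \tfrac{5\omega}{4} \Bigr\rceil \;\le\; \Bigl\lceil \tfrac{\Delta+\omega+1}{2} \Bigr\rceil.
\]
When instead $\Delta \le \omega+1$, Brooks's theorem yields $\chi(G) \le \Delta$ unless $G$ is complete or an odd cycle; the only odd cycle of length at least $5$ in our class is $C_5$ (since $C_{2k+1}$ for $k\ge 3$ contains an induced $P_6$), and both $K_{\omega+1}$ and $C_5$ are dispatched by inspection. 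In all these subcases one gets $\chi(G) \le \Delta \le \omega+1 \le \lceil(\Delta+\omega+1)/2\rceil$.

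The delicate intermediate window $\omega+2 \le \Delta \le \tfrac{3\omega-3}{2}$ is nonempty only for $\omega \ge 7$. Here I would apply the structural decomposition. If $G$ has a clique cutset $K$, the two sides of the decomposition are proper induced subgraphs with $\omega$ and $\Delta$ bounded above by those of $G$, so their colorings furnished by induction can be pasted along $K$ after permuting colors so that the two agree on the clique. If $G$ has a universal vertex $v$, then $\omega(G-v) = \omega-1$ and $\Delta(G-v) = \Delta-1$, so induction gives $\chi(G) = \chi(G-v) + 1 \le \lceil(\Delta+\omega-1)/2\rceil + 1 = \lceil(\Delta+\omega+1)/2\rceil$. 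Finally, if $G$ belongs to one of the distinguished special classes listed by the structure theorem, Reed's bound is verified directly from the explicit description of the class (in many cases such a class is perfect, so $\chi=\omega$ trivially suffices).

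The principal obstacle is the intermediate range, where Theorem~\ref{thm:54bound} is strictly weaker than Reed's bound when $\Delta$ sits just below $\tfrac{3\omega}{2}$ (for instance $\omega=9$, $\Delta=12$): the inductive reduction through clique cutsets or universal vertices, and the direct verification on each special class, are where the real work lies. Once each special class is handled, the case split tiles the entire $(\Delta,\omega)$-plane and proves the theorem.
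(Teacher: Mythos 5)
Your two outer regimes are handled correctly: when $\Delta \ge \tfrac{3\omega-2}{2}$ the bound of Theorem~\ref{thm:54bound} is at most Reed's bound, and when $\Delta\le\omega+1$ Brooks's theorem (plus the trivial cases $K_{\omega+1}$ and $C_5$) suffices; the reductions through clique cutsets and universal vertices are also sound, since $\Delta$ and $\omega$ are monotone under induced subgraphs. The genuine gap is that you never actually close the intermediate window $\omega+2\le\Delta\le\tfrac{3\omega-3}{2}$. You defer it to ``direct verification on each special class,'' but that verification \emph{is} the theorem in this range, and nothing in your write-up performs it. Worse, the shortcut you offer --- ``in many cases such a class is perfect, so $\chi=\omega$ trivially suffices'' --- is false for essentially all of the classes in Theorem~\ref{thm:struc0}: blowups of $H_1,\ldots,H_5$, $F_3$ and $F_{k,\ell}$, as well as bands, belts and boilers, all contain induced $C_5$'s and are not perfect. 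Checking Reed's bound for, say, an arbitrary blowup of the Petersen graph or of $F_{k,\ell}$ with $\omega+2\le\Delta\le\tfrac{3\omega-3}{2}$ is a computation comparable in length to the proofs in Section~5, and it is not obviously easier than re-deriving the $\lceil 5\omega/4\rceil$ bound itself. So as written the proposal proves the theorem only outside that window.

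For comparison, the paper does not case-split on $\Delta$ at all: it invokes a general transfer theorem from \cite{KM-JGT} (stated here as Theorem~\ref{thm:xbound:reeds}), namely that any graph satisfying $\chi\le\lceil 5\omega/4\rceil$ also satisfies $\chi\le\lceil\frac{\Delta+\omega+1}{2}\rceil$, and then simply combines it with Theorem~\ref{thm:54bound}. Note that this transfer is \emph{not} a consequence of Brooks's theorem alone (your own window analysis shows why: for $\omega\ge 9$ and $\Delta$ around $\omega+3$ one has $\min\{\Delta,\lceil 5\omega/4\rceil\}>\lceil\frac{\Delta+\omega+1}{2}\rceil$), so to repair your argument you would either have to prove such a transfer lemma yourself or genuinely verify Reed's bound class by class in the intermediate range.
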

One can readily see that the bounds in Theorem~\ref{thm:54bound} and
in Theorem~\ref{thm:reeds} are tight on the following example.  Let
$G$ be a graph whose vertex-set is partitioned into five cliques $Q_1,
\ldots, Q_5$ such that for each $i\bmod 5$, every vertex in $Q_i$ is
adjacent to every vertex in $Q_{i+1}\cup Q_{i-1}$ and to no vertex in
$Q_{i+2}\cup Q_{i-2}$, and $|Q_i|=q$ for all $i$ ($q>0$).  Clearly $\omega(G)=2q$ and $\Delta(G)=3q-1$.
Since $G$ has no stable set of size~$3$, $G$ is $P_6$-free and  $\chi(G)\ge
\lceil\frac{5q}{2}\rceil$. Moreover, since no two non-adjacent vertices in $G$ has a common neighbor in $G$, we also see that $G$ is $C_4$-free.

\medskip

Finally, we   also have the following result.
\begin{theorem}\label{thm:algo}
There is a polynomial-time algorithm which computes the chromatic
number of any $(P_6,C_4)$-free graph.
\end{theorem}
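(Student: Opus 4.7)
The plan is to combine the structural result announced in the abstract with two classical algorithmic tools: Tarjan's polynomial-time decomposition of a graph by clique cutsets, and the fact that the chromatic number can be computed in polynomial time on any class of bounded clique-width (given a bounded clique-width expression).

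First, I would decompose the input $(P_6,C_4)$-free graph $G$ by clique cutsets. Tarjan's algorithm produces, in polynomial time, a collection of ``atoms'' (maximal induced subgraphs with no clique cutset) such that
\[
\chi(G)=\max_{A\ \text{atom}}\chi(A).
\]
This is standard and uses only the fact that if $C$ is a clique cutset of $H$ separating $H$ into $H_1$ and $H_2$ with $H_1\cap H_2=C$, then $\chi(H)=\max\bigl(\chi(H_1),\chi(H_2)\bigr)$, since colorings of $H_1$ and $H_2$ can be reconciled on the clique $C$ by permuting colors. Each atom is itself $(P_6,C_4)$-free because being $(P_6,C_4)$-free is hereditary.

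Second, I would apply the structural theorem alluded to in the abstract: every $(P_6,C_4)$-free graph with no clique cutset either has a universal vertex or belongs to a short list of explicitly described classes. If the atom $A$ has a universal vertex $v$, then $\chi(A)=1+\chi(A-v)$; since $A-v$ is again $(P_6,C_4)$-free, we can recurse (restarting with clique cutset decomposition if necessary, as removing $v$ may create cutsets). For each of the remaining special classes, I would verify directly from its description that it has clique-width bounded by some absolute constant; the structural descriptions are typically ``skeletal'' enough (a few bags of stable sets or cliques linked according to a rigid pattern) that a fixed-size clique-width expression can be written down by inspection.

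Third, once bounded clique-width of each non-universal-vertex atom is established, I would invoke the known polynomial-time algorithm that computes the chromatic number of a graph given a $k$-expression, for every fixed $k$ (this is a classical consequence of the algorithmic metatheorems for clique-width, and specifically coloring is treated by Kobler and Rotics). Because the clique-width expressions can be read off from the structural description, no clique-width approximation is needed.

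The main obstacle is step two: verifying that each of the ``several special classes'' listed by the structure theorem really has bounded clique-width. This is a case-by-case task that depends entirely on the concrete form of those classes, and the bound on the clique-width (and hence the degree of the polynomial-time algorithm) will be dictated by the most complicated class appearing in the list. Once this verification is complete, the overall procedure --- clique cutset decomposition, followed by universal-vertex peeling, followed by a clique-width based coloring algorithm on the remaining atoms --- runs in polynomial time, giving Theorem~\ref{thm:algo}.
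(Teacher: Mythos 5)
Your proposal follows essentially the same route as the paper: decompose by clique cutsets, observe that every atom with no clique cutset falls into the classes of Theorem~\ref{thm:struc0}, bound the clique-width of these atoms (the paper claims a bound of $36$), and invoke the Kobler--Rotics algorithm for coloring graphs of bounded clique-width. Like the paper itself, which defers the case-by-case clique-width verification to \cite{GHP} and the extended version \cite{KM-arxiv}, you correctly identify that verification as the only substantive remaining work.
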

The proof of Theorem~\ref{thm:algo} is based on the concept of
clique-width of a graph $G$, which was defined in \cite{CER} as the
minimum number of labels which are necessary to generate $G$ using a
certain type of operations.  (We omit the details.)  It is known from
\cite{KobRot,Rao} that if a class of graphs has bounded clique-width,
then there is a polynomial-time algorithm that computes the chromatic
number of every graph in this class.  We are able to prove that every
$(P_6, C_4)$-free graph that has no clique cutset has clique-width at
most~$36$, which implies the validity of Theorem~\ref{thm:algo}.
However a similar result, using similar techniques, was proved by
Gaspers, Huang and Paulusma \cite{GHP}.  Hence we refer to \cite{GHP},
or to the extended version of our manuscript \cite{KM-arxiv} for the
detailed proof of Theorem~\ref{thm:algo}.

We finish on this theme by noting that the class of $(P_6, C_4)$-free
graph itself does not have bounded clique-width, since the class of
split graphs (which are all $(P_6, C_4)$-free) does not have bounded
clique-width \cite{BDHP,MR}.  The clique-width argument might also be
used for solving other optimization problems in ($P_6, C_4)$-free
graphs, in particular the stability number.  However this problem was
solved earlier by Mosca \cite{Mos}, and the weighted version was
solved in \cite{BH}, and both algorithms have reasonably low
complexity.

\bigskip

Theorems~\ref{thm:54bound} and \ref{thm:reeds} will be derived from
the structural theorem below (Theorem~\ref{thm:struc0}).  Before
stating it we recall some definitions.

\bigskip

In a graph $G$, the \emph{neighborhood} of a vertex $x$ is the set
$N_G(x)=\{y\in V(G)\setminus x\mid xy\in E(G)\}$; we drop the
subscript $G$ when there is no ambiguity.  The \emph{closed}
neighborhood is the set $N[x]=N(x)\cup\{x\}$.  Two vertices $x,y$ are
\emph{clones} if $N[x]=N[y]$.  For any $x\in V(G)$ and $A\subseteq
V(G)\setminus x$, we let $N_A(x) = N(x)\cap A$.  For any two subsets
$X$ and $Y$ of $V(G)$, we denote by $[X,Y]$, the set of edges that has
one end in $X$ and other end in $Y$.  We say that $X$ is
\emph{complete} to $Y$ or $[X,Y]$ is complete if every vertex in $X$
is adjacent to every vertex in $Y$; and $X$ is \emph{anticomplete} to
$Y$ if $[X,Y]=\emptyset$.  If $X$ is singleton, say $\{v\}$, we simply
write $v$ is complete (anticomplete) to $Y$ instead of writing $\{v\}$
is complete (anticomplete) to $Y$.  If $S \subseteq V(G)$, then $G[S]$
denote the subgraph induced by $S$ in $G$.  A vertex is
\emph{universal} if it is adjacent to all other vertices.  A
\emph{stable set} is a set of pairwise non-adjacent vertices.  A
\emph{clique-cutset} of a graph $G$ is a clique $K$ in $G$ such that
$G \setminus K$ has more connected components than $G$.  A
\emph{matching} is a set of pairwise non-adjacent edges.  The
\emph{union} of two vertex-disjoint graphs $G$ and $H$ is the graph
with vertex-set $V(G)\cup V(H)$ and edge-set $E(G)\cup E(H)$.  The
union of $k$ copies of the same graph $G$ will be denoted by $kG$; for
example $2P_3$ denotes the graph that consists in two disjoint copies
of $P_3$.

A vertex is \emph{simplicial} if its neighborhood is a clique.  It is
easy to see that in any graph $G$ that has a simplicial vertex,
letting $S$ denote the set of simplicial vertices, every component of
$G[S]$ is a clique, and any two adjacent simplicial vertices are
clones.

A \emph{hole} is an induced cycle of length at least $4$.  A graph is
\emph{chordal} if it contains no hole as an induced subgraph.  Chordal
graphs have many interesting properties (see e.g.~\cite{Gol80}), in
particular: every chordal graph has a simplicial vertex; every chordal
graph that is not a clique has a clique-cutset; and every chordal
graph that is not a clique has two non-adjacent simplicial vertices.

In a graph $G$, let $A,B$ be disjoint subsets of $V(G)$.  It is easy
to see that the following two conditions (i) and (ii) are equivalent:
(i) any two vertices $a,a'\in A$ satisfy either $N_B(a)\subseteq
N_B(a')$ or $N_B(a')\subseteq N_B(a)$; (ii) any two vertices $b,b'\in
B$ satisfy either $N_A(b)\subseteq N_A(b')$ or $N_A(b')\subseteq
N_A(b)$.  If this condition holds we say that the pair $\{A,B\}$ is
\emph{graded}. 
 Clearly in a $C_4$-free graph any two
disjoint cliques form a graded pair. See also Lemma~\ref{lem:ab}
below.

\paragraph{Some special graphs}

Let $F_1,F_2,F_3$ be three graphs (as in \cite{GH}), as shown in
Figure~\ref{fig:f123}.

\begin{figure}[h]
\centering
 \includegraphics{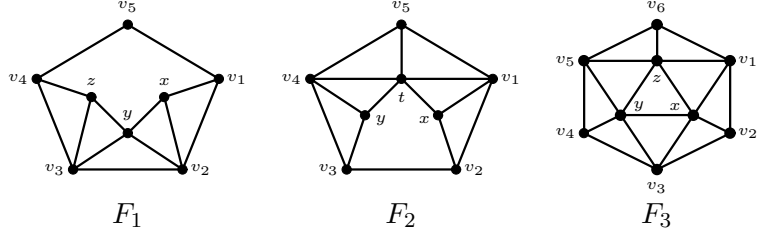}
\caption{$F_1$, $F_2$, $F_3$}\label{fig:f123}
\end{figure}

Let $H_1, H_2, H_3, H_4, H_5$ be five graphs, as shown in
Figure~\ref{fig:h12345}, where $H_1$ is the Petersen graph.
\begin{figure}[h]
\centering
 \includegraphics[width=11.75cm]{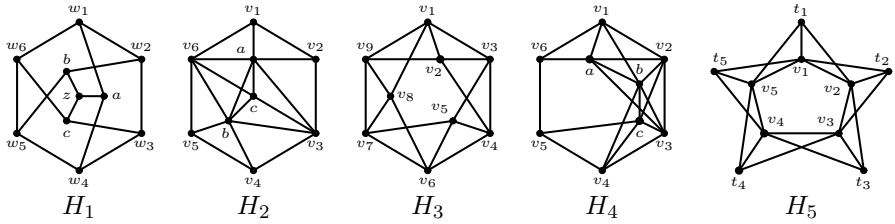}
\caption{$H_1$, $H_2$, $H_3$, $H_4$, $H_5$}\label{fig:h12345}
\label{fig:2}
\end{figure}

\begin{figure}[h]
\centering
 \includegraphics{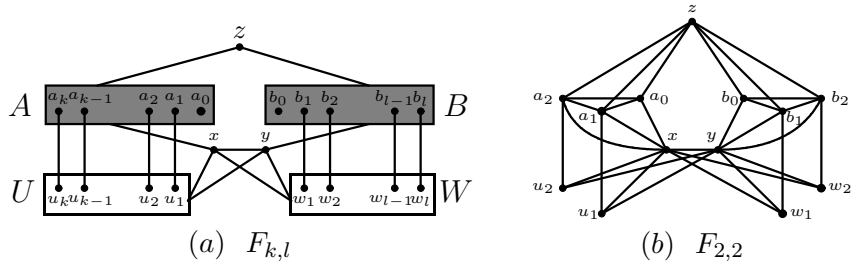}
\caption{(a) Schematic representation of the graph $F_{k,l}$. Here,   the vertices in a shaded box form a clique, and an edge between a vertex and a box indicates that the vertex is adjacent to all the vertices in the box. For example, the vertex $x$ is adjacent to all the  vertices in the boxes $A$, $U$,  and  $W$. (b) $F_{2,2}$.}\label{fig:fkl}
\end{figure}

\paragraph{Graphs $F_{k,\ell}$}

For integers $k,\ell\ge 0$ let $F_{k,\ell}$ be the graph whose
vertex-set can be partitioned into sets $A, B, U, W$ and $\{x,y,z\}$
such that:
\begin{itemize}\setlength\itemsep{0pt}
\item
$A =\{a_0,a_1,\ldots,a_k\}$ is a clique of size $k+1$, and $U =
\{u_1,u_2,\ldots,u_k\}$ is a stable set of size $k$, and the edges
between $A$ and $U$ form a matching of size $k$, namely,
$[A,U]=\{a_iu_i\mid i\in \{1,\ldots,k\}\}$;
\item
$B=\{b_0, b_1, \ldots, b_\ell\}$ is a clique of size $\ell+1$, and
$W=\{w_1, \ldots, w_\ell\}$ is a stable set of size $\ell$, and the
edges between $B$ and $W$ form a matching of size $\ell$, namely,
$[B,W]=\{b_jw_j\mid j\in\{1,\ldots,\ell\}\}$;
\item
The neighborhood of $x$ is $A\cup U\cup W\cup\{y\}$;
\item
The neighborhood of $y$ is $B\cup U\cup W\cup\{x\}$;
\item
The neighborhood of $z$ is $A\cup B$.
\end{itemize}

See Figure~\ref{fig:fkl} for the schematic representation of the graph $F_{k,l}$ and for the graph $F_{2,2}$.

\paragraph{Blowups}

A \emph{blowup} of a graph $H$ is any graph $G$ such that $V(G)$ can
be partitioned into $|V(H)|$ (not necessarily non-empty) cliques
$Q_v$, $v\in V(H)$, such that $[Q_u,Q_v]$ is complete if $uv\in E(H)$,
and $[Q_u,Q_v]=\es$ if $uv\notin E(H)$. See Figure~\ref{fig:bbb}:(a) for a blowup of a $C_5$.

\begin{figure}[h]
\centering
 \includegraphics[width=12cm]{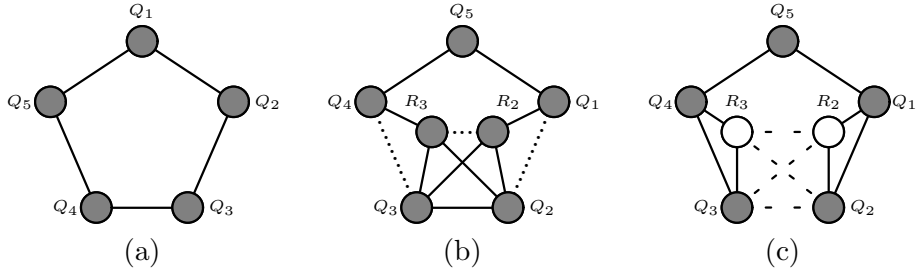}
\caption{Schematic representations of: (a) a blowup of a $C_5$, (b) a band, and (c) a belt.
  In (a), (b) and (c), the circles represent a collection of sets into which the vertex set of the graph is partitioned. Each shaded circle represents a nonempty clique,  a solid line between two circles indicates
that the two sets are complete to each other, and the absence of a line between two circles indicates that the
two sets are anticomplete to each other. In (b), a dotted line between two circles means that the respective pair of sets is graded. For example, the pair $\{Q_3, Q_4\}$ is graded. In (c),  the dashed lines between the sets $R_2, R_3, Q_2$ and $Q_3$ mean that the adjacency between these sets are subject to the fourth item of the definition of a belt.}\label{fig:bbb}
\end{figure}

\paragraph{Bands}

A \emph{band} is any graph $G$ (see Figure~\ref{fig:bbb}:(b)) whose vertex-set can be partitioned
into seven sets $Q_1,\ldots,Q_5,R_2,R_3$ such that:
\begin{itemize}\setlength\itemsep{0pt}
\item
Each of $Q_1,\ldots,Q_5,R_2,R_3$ is a clique.
\item
The sets $[Q_5, Q_1\cup Q_4]$, $[R_2, Q_1\cup Q_2\cup Q_3]$, $[R_3,
Q_2\cup Q_3\cup Q_4]$ and $[Q_2, Q_3]$ are complete.
\item
The sets $[Q_1, Q_3\cup R_3\cup Q_4]$, $[Q_4, Q_1\cup Q_2\cup R_2]$
and $[Q_5, Q_2\cup R_2\cup Q_3\cup R_3]$ are empty.
\item
The pairs $\{Q_1, Q_2\}$, $\{Q_3, Q_4\}$ and $\{R_2, R_3\}$ are
graded.
\end{itemize}

\paragraph{Belts}

A \emph{belt} is any $(P_6,C_4, C_6)$-free graph $G$ (see Figure~\ref{fig:bbb}:(c)) whose vertex-set
can be partitioned into seven sets $Q_1, \ldots, Q_5, R_2, R_3$ such
that:
\begin{itemize}\setlength\itemsep{0pt}
\item
Each of $Q_1, \ldots, Q_5$ is a clique.
\item
The sets $[Q_1, Q_2\cup R_2\cup Q_5]$ and $[Q_4, Q_3\cup R_3\cup Q_5]$
are complete.
\item
The sets $[Q_1, Q_3\cup R_3\cup Q_4]$, $[Q_4, Q_2\cup R_2\cup Q_1]$,
$[Q_5, Q_2\cup R_2\cup Q_3\cup R_3]$ are empty.
\item
For each $j\in\{2,3\}$, $[Q_j, R_j]$ is complete, every vertex in
$Q_j\cup R_j$ has a neighbor in $Q_{5-j}\cup R_{5-j}$, and no vertex
of $R_j$ is universal in $G[R_j]$.
\end{itemize}

\begin{figure}[t]
\centering
 \includegraphics[height=5cm]{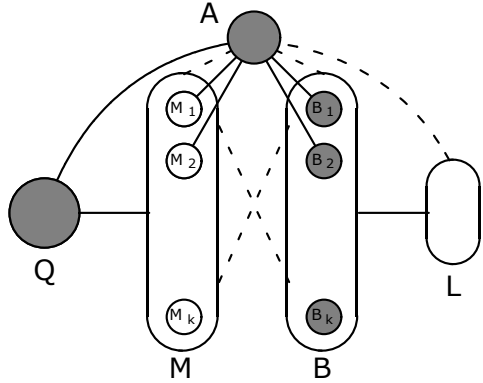}
\caption{Partial structure of a boiler. Here, each shaded circle represents a nonempty clique, and ovals labelled $M$ and $B$ represents the union of the sets represented by the circles inside that oval. The sets in oval $B$ forms a clique, and the ovals $M$ and $L$ induces a ($P_4,2P_3$)-free graph.  A solid line between two shapes indicates
that the respective sets are complete to each other. The absence of a line between any two shapes indicates that the
respective sets are anticomplete to each other.  A dashed line between any two shapes means that the adjacency between these sets are subject to the  definition of a boiler.}\label{fig:boiler}
\end{figure}

\paragraph{Boilers}
A \emph{boiler} is a $(P_6, C_4, C_6)$-free graph $G$  whose vertex-set
can be partitioned into five sets $Q, A, B, L, M$ such that:
\begin{itemize}\setlength\itemsep{0pt}
\item
The sets $Q$, $A$, $B$ and $M$ are non-empty, and $Q$, $A$ and $B$ are
cliques.
\item
The sets $[Q,A]$, $[Q,M]$, and $[B,L]$ are complete.
\item
The sets $[Q,B]$, $[Q, L]$ and $[L,M]$ are empty.
\item
$G[L]$ and $G[M]$ are $(P_4, 2P_3)$-free.
\item
Every vertex in $L$ has a neighbor in $A$.
\item
For some integer $k\ge 3$, $M$ is partitioned into $k$ non-empty sets
$M_1, \ldots,$ $M_k$, pairwise anticomplete, and $B$ is partitioned into
$k$ non-empty sets $B_1, \ldots, B_k$, such that for each
$i\in\{1,\ldots,k\}$ every vertex in $M_i$ has a neighbor in $B_i$ and no
neighbor in $B\setminus B_i$; and every vertex in $B$ has a neighbor
in $M$.
\item
$[A,M_1\cup B_1\cup M_2\cup B_2]$ is complete, and for each
$i\in\{3,\ldots,k\}$ every vertex in $A$ is either complete or
anticomplete to $M_i\cup B_i$, and no vertex in $A$ is complete to
$B$.
\end{itemize}

See Figure~\ref{fig:boiler} for the partial structure of a boiler.

\medskip
We consider that the definition of blowups (of certain fixed graphs)
and of bands (using Lemma~\ref{lem:ab}) is also a complete description
of the structure of such graphs.  However this is not so for belts and
boilers.  Such graphs have additional properties, and a
description of their structure is given in Section~\ref{sec:bb}.

\medskip

Now we can state our main structural result.  The existence of
such a decomposition theorem was inspired to us by the results from
\cite{GH} which go a long way in that direction.
\begin{theorem}\label{thm:struc0}
If $G$ is any $(P_6, C_4)$-free graph, then one of the following
holds:
\begin{itemize}\setlength\itemsep{0pt}
\item
$G$ has a clique cutset.
\item
$G$ has a universal vertex.
\item
$G$ is a blowup of either $H_1, \ldots,$ $H_5$, $F_3$ or $F_{k,\ell}$
(for some $k,\ell\ge 1$).
\item
$G$ is either a band, a belt, or a boiler.
\end{itemize}
\end{theorem}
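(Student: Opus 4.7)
The plan is to assume that $G$ has no clique cutset and no universal vertex, and then to exhibit the required decomposition by a case analysis driven by the length of a longest induced cycle in $G$. First I would handle the chordal case: if $G$ has no hole then, having no clique cutset, $G$ must be a clique, and any vertex of such a clique is universal, contradicting the assumption (the single-vertex case is trivial). Hence $G$ contains a hole; since $G$ is $(P_6,C_4)$-free, every hole has length $5$ or $6$, because any hole of length $\ell\ge 7$ contains an induced $P_6$ on six consecutive vertices. The proof therefore splits into two main cases, according to whether $G$ contains an induced $C_6$.

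In the case where $G$ contains an induced $C_6$, say on vertices $v_1v_2v_3v_4v_5v_6v_1$, I would partition $V(G)\setminus\{v_1,\ldots,v_6\}$ by the neighborhood pattern of each external vertex on the hole. The $C_4$-freeness forces each such neighborhood to be a very short arc (at most two consecutive hole-vertices), and the $P_6$-freeness eliminates many sparse patterns by producing an induced $P_6$ obtained by extending a sub-path of the hole through the external vertex. Using the fact, noted just before the special-graphs paragraph, that any two disjoint cliques in a $C_4$-free graph form a graded pair, the vertices sharing a common neighborhood type can be grouped into cliques, and the resulting contracted graph is seen to be a subgraph of one of $H_1,\ldots,H_5$, $F_3$ or $F_{k,\ell}$, or else fits the skeleton of a band. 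The no-clique-cutset and no-universal-vertex hypotheses then rule out the degenerate sub-structures and force the full list of ``complete'' and ``anticomplete'' adjacencies required by the band definition.

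In the case where $G$ is $(P_6,C_4,C_6)$-free and the longest hole is a $C_5$, I would follow a parallel strategy: fix an induced $C_5$ on $v_1,\ldots,v_5$, classify the external vertices by their trace on it, and use $C_4$-freeness, $P_6$-freeness and now also $C_6$-freeness to narrow down admissible patterns. The additional $C_6$-freeness is what imposes the rigid separation between $\{Q_1,Q_4,Q_5\}$ and $\{Q_2,Q_3,R_2,R_3\}$ in the belt, and what forces the matching-type structure between the sets $M_i$ and $B_i$ inside a boiler. If every external vertex fits into one of the belt-type roles, $G$ is a belt; otherwise one identifies the vertices that refuse such a role and, through a finer analysis of their common neighborhoods, assembles them and the rest of $G$ into the boiler partition $Q,A,B,L,M$ together with its prescribed list of adjacencies. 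The $(P_4,2P_3)$-freeness of $G[L]$ and $G[M]$, the lower bound $k\ge 3$, and the conditions that every vertex of $L$ has a neighbor in $A$ and every vertex of $B$ has a neighbor in $M$, will all emerge as consequences of the forbidden induced subgraphs combined with the no-clique-cutset assumption, which prevents such partial attachments from detaching a component.

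The main technical obstacle is the sheer length and book-keeping of the case analysis over neighborhood patterns on the $C_5$ or $C_6$: there are many \emph{a priori} possible attachment types, and for each one must either exhibit a forbidden induced subgraph ($P_6$, $C_4$, or in the $C_5$-case also $C_6$) or definitively assign the vertex to a specific role in one of the enumerated structures. A secondary difficulty is checking that the various ``graded'', ``complete'' and ``anticomplete'' adjacencies in the band, belt and boiler definitions are genuinely \emph{forced} rather than merely allowed; here the no-clique-cutset hypothesis is used repeatedly to propagate a local partial structure into a global one, while the no-universal-vertex hypothesis excludes degenerate collapses such as a boiler whose set $A$ would become complete to $B$.
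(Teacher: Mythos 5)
Your overall reduction is the same as the paper's: dispose of the chordal case (no hole, no clique cutset, hence a clique, hence a universal vertex), observe that every hole has length $5$ or $6$, and then classify the vertices outside a fixed hole by their trace on it. However, what you have written is a plan rather than a proof -- essentially all of the content of the theorem lives in the case analysis you defer -- and the plan as stated contains a step that is simply false. You claim that $C_4$-freeness forces each external vertex's neighborhood on the hole to be ``a very short arc (at most two consecutive hole-vertices)''. For a $C_5$ the surviving traces are $\emptyset$, a single vertex, two consecutive vertices, \emph{three} consecutive vertices, or the whole cycle; for a $C_6$ they are $\emptyset$, three or four consecutive vertices, an \emph{antipodal} pair $\{v_i,v_{i+3}\}$, or the whole cycle (the singletons and adjacent pairs are the ones killed, by $P_6$-freeness). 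The long arcs, the antipodal pairs and the complete attachments are exactly the types that generate the sets $T_i$, $A$, $A_i$, $B_i$, $D_i$, $S$ in the paper's Theorems~\ref{thm:c5} and~\ref{thm:c6}, and they drive the entire subsequent analysis; a classification built on ``arcs of length at most two'' cannot get off the ground.

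Two further ingredients are missing and are not recoverable from your generic appeal to the no-clique-cutset hypothesis. First, you need the Brandst\"adt--Ho\`ang results (Theorem~\ref{thm:BH}) that every induced $C_5$ is dominating and that a non-dominating $C_6$ forces a join with a blowup of the Petersen graph; without these you cannot control the vertices with no neighbor on the hole, and the Petersen outcome $H_1$ never appears in your sketch except as a name. Second, your two-way split ($C_6$ present or not) does not match how the outcomes actually arise: the paper's case analysis is organized by containment of the auxiliary graphs $F_3$, $F_1$, $F_2$ (yielding, respectively, blowups of $F_3$, bands, and blowups of $H_5$ or $F_{k,\ell}$), the blowups of $H_5$ and $F_{k,\ell}$ occur only in the $C_6$-free branch (both graphs are $C_6$-free), and the belt/boiler case additionally requires choosing the $C_5$ to minimize the number of $T$-type attachments -- an extremal choice with no analogue in your outline. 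As it stands the proposal identifies the right skeleton but neither carries out nor correctly sets up the classification that constitutes the proof.
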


Theorem~\ref{thm:struc0} is derived from Theorem~\ref{thm:structure}.
\begin{theorem}\label{thm:structure}
Let $G$ be a  $(P_6, C_4)$-free graph that has no clique-cutset and no
universal vertex.  Then the following hold:
\begin{enumerate}\setlength\itemsep{0pt}
\item
If $G$ contains an $F_3$, then $G$ is a blowup of $F_3$.
\item
If $G$ contains an $F_1$ and no $F_3$, then $G$ is a band.
\item\label{c6}
If $G$ is $F_1$-free, and $G$ contains an induced $C_6$, then $G$ is a
blowup of one of the graphs $H_1, H_2, H_3, H_4$.
\item
If $G$ is $C_6$-free, and $G$ contains an $F_2$, then $G$ is a blowup
of either $H_5$ or $F_{k,\ell}$ for some integers $k,\ell\ge 1$.
\item
If $G$ contains no $C_6$ and no $F_2$, and $G$ contains a $C_5$, then
$G$ is  either a belt or a boiler.
\end{enumerate}
\end{theorem}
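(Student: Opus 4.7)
\emph{Plan.} My approach is a case-by-case \emph{seed-and-extend} analysis. For each of the five clauses, I fix an induced copy of the graph singled out in the hypothesis ($F_3$, $F_1$, $C_6$, $F_2$, or $C_5$) and call its vertex set $S$. For every $v\in V(G)\setminus S$ I record the trace $N(v)\cap S$. Two structural constraints sharply limit the list of possible traces: $C_4$-freeness forbids a trace containing two non-adjacent vertices of $S$ that share another common neighbor in $S$, and $P_6$-freeness forbids long induced paths extending out of $S$. The hypothesis that $G$ has no clique cutset is then used to propagate adjacencies: if a candidate vertex class were separated from the rest of $G$ by a clique, we would exhibit a clique cutset. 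The assumption of no universal vertex rules out degenerate extremes.

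For statement~(1), a fixed $F_3$ has six vertices, and the trace analysis shows each outside vertex is a clone of some vertex of $F_3$, yielding a partition into six cliques $Q_1,\ldots,Q_6$ with adjacencies matching $E(F_3)$; this is the definition of a blowup of $F_3$. For statement~(2), fix an $F_1$; the absence of $F_3$ forces the trace types on the $F_1$ to fall into the seven families $Q_1,\ldots,Q_5,R_2,R_3$ required by a band, and the graded condition on the pairs $\{Q_1,Q_2\}$, $\{Q_3,Q_4\}$, $\{R_2,R_3\}$ follows immediately from $C_4$-freeness applied to disjoint cliques.

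For statement~(3), start from an induced $C_6$. The $F_1$-freeness is extremely restrictive for how an external vertex can attach to such a hole; a short case analysis leaves only the patterns that assemble into a blowup of one of the Petersen-type graphs $H_1,\ldots,H_4$, with the specific $H_i$ determined by how many independent attachment ``spokes'' the $C_6$ carries. For statement~(4), start from an $F_2$. The absence of $C_6$ constrains the interactions of external vertices with the two cliques inside $F_2$, forcing the graph to match the $F_{k,\ell}$ template built from two matched clique/stable-set pairs $(A,U)$, $(B,W)$ joined by the bridge $\{x,y,z\}$; the degenerate configuration, where the matchings shrink away, collapses exactly to $H_5$.

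The main obstacle is statement~(5). Here no $F_2$ and no $C_6$ appear, but many $C_5$'s may be present simultaneously. Fixing a $C_5=v_1v_2v_3v_4v_5$, the trace analysis yields five ``corner'' cliques $Q_1,\ldots,Q_5$ (the clones of the $v_i$) and potentially two ``extra'' cliques $R_2,R_3$. The dichotomy between belt and boiler hinges on whether the attachment pattern is rotationally symmetric around the $C_5$. If it is, the seven-set belt partition falls out and one verifies the four defining properties (completeness, emptiness, graded pairs, and non-universality in $G[R_j]$). Otherwise, there is a distinguished corner clique playing the role of $Q$ whose common-neighborhood clique $A$ forces the five-part boiler decomposition $(Q,A,B,L,M)$, with the partition of $M$ into $M_1,\ldots,M_k$ (and the matching partition of $B$) arising from the connected components of $G[M]$ and their preferred neighbors in $B$. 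Verifying that $G[L]$ and $G[M]$ are $(P_4,2P_3)$-free follows from combining $(P_6,C_4,C_6)$-freeness with the position of $L$ and $M$ in the skeleton, and the no-clique-cutset hypothesis is repeatedly invoked to glue the pieces together and to rule out separating sets that would otherwise produce clique cutsets.
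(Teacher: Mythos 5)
Your plan follows the same route as the paper's proof: fix a seed copy of $F_3$, $F_1$, $C_6$, $F_2$ or $C_5$, classify all remaining vertices by their trace on the seed, and use $C_4$- and $P_6$-freeness together with the absence of clique cutsets and universal vertices to pin down the adjacencies between the resulting classes. The difficulty is that, as written, this is a roadmap rather than a proof. For a decomposition theorem of this kind the mathematical content \emph{is} the case analysis, and every decisive step in your text is asserted rather than carried out (``a short case analysis leaves only the patterns that assemble into a blowup of one of $H_1,\ldots,H_4$'', ``the trace types fall into the seven families required by a band''). Each of the five clauses requires several pages of verification in the paper, including checks that are not routine.

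Two concrete ingredients are missing and the argument cannot close without them. First, the trace analysis only controls vertices that have a neighbour on the seed; to account for the rest you need the Brandst\"adt--Ho\`ang domination theorem (every induced $C_5$ in a $(P_6,C_4)$-free graph with no clique cutset is dominating, and a non-dominating $C_6$ forces $G$ to be the join of a complete graph with a blowup of the Petersen graph). Your appeal to ``no clique cutset propagates adjacencies'' is pointing in the right direction but does not by itself dispose of the vertices with empty trace, nor of the non-dominating $C_6$ case in clause~(3). Second, in clause~(5) the belt/boiler dichotomy is not decided by ``rotational symmetry of the attachment pattern'': the paper chooses the $C_5$ so as to minimize $|T_1\cup\cdots\cup T_5|$ (this extremal choice is what forces $T_i$ to be complete to $T_{i-1}\cup T_{i+1}$ when the neighbouring $X$-sets vanish), and the split is governed by whether some vertex of $X_{34}$ has a neighbour in $W_1$ but none in $T_2$ (the set $X_{34}^N$), with the boiler's parts $Q,A,B,L,M$ built from that distinction and the $(P_4,2P_3)$-freeness of $G[L]$ and $G[M]$ extracted via a separate lemma. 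Neither the minimizing choice of the $C_5$ nor this splitting criterion appears in your sketch, and the vague criterion you propose would not by itself yield the seven-set or five-set partitions with their required properties.
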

\begin{proof}
The proof of each of these items is given below in
Theorems~\ref{lem:f3}, \ref{thm:f1nof3}, \ref{thm:c6nof1}, \ref{thm:f2}
and~\ref{thm:final} respectively.
\end{proof}

\noindent{\it Proof of Theorem~\ref{thm:struc0}, assuming
Theorem~\ref{thm:structure}.} \\
Let $G$ be any $(P_6, C_4)$-free graph.  If $G$ is chordal, then
either $G$ is a complete graph (so it has a universal vertex) or $G$
has a clique cutset.  Now suppose that $G$ is not chordal.  Then it
contains an induced cycle of length either $5$ or $6$.  So it
satisfies the hypothesis of one of the items of
Theorem~\ref{thm:structure} and consequently it satisfies the
conclusion of this item.  This established Theorem~\ref{thm:struc0}.
$\Box$

\section{Classes of square-free graphs}\label{sec:classes}

In this section, we study some classes of square-free graphs and prove
some useful lemmas and theorems that are needed for the later
sections.  We first note that any blowup of a $P_6$-free chordal graph is  $P_6$-free chordal.

\begin{lemma}\label{lem:simpath}
In a chordal graph $G$, every non-simplicial vertex lies on a chordless
path between two simplicial vertices.
\end{lemma}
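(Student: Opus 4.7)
The plan is to use the clique-tree representation of chordal graphs. Since $v$ is non-simplicial, $v$ lies in at least two maximal cliques of $G$; in any clique tree $T$ of $G$ the subtree $T_v$ consisting of maximal cliques containing $v$ therefore has at least two vertices and hence contains an edge $KK'$. Setting $S := K\cap K'$, the standard clique-tree theory for chordal graphs gives that $S$ is a minimal separator of $G$ and is a clique, and $v\in S$. The sets $K\setminus S$ and $K'\setminus S$ are non-empty (since the distinct maximal cliques $K,K'$ cannot be nested), and they lie in different connected components of $G\setminus S$; call these components $C_1$ and $C_2$.

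Next I would show that each $C_i$ contains a vertex $s_i$ that is simplicial in $G$. The subgraph $G[C_i\cup S]$ is chordal, and because $C_i$ is a component of $G\setminus S$, every vertex of $C_i$ has all of its $G$-neighbors inside $C_i\cup S$; so any simplicial vertex of $G[C_i\cup S]$ lying in $C_i$ is simplicial in $G$ as well. If $G[C_i\cup S]$ is itself a clique then every vertex of $C_i$ qualifies; otherwise $G[C_i\cup S]$ has two non-adjacent simplicial vertices, and at least one of them must lie outside the clique $S$, i.e., in $C_i$. Pick such a vertex $s_i\in C_i$ for each $i\in\{1,2\}$.

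To conclude, observe that $G[C_i\cup\{v\}]$ is a connected chordal subgraph of $G$ (connectedness follows because $v$ is adjacent to every vertex of the non-empty set $K_i\setminus S\subseteq C_i$, where $K_1:=K$ and $K_2:=K'$, by virtue of the clique $K_i$). Taking a shortest path $P_i$ from $s_i$ to $v$ inside $G[C_i\cup\{v\}]$ yields a chordless path (shortest paths in chordal graphs being induced), and $P_i$ remains chordless as an induced subgraph of $G$. The two paths $P_1$ and $P_2$ meet only at $v$ and have no cross edges, because $V(P_i)\setminus\{v\}\subseteq C_i$ and there are no edges of $G$ between the distinct components $C_1$ and $C_2$ of $G\setminus S$. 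Concatenating the reverse of $P_1$ with $P_2$ therefore produces a chordless path in $G$ from $s_1$ through $v$ to $s_2$, as required.

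The main obstacle is assembling and rigorously stating the standard clique-tree facts (that an edge of a clique tree of $G$ corresponds to a minimal separator of $G$, that such a separator is a clique, and that it partitions $V(G)\setminus S$ into components with no edges between them). After these are in hand, verifying that the simplicial vertex selected inside each $C_i$ is simplicial in all of $G$ and not merely in the local subgraph $G[C_i\cup S]$ is the single delicate step; once that is checked, the concatenation of the two side-paths at $v$ is essentially routine.
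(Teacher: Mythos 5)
Your proof is correct, but it takes a genuinely different route from the paper's. The paper argues by induction on $|V(G)|$: it picks a simplicial vertex $s$ avoiding a suitable neighbor of the non-simplicial vertex $x$, applies the induction hypothesis to $G\setminus s$ to obtain a chordless path through $x$ whose ends are simplicial in $G\setminus s$, and then repairs a bad endpoint by prepending $s$ (using that $s$, being simplicial, meets the path only in $\{p_0,p_1\}$). You instead work globally with a clique tree: the edge $KK'$ of the subtree of maximal cliques containing $v$ yields a clique separator $S\ni v$; you extract a simplicial vertex of $G$ from each of the two sides $C_1,C_2$ by applying Dirac's theorem to $G[C_i\cup S]$ (your key observation --- that a simplicial vertex of $G[C_i\cup S]$ lying in $C_i$ is simplicial in $G$ because vertices of $C_i$ have no neighbors outside $C_i\cup S$ --- is correct, as is the remark that two non-adjacent simplicial vertices cannot both lie in the clique $S$); and you glue two induced $s_i$--$v$ paths at $v$, with chordlessness across the join guaranteed because $C_1$ and $C_2$ are anticomplete in $G\setminus S$. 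All steps check out (shortest paths are induced in any graph, so the chordality of $G$ is not even needed there). The trade-off: the paper's argument needs only the existence of two non-adjacent simplicial vertices in a non-clique chordal graph and is self-contained, whereas yours imports the clique-tree/minimal-separator machinery; in exchange, your argument is non-recursive and makes the location of the two simplicial endpoints transparent, one on each side of a clique separator through $v$.
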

\begin{proof}
Let $x$ be a non-simplicial vertex in $G$, so it has two non-adjacent
neighbors $y,z$.  If both $y,z$ are simplicial, then $y$-$x$-$z$ is
the desired path.  Hence assume that $y$ is non-simplicial.  Since $G$
is not a clique, it has two simplicial vertices, so it has a
simplicial vertex $s$ different from $z$.  So $s\notin\{y,z\}$.  In
$G\setminus s$, the vertex $x$ is non-simplicial, so, by induction,
there is a chordless path $P=p_0$-$p_1$-$\cdots$-$p_k$ in $G\setminus
s$, with $k\ge 2$, such that $p_0$ and $p_k$ are simplicial in
$G\setminus s$ and $x=p_i$ for some $i\in\{1,\ldots,k-1\}$.  If $p_0$
and $p_k$ are simplicial in $G$, then $P$ is the desired path.  So
suppose that $p_0$ is not simplicial in $G$, so $sp_0\in E(G)$.  Since
$s$ is simplicial in $G$ we have $N_P(s)\subseteq\{p_0, p_1\}$.  Then we see that either $s$-$p_0$-$p_1$-$\cdots$-$p_k$ or
$s$-$p_1$-$\cdots$-$p_k$ is the desired path.
\end{proof}

\begin{lemma}\label{lem:chxa}
In a chordal graph $G$, let $X$ and $A$ be disjoint subsets of $V(G)$
such that $A$ is a clique and every simplicial vertex of $G[X]$ has a
neighbor in $A$.  Then every vertex in $X$ has a neighbor in $A$.
\end{lemma}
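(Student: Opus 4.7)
\medskip
\noindent\emph{Proof proposal.} The plan is to argue by contradiction: assume some vertex $x\in X$ has no neighbor in $A$, and produce a hole in $G$. Since $G[X]$ is chordal (as an induced subgraph of the chordal graph $G$) and every simplicial vertex of $G[X]$ has a neighbor in $A$, such an $x$ must be non-simplicial in $G[X]$. I would then apply Lemma~\ref{lem:simpath} inside $G[X]$ to obtain a chordless path $P=p_0$-$p_1$-$\cdots$-$p_k$ with $k\ge 2$ whose endpoints $p_0,p_k$ are simplicial in $G[X]$ (hence have neighbors in $A$ by hypothesis) and whose interior contains $x=p_i$ for some $1\le i\le k-1$. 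Since $V(P)\subseteq X$, any chord of $P$ in $G$ would also be a chord in $G[X]$, so $P$ remains chordless in $G$ as well.

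Next I would pick, on each side of $i$, the closest vertex on $P$ that attaches to $A$: let $j$ be the largest index $<i$ and $j'$ the smallest index $>i$ such that $p_j$ (resp.\ $p_{j'}$) has a neighbor in $A$; both exist by the previous step, and $j'-j\ge 2$. Choose $a\in A\cap N(p_j)$ and $a'\in A\cap N(p_{j'})$. By the choice of $j$ and $j'$, no $p_\ell$ with $j<\ell<j'$ has any neighbor in $A$. I would then close $P$ through $A$ to form a cycle $C$: if $a=a'$, take $C=a$-$p_j$-$\cdots$-$p_{j'}$-$a$, of length $j'-j+2\ge 4$; if $a\ne a'$, use $aa'\in E(G)$ (which holds because $A$ is a clique) and take $C=a$-$p_j$-$\cdots$-$p_{j'}$-$a'$-$a$, of length $j'-j+3\ge 5$. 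A direct check then shows $C$ has no chords in $G$: chords internal to $P$ do not exist because $P$ is chordless; an edge $ap_\ell$ or $a'p_\ell$ with $j<\ell<j'$ would contradict the choice of $j$ or $j'$; and a potential chord $ap_{j'}$ (resp.\ $a'p_j$) would immediately yield a shorter hole $a$-$p_j$-$\cdots$-$p_{j'}$-$a$ (resp.\ $a'$-$p_j$-$\cdots$-$p_{j'}$-$a'$) of length $\ge 4$, which is the case already handled. Either way we contradict the chordality of $G$.

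The main (mild) obstacle is to guarantee that $C$ is a hole in $G$, not merely in $G[X]$: the vertices of $A$ lie outside $X$, so a priori a single $A$-vertex could attach to several interior $p_\ell$ and triangulate the cycle. The choice of $j,j'$ as the \emph{closest} indices to $i$ on $P$ with an $A$-neighbor is precisely what prevents any $A$-vertex from short-circuiting the interior of $C$, while the cliqueness of $A$ is exactly what lets the construction close up cleanly in the $a\ne a'$ case.
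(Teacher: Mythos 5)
Your proof is correct and follows essentially the same route as the paper: apply Lemma~\ref{lem:simpath} to place the offending vertex on a chordless path between two simplicial vertices of $G[X]$, take the attachments to $A$ closest to it on either side, and close the path through the clique $A$ to produce a hole. The only (immaterial) difference is that you localize with respect to all of $A$ before choosing $a,a'$, whereas the paper first fixes $a,a'$ as neighbors of the path's endpoints and then localizes with respect to $\{a,a'\}$.
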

\begin{proof}
Consider any non-simplicial vertex $x$ of $G[X]$.  By
Lemma~\ref{lem:simpath} there is a chordless path
$P=p_0$-$p_1$-$\cdots$-$p_k$ in $G[X]$, with $k\ge 2$, such that $p_0$
and $p_k$ are simplicial in $G[X]$ and $x=p_i$ for some
$i\in\{1,\ldots,k-1\}$.  By the hypothesis $p_0$ has neighbor $a\in A$
and $p_k$ has a neighbor $a'$ in $A$.  Suppose that $x$ has no
neighbor in $\{a,a'\}$.  Let $h$ be the largest integer in
$\{0,\ldots,i-1\}$ such that $p_h$ has a neighbor in $\{a,a'\}$, and
let $g$ be the smallest integer in $\{i+1,\ldots,k\}$ such that $p_g$
has a neighbor in $\{a,a'\}$.  Then $\{p_h, p_{h+1}, \ldots, p_g, a,
a'\}$ contains a hole, a contradiction.  So $x$ has a neighbor in $A$.
\end{proof}

%

\begin{lemma}\label{lem:ab}
In a $C_4$-free graph $G$, let $A,B$ be two disjoint cliques.  Then:
\begin{itemize}\setlength\itemsep{0pt}
\item
There is a labeling $a_1,\ldots,a_{|A|}$ of the vertices of $A$ such
that $N_B(a_1)\supseteq N_B(a_2)\supseteq\cdots\supseteq
N_B(a_{|A|})$.  Similarly, there is a labeling $b_1,\ldots,b_{|B|}$ of
the vertices of $B$ such that $N_A(b_1)\supseteq
N_A(b_2)\supseteq\cdots\supseteq N_A(b_{|B|})$.
\item
If every vertex in $A$ has a neighbor in $B$, then some vertex in $B$
is complete to $A$.
\item
If every vertex in $A$ has a non-neighbor in $B$, then some vertex in
$B$ is anticomplete to $A$.
\item
If $[A,B]$ is not complete, there are indices $i\le |A|$ and $j\le
|B|$ such $a_i b_j\notin E(G)$, and $a_ib_h\in E(G)$ for all $h<j$,
and $a_gb_j\in E(G)$ for all $g<i$.  Moreover, every maximal clique of
$G$ contains one of $a_i,b_j$.
\end{itemize}
\end{lemma}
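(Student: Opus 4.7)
The whole lemma pivots on a single $C_4$-free observation: for the disjoint cliques $A$ and $B$, any two vertices $a,a'\in A$ have $N_B$-sets that are comparable by inclusion. Indeed, were $b\in N_B(a)\setminus N_B(a')$ and $b'\in N_B(a')\setminus N_B(a)$, then $aa'$ and $bb'$ are edges (since $A$ and $B$ are cliques), together with $ab$ and $a'b'$, while $ab'$ and $a'b$ are non-edges; so $\{a,a',b,b'\}$ would induce a $C_4$. Hence $\{N_B(a):a\in A\}$ is linearly ordered by inclusion, and sorting $A$ in decreasing order of $N_B$ yields the labeling of item~1; the dual labeling of $B$ is symmetric.

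Items 2 and 3 are immediate from these chains. For item~2, the top vertex $b_1$ satisfies $N_A(b_1)\supseteq N_A(b_k)$ for every $k$, so any vertex of $A$ with \emph{some} neighbor in $B$ is already a neighbor of $b_1$, and $b_1$ is complete to $A$. For item~3, the bottom vertex $b_{|B|}$ has $N_A(b_{|B|})\subseteq N_A(b_k)$ for every $k$, so any neighbor of $b_{|B|}$ in $A$ would be complete to $B$, contradicting the hypothesis; thus $b_{|B|}$ is anticomplete to $A$.

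For item~4, assume $[A,B]$ is not complete and let $i$ be minimal with $a_i$ having a non-neighbor in $B$, then let $j$ be minimal with $a_ib_j\notin E(G)$. The two bulleted conditions ($a_ib_h\in E(G)$ for $h<j$, $a_gb_j\in E(G)$ for $g<i$) follow directly from these minimalities, and combining them with the chains from item~1 gives the sharper equalities $N_B(a_i)=\{b_1,\dots,b_{j-1}\}$ and $N_A(b_j)=\{a_1,\dots,a_{i-1}\}$, which feed into the last step.

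The only part that is more than bookkeeping is the \emph{moreover} clause, which I read as concerning maximal cliques of $G[A\cup B]$. Let $K$ be such a clique and suppose for contradiction that $a_i,b_j\notin K$. Since $a_i$ is adjacent to every vertex of $A\setminus\{a_i\}$, maximality of $K$ forces some $b_h\in K\cap B$ with $a_ib_h\notin E(G)$, whence $h\ge j$. Walking along the chain on $B$ gives $K\cap A\subseteq N_A(b_h)\subseteq N_A(b_j)=\{a_1,\dots,a_{i-1}\}$, so $b_j$ is complete to $K\cap A$; and $b_j$ is complete to $K\cap B$ because $B$ is a clique. Thus $K\cup\{b_j\}$ is a clique, contradicting maximality of $K$. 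The only real obstacle is keeping the two nested chains straight during this final swap.
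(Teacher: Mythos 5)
Your proof is correct and follows essentially the same route as the paper: the $C_4$-free comparability of the sets $N_B(a)$ gives the two nested chains, items 2--4 are read off from them, and the \emph{moreover} clause is the same clique-extension argument (you add $b_j$ to a putative counterexample $K$, the paper symmetrically adds $a_i$; your choice of $i$ first rather than $j$ first is an immaterial variation). Your reading of the last clause as concerning maximal cliques of $G[A\cup B]$ is also the intended one, consistent with how the lemma is applied later in the paper.
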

\begin{proof}
Consider any two vertices $a,a'\in A$.  If there are vertices $b\in
N_B(a)\setminus N_B(a')$ and $b'\in N_B(a')\setminus N_B(a)$, then
$\{a,a',b,b'\}$ induces a $C_4$.  Hence we have either
$N_B(a)\subseteq N_B(a')$ or $N_B(a')\subseteq N_B(a)$.  This
inclusion relation for all $a,a'$ implies the existence of a total
ordering on $A$, which corresponds to a labeling as desired, and the
same holds for $B$.  This proves the first item of the lemma.  The
second and third item are immediate consequences of the first.

Now suppose that $A$ is not complete to $B$.  Consider any vertex
$a_{i'}\in A$ that has a non-neighbor in $B$, and let $j$ be the
smallest index such that $a_{i'}b_j\notin E(G)$.  Let $i$ be the
smallest index such that $a_ib_j\notin E(G)$.  So $i\le i'$.  We have
$a_gb_j\in E(G)$ for all $g<i$ by the choice of $i$.  We also have
$a_ib_h\in E(G)$ for all $h<j$, for otherwise, since $i\le i'$ we also
have $a_{i'}b_h\notin E(G)$, contradicting the definition of $j$.
This proves the first part of the fourth item.

Finally, consider any maximal clique $K$ of $G$.  Let $g$ be the
largest index such that $a_g\in K$ and let $h$ be the largest index
such that $b_h\in K$.  By the properties of the labelings and the
maximality of $K$ we have $K=\{a_1, \ldots, a_g\}\cup \{b_1, \ldots,
b_h\}$.  If both $g<i$ and $h<j$, then the properties of $a_i,b_j$
imply that $K\cup\{a_i\}$ (and also $K\cup\{b_j\}$) is a clique of
$G$, contradicting the maximality of $K$.  Hence we have either $g\ge
i$ or $h\ge j$, and so $K$ contains one of $a_i, b_j$.
\end{proof}

\begin{lemma}\label{lem:chxab}
In a $(P_6,C_4)$-free graph $G$, let $X$, $Y$ and $\{c\}$ be disjoint
subsets of $V(G)$ such that:
\begin{itemize}
\item
$Y$ is a clique, and every vertex in $X$ has a neighbor in $Y$,
\item
$c$ is complete to $X$ and anticomplete to $Y$;
\item
Either $G[X]$ is not connected, or there are vertices $c',c''\in
V(G)\setminus (X\cup Y)$ such that $c'$ is complete to $Y$ and
anticomplete to $X$, and $c''$ is anticomplete to $X\cup Y$, and
$c'c''\in E(G)$.
\end{itemize}
Then $G[X]$ is $(P_4, 2P_3)$-free.
\end{lemma}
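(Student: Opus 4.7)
The plan is to suppose toward contradiction that $G[X]$ contains an induced $P_4$ or an induced $2P_3$, and in each configuration exhibit either an induced $C_4$ or an induced $P_6$, contradicting the assumption that $G$ is $(P_6,C_4)$-free. The preliminary observation that drives everything is: for every $y\in Y$, $N_X(y)$ is a clique in $G$; indeed, any two non-adjacent vertices $x,x'\in N_X(y)$ together with $y$ and $c$ would induce a $C_4$ on $\{c,x,y,x'\}$.

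For the $P_4$ case, take an induced $P_4=a_1a_2a_3a_4$ in $G[X]$ and pick $y_1\in N_Y(a_1)$, $y_4\in N_Y(a_4)$. The clique property of $N_X(y_1)$ and $N_X(y_4)$, combined with the non-edges of the $P_4$, forces $y_1\ne y_4$, $y_1y_4\in E(G)$ (since $Y$ is a clique), and that $y_1$ (resp.~$y_4$) is adjacent within $\{a_1,\ldots,a_4\}$ only to vertices of $\{a_1,a_2\}$ (resp.~$\{a_3,a_4\}$). I then case-split on whether $y_1a_2$ and $y_4a_3$ are edges. If both are, $\{y_1,a_2,a_3,y_4\}$ is an induced $C_4$. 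If exactly one holds, say $y_1a_2\in E(G)$ and $y_4a_3\notin E(G)$, I introduce $y_3\in N_Y(a_3)$ (forced by the clique property to differ from $y_1$ and $y_4$) and run a short sub-analysis on the adjacencies of $y_3$ to $a_2$ and $a_4$; in each branch either the clique property fails or one of $\{y_1,a_2,a_3,y_3\}$, $\{y_3,a_3,a_4,y_4\}$ is an induced $C_4$. The delicate sub-case is when neither $y_1a_2$ nor $y_4a_3$ is an edge: then $\{y_1,a_1,a_2,a_3,a_4,y_4\}$ induces a $C_6$, which is not itself forbidden, and this is where the third clause of the hypothesis enters. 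If $G[X]$ is disconnected, let $z$ be any vertex in a component not meeting the $P_4$ and $y_z\in N_Y(z)$; the clique property applied to $y_1$ and $y_z$ makes $z$-$y_z$-$y_1$-$a_1$-$a_2$-$a_3$ an induced $P_6$. Otherwise $c',c''$ are available, and $c''$-$c'$-$y_1$-$a_1$-$a_2$-$a_3$ is an induced $P_6$.

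For the $2P_3$ case, take an induced $2P_3$ given by $\{b_1b_2b_3\}\cup\{d_1d_2d_3\}$ in $G[X]$ and pick $y_1\in N_Y(b_1)$, $y_3\in N_Y(b_3)$. The clique property forces $y_1,y_3$ to be distinct, adjacent in $Y$, non-adjacent to $b_3,b_1$ respectively, and entirely non-adjacent to the $d_i$. Case-splitting on whether $y_1b_2$ and $y_3b_2$ are edges: if exactly one is, $\{y_1,y_3,b_2,b_3\}$ (or its symmetric version) is an induced $C_4$; if neither is, then for any $y_d\in N_Y(d_2)$, which is automatically distinct from $y_1$ and non-adjacent to every $b_i$ by the clique property, $d_2$-$y_d$-$y_1$-$b_1$-$b_2$-$b_3$ is an induced $P_6$; if both are, the clique property forces $|N_X(y_d)\cap\{d_1,d_3\}|\le 1$ (else $d_1d_3\in E(G)$), so after relabelling we may assume $d_1y_d\notin E(G)$, and $d_1$-$d_2$-$y_d$-$y_1$-$b_2$-$b_3$ is an induced $P_6$. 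The third clause of the hypothesis is not explicitly invoked here because the second $P_3$ already supplies the ``external'' vertex $d_2$.

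The only real obstacle is the $P_4$ sub-case in which $y_1$ and $y_4$ are adjacent only to the endpoints of the $P_4$: the resulting induced $C_6$ is consistent with $(P_6,C_4)$-freeness, so one genuinely needs an extra vertex to promote it into an induced $P_6$, and the third clause of the hypothesis is calibrated to supply exactly such a vertex (a disconnected component of $G[X]$, or the pair $c',c''$). Every other verification --- in particular confirming that the exhibited six-vertex sets are genuinely induced $P_6$'s --- is mechanical once the clique property of $N_X(y)$ is in hand.
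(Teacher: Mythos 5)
Your proof is correct and follows essentially the same strategy as the paper's: your ``clique property'' of $N_X(y)$ for $y\in Y$ (forced by $c$) is exactly the paper's use of $c$ to delete the edges from a $Y$-neighbour of $p_1$ to $p_3,p_4$, and the third hypothesis clause is invoked in the same way, to convert the one configuration that only yields an induced $C_6$ into an induced $P_6$. Your case analysis is somewhat longer than the paper's (which builds a $P_6$ directly from a single $Y$-neighbour of an endpoint in the $P_4$ case, and routes its $2P_3$-case $P_6$ through $Y$-neighbours of the two outer endpoints), but every branch checks out.
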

\begin{proof}
First suppose that there is a $P_4$ $p_1$-$p_2$-$p_3$-$p_4$ in $G[X]$.
By the hypothesis $p_1$ has a neighbor $a\in Y$.  Then $ap_3\notin
E(G)$, for otherwise $\{p_1,a,p_3,c\}$ induces a $C_4$; and similarly
$ap_4\notin E(G)$.  If $G[X]$ is connected, then either
$p_3$-$p_2$-$p_1$-$a$-$c'$-$c''$ or $p_4$-$p_3$-$p_2$-$a$-$c'$-$c''$
is a $P_6$.  Now suppose that $G[X]$ is not connected.  So $X$
contains a vertex $p$ that is anticomplete to $\{p_1, p_2, p_3,
p_4\}$.  By the hypothesis $p$ has a neighbor $a'\in Y$.  As above we
have $ap\notin E(G)$ and $a'p_i\notin E(G)$ for all
$i\in\{1,\ldots,4\}$ for otherwise there is a $C_4$.  But then either
$p$-$a'$-$a$-$p_1$-$p_2$-$p_3$ or $p$-$a'$-$a$-$p_2$-$p_3$-$p_4$ is a
$P_6$.

Now suppose that there is a $2P_3$ in $G[X]$, with vertices $p_1,
\ldots, p_6$ and edges $p_1p_2, p_2p_3$, $p_4p_5, p_5p_6$.  We know
that $p_1$ has a neighbor $a\in Y$, and as above we have $ap_i\notin
E(G)$ for each $i\in\{3,4,5,6\}$, for otherwise there is a $C_4$.
Likewise, $p_6$ has a neighbor $a'\in Y$, and $a'p_j\notin E(G)$ for
each $j\in\{1,2,3,4\}$.  Then $p_{h+1}$-$p_h$-$a$-$a'$-$p_g$-$p_{g-1}$
is an induced $P_6$ for some $h\in\{1,2\}$ and $g\in\{5,6\}$.
\end{proof}


\paragraph{$(P_4,C_4)$-free graphs}

We want to understand the structure of $(P_4,C_4, 2P_3)$-free graphs
as they play a major role in the structure of belts and boilers.
Recall that $(P_4,C_4)$-free graphs were studied by Golumbic
\cite{Gol78}, who called them \emph{trivially perfect} graphs.
Clearly any such graph is chordal.  It was proved in \cite{Gol78} that
every connected $(P_4,C_4$)-free graph has a universal vertex.  It
follows that trivially perfect graphs are exactly the class $\cal T$
of graphs that can be built recursively as follows, starting from
complete graphs: \\
-- The disjoint union of any number of trivially perfect graphs is
trivially perfect; \\
-- If $G$ is any trivially perfect graph, then the graph obtained from
$G$ by adding a universal vertex is trivially perfect.

As a consequence, any connected member $G$ of $\cal T$ can be
represented by a rooted  directed  tree $T(G)$   defined as follows.  If
$G$ is a clique, let $T(G)$ have one node, which is the set $V(G)$.
If $G$ is not a clique, then by Golumbic's result the set $U(G)$ of
universal vertices of $G$ is not empty, and $G\setminus U(G)$ has a
number $k\ge 2$ of components $G_1, \ldots, G_k$.  Let then $T(G)$ be
the tree whose root is $U(G)$ and the children (out-neighbors) of
$U(G)$ are the roots of $T(G_1),\ldots,T(G_k)$.

The following properties of $T(G)$ appear immediately.  Every node of
$T(G)$ is a non-empty clique of $G$, and every vertex $v$ of $G$ is in
exactly one such clique, which we call $A_v$; moreover, $A_v$ is a
homogeneous set (all member of $A_v$ are pairwise clones).  For every
vertex $v$ of $G$, the closed neighborhood of $v$ consists of $A_v$
and all the vertices in the cliques that are descendants and ancestors of $A_v$ in
$T(G)$.  Every maximal clique of $G$ is the union of the nodes of a
directed path in $T(G)$.  All vertices in any leaf of $T(G)$ are
simplicial vertices of $G$, and every simplicial vertex of $G$ is in
some leaf of $T(G)$.

We say that a member $G$ of $\cal T$ is \emph{basic} if every node of
$T(G)$ is a clique of size~$1$.  (We can view $T(G)$ as a directed
tree, where every edge is directed away from the root; and then $G$ is
the underlying undirected graph of the transitive closure of $T(G)$.).
It follows that every member of $\cal T$ is a blowup of a basic member
of $\cal T$.  In a basic member $G$ of $\cal T$, two vertices are
adjacent if and only if one of them is an ancestor of the other in
$T(G)$, and every clique of $G$ consists of the set of vertices of any
directed path in $T(G)$.

\medskip

A \emph{dart} is the graph   with
vertex-set $\{a,b,c,d,e\}$ and edge-set $\{ab,bc,cd,da,$ $ac,ce\}$.   Let $K_{1,3}^+$
be the tree obtained from $K_{1,3}$ by subdividing one edge.   Next we give the following useful lemma.
\begin{lemma}\label{lem:3simp}
Let $G$ be a $(P_4,C_4)$-free graph. \\
(a) If $G$ does not have three pairwise non-adjacent simplicial vertices,
then $G$ is a blowup of $P_3$.  \\
(b) If $G$ does not have four pairwise non-adjacent simplicial
vertices, then $G$ is a blowup of a dart.
\end{lemma}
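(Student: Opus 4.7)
The plan is to use the rooted tree $T(G)$ associated to trivially perfect graphs described in the preceding paragraphs. The first key observation is that, for connected $G$, the simplicial vertices correspond exactly to the vertices lying in the leaves of $T(G)$: if $A_v$ is a leaf, then the closed neighborhood of any $v\in A_v$ is $A_v$ together with its ancestor cliques, which is a clique of $G$; while if $A_v$ is internal, then $A_v$ has at least two children whose vertex sets are mutually anticomplete, and no vertex of $A_v$ is simplicial. Furthermore, two simplicial vertices are adjacent iff they are clones iff they lie in the same leaf, so the number of pairwise non-adjacent equivalence classes of simplicial vertices in a connected $G$ equals the number of leaves of $T(G)$, and for disconnected $G$ it is the sum over the components.

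For part (a), the hypothesis forces the total number of leaves across components to be at most two. If $G$ has two components, then each must be a single-node tree, hence a clique, and $G$ is a blowup of $P_3$ with middle clique empty. If $G$ is connected with one leaf, then $T(G)$ is a single node and $G$ is a clique, a trivial blowup of $P_3$. If $T(G)$ has exactly two leaves, the root $U(G)$ has (at least, hence exactly) two children, and each child subtree has a single leaf; since any node with children must have at least two children, each child subtree is itself a single node, giving clique components $C_1,C_2$ of $G\setminus U(G)$. Then $G$ is the blowup of $P_3$ with $Q_1:=C_1$, $Q_2:=U(G)$, $Q_3:=C_2$.

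Part (b) proceeds analogously with three leaves. The main connected case has $T(G)$ with exactly three leaves, which splits into two subcases by the out-degree of the root. If the root has three leaf children, then $G=U(G)\cup C_1\cup C_2\cup C_3$ with $U(G)$ complete to the $C_i$ and the $C_i$ pairwise anticomplete, and the dart-blowup assignment $Q_c:=U(G)$, $Q_b:=C_1$, $Q_d:=C_2$, $Q_e:=C_3$, $Q_a:=\emptyset$ works since $\{b,d,e\}$ is a stable set of common neighbors of $c$ in the dart. Otherwise the root has two children, one a leaf giving a clique $C_0$ and the other internal with two leaf children; the internal child contributes a set $U_2$ of universal vertices of its component together with two further clique components $C_1,C_2$. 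Setting $Q_c:=U(G)$, $Q_a:=U_2$, $Q_b:=C_1$, $Q_d:=C_2$, $Q_e:=C_0$ realizes $G$ as a blowup of the dart, since $a$ is adjacent to $b,c,d$ and not to $e$, matching the adjacencies of $U_2$. The remaining disconnected subcases (three disjoint cliques; a clique disjoint from a blowup of $P_3$; two disjoint cliques) embed into the dart by placing the cliques at suitable subsets of $\{a,b,d,e\}$ with the remaining $Q_x=\emptyset$. The main obstacle is simply the enumeration of tree shapes with few leaves together with producing an explicit dart embedding in each case; the adjacency verifications all reduce to routine checks against the dart's edge set $\{ab,bc,cd,da,ac,ce\}$.
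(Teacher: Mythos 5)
Your proposal is correct and follows essentially the same route as the paper: identify simplicial vertices with the leaves of the representative tree $T(H)$ of each component, note that every internal node has at least two children so the hypothesis bounds the tree shapes to $K_1$, $P_3$, $K_{1,3}$ or $K_{1,3}^+$, and read off the blowup structure. Your write-up merely makes explicit the dart embeddings that the paper leaves as "follows directly from the preceding arguments."
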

\begin{proof}
The hypothesis of (a) or (b) means that, if $H$ is a connected component of $G$,  then $T(H)$ is a tree with at most
three leaves.  Since each internal vertex of $T(H)$ has at least two leaves, $T(H)$ is either $K_1$, $K_2$, $P_3$ (rooted at its
vertex of degree~$2$), $K_{1,3}$ (rooted at its vertex of degree~$3$),
or $K_{1,3}^+$ (rooted at its vertex of degree~$2$).  Then the
conclusion follows directly from our assumption on $G$ and the preceding arguments.
\end{proof}

\paragraph{$(P_4,C_4, 2P_3)$-free graphs}

Let $\cal C$ be the class of $(P_4,C_4,2P_3)$-free graphs.  So $\cal
C\subset \cal T$.  If $G$ is any member of $\cal C$, and $G$ is
connected and not a clique, then since $G$ is $2P_3$-free all
components of $G\setminus U(G)$, except possibly one, are cliques.  So
all children of $U(G)$ in $T(G)$, except possibly one, are leaves.
Applying this argument recursively we see that the tree $T(G)$
consists of a rooted directed path plus a positive number of leaves
adjacent to every node of this path, with at least two leaves adjacent
to the last node of this path.  We call such a tree a \emph{bamboo}.
By the same argument as above, every member of $\cal C$ is a blowup of
a basic member of~$\cal C$.

\paragraph{$\cal C$-pairs}

A graph $G$ is a \emph{$\cal C$-pair} if $G$ is $P_6$-free, chordal,
and $V(G)$ can be partitioned into two sets $X$ and $A$ such that $A$
is a clique, $G[X]\in{\cal C}$, every vertex in $X$ has a neighbor in
$A$, and any two non-adjacent vertices in $X$ have no common neighbor
in $A$.  Depending on the context we may also write that $(X,A)$ is a
$\cal C$-pair.

We say that $G$ is a \emph{basic $\cal C$-pair} if the subgraph $G[X]$
is a basic member of $\cal C$, with vertices $x_1, \ldots, x_k$ for some
integer $k$, and a clique $A=\{a_0, a_1, \ldots, a_k\}$; and for each
$i\in\{1,\ldots,k\}$, if $x_i$ is simplicial in $G[X]$ then
$N_{A}(x_i)=\{a_i\}$, else $N_A(x_i)$ consists of $\{a_i\}$ plus the
union of $N_A(y)$ over all descendants $y$ of $x_i$ in $T(G[X])$.

Before describing how all $\cal C$-pairs can be obtained from basic
$\cal C$-pairs we need to introduce another definition.  Let $H$ be
any graph and $M$ be a matching in $H$.  An \emph{augmentation} of $H$
along $M$ is any graph $G$ whose vertex-set can be partitioned into
$|V(H)|$ cliques $Q_v$, $v\in V(H)$, such that $[Q_u,Q_v]$ is complete
if $uv\in E(H)\setminus M$, and $[Q_u,Q_v] =\es$ if $uv\notin E(H)$,
and $\{Q_u,Q_v\}$ is a graded pair if $uv\in M$.  (See \cite{MReed}
for a similar definition.)

In a basic $\cal C$-pair $G$, with the same notation as above, we say
that a matching $M$ is \emph{acceptable} if there is a clique
$\{x_{i_1}, \ldots, x_{i_h}\}$ in $G[X]$ such that
$M=\{x_{i_1}a_{i_1}, \ldots, x_{i_h}a_{i_h}\}$.

\begin{figure}[h]
\centering
 \includegraphics{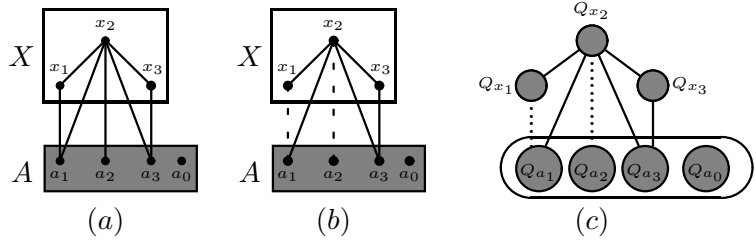}
\caption{Schematic representations of: (a) a basic $\cal C$-pair, (b) an acceptable matching in (a), and (c) an augmentation of the graph in (a) along an acceptable matching in (b).
In (a) and (b), the vertices in a shaded box represents a clique. In (b),  the dashed lines  represent the matching edges.  In (c), the circles represent a collection of sets into which the vertex set of the graph is partitioned, each shaded circle represents a clique, and the  circles inside the oval form a clique,  a solid line between two circles indicates
that the two sets are complete to each other,   the dotted line between two circles means that the respective pair of sets is graded,  and the absence of a line between two circles indicates that the
two sets are anticomplete to each other.}\label{fig:bcpair}
\end{figure}


\begin{theorem}\label{thm:cpair}
A graph is a $\cal C$-pair then it is an augmentation of a
basic $\cal C$-pair along an acceptable matching.
\end{theorem}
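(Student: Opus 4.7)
My plan is, starting from an arbitrary $\cal C$-pair $G$ with bipartition $V(G)=X\cup A$, to build a basic $\cal C$-pair $G_0$ and an acceptable matching $M$ in $G_0$ such that $G$ is the augmentation of $G_0$ along $M$. First I would use the preceding structural description of $\cal C$ to write $G[X]$ as a blowup of a (unique) basic member $H_0\in{\cal C}$ with bamboo tree $T=T(H_0)$ and blowup decomposition $X=\bigsqcup_{v\in V(H_0)}Q_v$; these will become the cliques $Q_{x_v}$ of the augmentation. Since no two non-adjacent vertices of $X$ share a neighbor in $A$, for each $a\in A$ the set $N_X(a)$ is a clique of $G[X]$, so $N_X(a)\subseteq\bigcup_{v\in V_a}Q_v$ for some chain $V_a\subseteq V(H_0)$ in $T$; when $V_a\ne\emptyset$, let $\beta(a)$ denote its bottom (most descendant) element.

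The heart of the argument is the claim that $V_a$ equals the ancestor chain of $\beta(a)$ in $T$, which I would prove by contradiction. Assume $u$ is a proper ancestor of $v:=\beta(a)$ with $u\notin V_a$. By the bamboo structure, $u$ has a leaf child $v'$ distinct from the child of $u$ on the path to $v$, so $vv'\notin E(H_0)$. Applying Lemma~\ref{lem:ab} to the cliques $A$ and $Q_u\cup Q_{v'}$ (a clique since $uv'\in E(H_0)$), and using the $\cal C$-pair hypothesis that each of $Q_u,Q_{v'}$ has some $A$-neighbor, I would deduce the existence of $b\in A$ with $V_b\supseteq\{u,v'\}$. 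Because $\{u,v',v\}$ is not a chain in $T$, we get $v\notin V_b$; then for any $x\in N(a)\cap Q_v$ and $y\in N(b)\cap Q_u$ the set $\{a,x,y,b\}$ induces a $C_4$ (edges $ax,xy,yb,ba$ and non-edges $ay,xb$), contradicting chordality.

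Granted the ancestor-chain claim, I would partition $A=A_{v_0}\sqcup\bigsqcup_{v\in V(H_0)}A_v$ with $A_v=\beta^{-1}(v)$ and $A_{v_0}=\{a:V_a=\emptyset\}$; these cliques will be the $Q_{a_j}$'s of the augmentation. The same sibling-leaf trick, reused via an analogous $C_4$, shows that every $a\in A_v$ is complete to $Q_u$ for each proper ancestor $u$ of $v$, establishing all the required non-matching complete adjacencies. By Lemma~\ref{lem:ab} each pair $\{A_v,Q_v\}$ is graded; declare $x_va_v\in M$ exactly when this pair is not complete.

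It remains to verify that $M$ is acceptable, i.e.\ that $S:=\{v:x_va_v\in M\}$ is a chain in $T$. If two elements $v_1,v_2\in S$ were incomparable in $T$, I would pick $c_i\in A_{v_i}$ together with $z_i\in N(c_i)\cap Q_{v_i}$ and $y_i\in Q_{v_i}\setminus N(c_i)$ (these exist by noncompleteness, noting that $|Q_{v_i}|=1$ automatically forces completeness and so excludes such $v_i$ from $S$), and check that $y_1\text{-}z_1\text{-}c_1\text{-}c_2\text{-}z_2\text{-}y_2$ is an induced $P_6$, contradicting $P_6$-freeness. I expect this $P_6$ verification to be the main obstacle: the nine non-consecutive pairs along the path must all be non-edges, which follows from the ancestor-chain structure of $V_{c_1},V_{c_2}$ (implying $v_1\notin V_{c_2}$ and $v_2\notin V_{c_1}$) combined with the tree-incomparability of $v_1,v_2$ (forcing $Q_{v_1}$ and $Q_{v_2}$ to be mutually anticomplete).
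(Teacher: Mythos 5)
Your proposal is correct and follows essentially the same route as the paper: the ancestor-chain claim is the paper's claim that every node is complete to the $A$-neighborhoods of its descendants (the paper derives a hole of length $4$ or $5$ directly from a sibling node, while you route through Lemma~\ref{lem:ab} to get a common neighbor $b$ and then a $C_4$), and your final $P_6$ $y_1$-$z_1$-$c_1$-$c_2$-$z_2$-$y_2$ is exactly the paper's argument that the non-homogeneous nodes form a clique. The only nitpick is that the sibling child $v'$ of $u$ need not be a leaf (e.g.\ when $v$ is the unique leaf child of a non-terminal path node), but any child of $u$ off the $u$--$v$ path works and your argument never uses leafness, so this is cosmetic.
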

\begin{proof}
Let $G$ be any $\cal C$-pair,
with the same notation as above.  Since $G[X]$ is $(P_4, C_4,
2P_3)$-free it admits a representative tree $T(G[X])$ which is a
bamboo.  We claim that:
\begin{equation}\label{cp1}
\longbox{If $Y,Z$ are two nodes of $T(G[X])$ such that $Z$ is a
descendant of $Y$, then $Y$ is complete to $N_{A}(Z)$.}
\end{equation}
Proof: Consider any $y\in Y$ and $a\in N_{A}(Z)$; so there is a vertex
$z\in Z$ with $za\in E(G)$.  Since $Y$ is not a leaf of $T(G[X])$,
there is a child $Z'$ of $Y$ in $T(G[X])$ such that $Z'$ is not on the
directed path from $Z$ to $Y$, and so $Z$ and $Z'$ are not adjacent
(they are anticomplete to each other).  Pick any $z'\in Z'$.  Then
$yz, yz'\in E(G)$ and $zz'\notin E(G)$.  We know that $z'$ has a
neighbor $a'\in A$.  We have $az', a'z\notin E(G)$ by the definition
of a $\cal C$-pair ($z$ and $z'$ have no common neighbor in $A$).
Then $ya, ya'\in E(G)$, for otherwise $G[y,z,z',a,a']$ contains an
induced hole of length $4$ or $5$, contradicting the fact that $G$ is
chordal.  So (\ref{cp1}) holds.

\smallskip

Let $X_1, \ldots, X_k$ be the nodes of $T(G[X])$.  For each
$i\in\{1,\ldots,k\}$, let $U_i$ be the union of $N_{A}(Z)$ over all
descendants $Z$ of $X_i$ in $T(G[X])$, and let $A_i = N_{A}(X_i)
\setminus U_i$.  Let $A_0=A\setminus (A_1\cup\cdots\cup A_k)$ (so
$[X,A_0]=\emptyset$).

Let $X_{i_1}, \ldots, X_{i_h}$ be the nodes of $T(G[X])$ that are not
homogeneous in $G$ (if any).  Note that for each $i\in\{i_1, \ldots,
i_h\}$ the pair $\{X_i, A_i\}$ is graded since $G$ is $C_4$-free. 
     We
claim that:
\begin{equation}\label{cp2}
\mbox{$X_{i_1}\cup\cdots\cup X_{i_h}$ is a clique.}
\end{equation}
Proof: Suppose, on the contrary, and up to symmetry, that $[X_{i_1},
X_{i_2}]$ is not complete, and so $[X_{i_1}, X_{i_2}]=\emptyset$.  For
each $t\in\{1,2\}$, since $X_{i_t}$ is not homogeneous in $G$, there
are vertices $y_t,z_t\in X_{i_t}$ and a vertex $a_t\in A$ that is
adjacent to $y_t$ and not to $z_t$.  Since non-adjacent vertices in
$X$ have no common neighbor in $A$, we have $a_1\neq a_2$ and
$a_1y_2,a_1z_2, a_2y_1,a_2z_1\notin E(G)$.  Then
$z_1$-$y_1$-$a_1$-$a_2$-$y_2$-$z_2$ is a $P_6$.  So (\ref{cp2}) holds.

\smallskip

Let $H$ be the basic member of $\cal C$ of which $G[X]$ is a blowup.
Let $H$ have vertices $x_1, \ldots, x_k$, where $x_i$ corresponds to the
node $X_i$ of $T(G[X])$ for all~$i$.  Let $G_0$ be the graph obtained
from $H$ by adding a set $A=\{a_0, a_1, \ldots, a_k\}$, disjoint from
$V(H)$, and edges so that $A$ is a clique in $G_0$ and, for all
$i\in\{1,\ldots,k\}$ and $j\in\{0,1,\ldots,k\}$, vertices $x_i$ and $a_j$
are adjacent in $G_0$ if and only if $[X_i, A_j]\neq\emptyset$ in $G$.
By this construction and by (\ref{cp1}) $G_0$ is a basic $\cal
C$-pair.  In $G_0$ let $M=\{x_{i_1}a_{i_1}, \ldots, x_{i_h}a_{i_h}\}$.
It follows from (\ref{cp2}) that $M$ is an acceptable matching of
$G_0$ and from all the points above that $G$ is an augmentation of
$G_0$ along $M$.
\end{proof}

\section{Structure of ($P_6$, $C_4$)-free graphs}

In this section, we give the proof of Theorem~\ref{thm:structure}.  We
say that a subgraph $H$ of $G$ is \emph{dominating} if every vertex in
$V(G)\setminus V(H)$ is a adjacent to a vertex in $H$.  We will use
the following theorem of Brandst\"adt and Ho\`ang \cite{BH}.
\begin{theorem}[\cite{BH}]\label{thm:BH}
Let $G$ be a ($P_6, C_4$)-free graph that has no clique cutset.  Then
the following statements hold.  \\
(i) Every induced $C_5$ is dominating.  \\
(ii) If $G$ contains an induced $C_6$ which is not dominating, then
$G$ is the join of a complete graph and a blowup of the Petersen
graph.  \hfill{$\Box$}
\end{theorem}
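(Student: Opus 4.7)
For part~(i), my plan is a contradiction argument. Fix an induced $C_5=c_1c_2c_3c_4c_5$ in $G$, assume it is not dominating, and let $D_0$ be a connected component of $G[V(G)\setminus N[V(C)]]$. Let $N_0$ denote the set of vertices outside $D_0$ that have a neighbor in $D_0$. By the no-clique-cutset hypothesis, $N_0$ cannot be a clique, so there exist non-adjacent $u,w\in N_0$. A first $C_4$-free subclaim restricts $N_C(v)$, for any $v\notin V(C)$, to one of: $\es$, $\{c_i\}$, $\{c_i,c_{i+1}\}$, $S_i:=\{c_{i-1},c_i,c_{i+1}\}$, or $V(C)$; any other subset produces an induced $C_4$ with two consecutive $c_j$'s. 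If either $u$ or $w$ has type $\{c_i\}$ or $\{c_i,c_{i+1}\}$, picking any $d\in N(u)\cap D_0$ and walking three steps along $C$ away from $N_C(u)$ already yields an induced $P_6$. So $u$ and $w$ are of type $S_i$ or complete to $V(C)$. Two non-adjacent vertices both complete to $V(C)$ create an induced $C_4=\{u,c_1,w,c_3\}$, and the same $C_4$ rules out the cases where one is complete to $V(C)$ and one is of type $S_i$, or where both are of the same type $S_i$. The remaining case is $u\in S_i$, $w\in S_j$ with $i\neq j$: picking $c\in N_C(u)\setminus N_C(w)$ and $c'\in N_C(w)\setminus N_C(u)$ at non-adjacent positions of $C$ and joining $c,u,\ldots,w,c'$ through a shortest $D_0$-path from a $u$-neighbor to a $w$-neighbor produces an induced $P_6$. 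The delicate sub-case is when $u,w$ share a unique common neighbor $d\in D_0$: either $D_0$ has additional vertices, giving a longer path and hence a $P_6$, or $D_0=\{d\}$, which makes $\{d\}$ itself a clique cutset, contradicting the hypothesis.

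For part~(ii), let $C=c_1c_2\cdots c_6$ be an induced non-dominating $C_6$. The first step is again a $C_4$-free classification: for $v\notin V(C)$, $N_C(v)$ must be a (possibly empty) consecutive sub-arc of $C$ of length at most $5$, all of $V(C)$, or an antipodal pair $\{c_i,c_{i+3}\}$, since any other subset creates a $C_4$ through an intermediate $c_j$. Let $U$ be the set of vertices complete to $V(C)$. Using $C_4$-freeness with an antipodal pair of $c_j$'s one sees that $U$ is a clique, and combined with the non-dominating hypothesis and $P_6$-freeness one concludes that $U$ is complete to $V(G)\setminus U$. This provides the complete-graph factor of the join, and it remains to show $H:=G\setminus U$ is a blowup of the Petersen graph.

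To identify the Petersen structure on $H$, view the Petersen graph as the Kneser graph $K(5,2)$ and fix inside it the induced non-dominating $C_6$ with vertex sequence $\{1,2\},\{3,4\},\{2,5\},\{1,4\},\{2,3\},\{4,5\}$, whose remaining four $2$-subsets form a claw with centre $\{2,4\}$ (the one vertex dominated by none of the $C_6$). Mirror this into $H$ by building a ten-part partition of $V(H)$: six groups $V_1,\ldots,V_6$ indexed by the $c_i$'s, three groups $W_{14},W_{25},W_{36}$ indexed by the antipodal-pair types, and one group $X=D_0$ for the vertices with $N_C(v)=\es$. A vertex $v$ is placed in $V_i$ if $v=c_i$ or if $N_C(v)=\{c_{i-1},c_{i+1}\}$; in $W_{i,i+3}$ if $N_C(v)=\{c_i,c_{i+3}\}$; in $X$ if $N_C(v)=\es$. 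The remaining possible types (single vertex, edge, or subpath of length $\ge 3$ in $C$) are ruled out exactly as in part~(i), using $D_0\ne\es$ and $P_6$-freeness. Verifying then that each group is a clique and that the between-group adjacency matches Petersen's adjacency is done pair by pair, with every unwanted edge or missing edge killed via an induced $P_6$, $C_4$, or clique cutset.

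The main obstacle is this last verification in part~(ii): checking the $\binom{10}{2}=45$ pairs of groups (greatly reduced by the dihedral symmetry of $C_6$ and the $S_5$-symmetry of Petersen) while tracking which inner groups are nonempty. The non-dominating hypothesis forces $X\ne\es$, and the no-clique-cutset hypothesis then propagates non-emptiness to the $W$-groups in exactly the way required to glue the full Petersen blowup together. The combinatorics of this propagation, and making sure no extraneous configuration survives the three forbidden-structure killers, is where most of the work in \cite{BH} presumably lies.
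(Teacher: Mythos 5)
This theorem is not proved in the paper at all: it is quoted verbatim from Brandst\"adt and Ho\`ang \cite{BH}, so there is no in-paper argument to measure yours against. Judged on its own merits, your proposal has a concrete gap in part (i). In the sub-case where the two non-adjacent vertices $u\in T_i$, $w\in T_j$ ($i\neq j$) of $N_0$ have a common neighbour $d\in D_0$, both of your escapes fail: extra vertices in $D_0$ do not lengthen the shortest induced $u$--$w$ path through $D_0$ (it remains $u$-$d$-$w$ as long as a common neighbour exists), and if $D_0=\{d\}$ then $\{d\}$ is not a clique cutset --- removing the single vertex $d$ need not disconnect $G$, and the relevant separator $N(d)=N_0$ is exactly the set you have already shown is not a clique. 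The correct resolution is different and easy: two arcs of three consecutive vertices on a $C_5$ always intersect, so $u$ and $w$ share a neighbour $c_k$ on $C$, and then $\{u,c_k,w,d\}$ induces a $C_4$. With that patch part (i) is sound; the rest of your case analysis there checks out.

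Part (ii), by contrast, is a roadmap rather than a proof. You assert without argument that the set $U$ of vertices complete to $C$ is complete to all of $V(G)\setminus U$ (the delicate case being the vertices with no neighbour on $C$); you claim the unwanted arc types are ``ruled out exactly as in part (i)'', but for a $C_6$ the four-consecutive type does not succumb to the same three-step walk (the walk closes up after five vertices because the sixth cycle vertex is again a neighbour of $u$), so a separate argument via the non-dominated component is required, and vertices of these types that have no neighbour in $D_0$ are not touched by the part (i) mechanism at all; and the pairwise adjacency verification plus the non-emptiness propagation --- which you yourself flag as ``where most of the work presumably lies'' --- is precisely the substance of statement (ii). As written, part (ii) cannot be accepted as a proof.
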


In the next two theorems we make some general observations about the
situation when a $(P_6, C_4)$-free graph contains a hole (which must
have length either $5$ or $6$). Observe that in a $C_4$-free graph $G$, if $u$-$v$-$w$ is a $P_3$, then   any $x\in V(G)\sm \{u,v,w\}$ which is adjacent to $u$ and $w$ is also adjacent to $v$.
 
\begin{theorem}\label{thm:c5}
Let $G$ be any $(P_6, C_4)$-free graph that contains a $C_5$ with
vertex-set $C=\{v_1,\ldots,v_5\}$ and $\{v_iv_{i+1}\mid i\in
\{1,\ldots,5\}, i\bmod 5\}$.  Let:
\begin{eqnarray*}
A &=& \{x\in V(G)\sm C\mid N_C(x)=C\}. \\
T_i &=& \{x\in V(G)\sm C\mid N_C(x)=\{v_{i-1}, v_i, v_{i+1}\}. \\
W_i &=& \{x\in V(G)\sm C\mid N_C(x)=\{v_i\}. \\
X_{i,i+1} &=& \{x\in V(G)\sm C\mid N_C(x)=\{v_{i}, v_{i+1}\}.
\end{eqnarray*}
Moreover, let $T = T_1\cup\cdots\cup T_5$, $W =
W_1\cup\cdots\cup W_5$, and $X = X_{12}\cup X_{23}\cup X_{34}\cup X_{45}\cup X_{51}$. Then the following properties hold for all $i$:
\begin{enumerate}[(a)]
\item\label{c5-a}
$A\cup T_i$ is a clique.
\item\label{c5-b}
$[T_i, T_{i+2}]$, $[X_{i,i+1}, X_{i+2,i+3}]$, $[W_i, W_{i+1}]$, $[T_i,
W_{i-2}\cup W_{i+2}]$, $[T_i, X_{i+2,i+3}]$ and $[X_{i,i+1}, W_{i}\cup
W_{i+1}]$ are empty.
\item\label{c5-c}
$[X_{i,i+1}, X_{i+1,i+2}]$, $[W_i, W_{i+2}]$, and
$[X_{i,i+1}, W_{i-1}\cup W_{i+2}]$~are~complete.
\item\label{c5-d}
If $G$ is $C_6$-free, then for each $i$ one of $X_{i,i+1}$ and
$X_{i+1,i+2}$ is empty, and one of $W_i$ and $W_{i+2}$ is empty, and
one of $X_{i,i+1}$ and $W_{i-1}\cup W_{i+2}$ is empty.
\item\label{c5-e}
If $G$ has no clique cutset, then the set $\{x\in V(G)\sm C\mid N_C(x)=\emptyset\}$ is empty and $[T_i, W_i]$ is
complete.
\item\label{c5-f}
If $G$ has no clique cutset, then  $V(G)= V(C)\cup A\cup T\cup W\cup X$.
\end{enumerate}
\end{theorem}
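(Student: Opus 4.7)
The plan is to verify (a)--(f) in order, with each part primarily relying on forbidden-induced-subgraph arguments, and invoking the no-clique-cutset hypothesis only for (e) and (f). Throughout, indices on $C$ are taken modulo~5.

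For (a)--(c), the arguments are direct. Part (a) is immediate from $C_4$-freeness: any non-edge inside $A\cup T_i$, together with the non-edge $v_{i-1}v_{i+1}$ on $C$, would produce an induced $C_4$. For (b), the first five anticomplete relations each admit two non-adjacent $C$-vertices as common neighbors of any putative edge, giving an induced $C_4$; the remaining two cases, $[X_{i,i+1}, W_i]$ and $[X_{i,i+1}, W_{i+1}]$, require $P_6$-freeness, since an edge $xy$ with $x\in X_{i,i+1}$ and $y\in W_i$ yields the induced $P_6$ $y - x - v_{i+1} - v_{i+2} - v_{i+3} - v_{i+4}$. For (c), each completeness relation uses $P_6$-freeness: a non-edge across the pair allows one to traverse four consecutive edges of $C_5$ the long way around to obtain an induced $P_6$.

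For (d), assuming $C_6$-freeness, if both sides of one of the claimed pairs were nonempty, then (c) forces an edge $xy$ between them; closing this edge with the four-edge arc of $C_5$ produces an induced $C_6$ (e.g., $x - v_i - v_{i-1} - v_{i+3} - v_{i+2} - y - x$ for $x\in X_{i,i+1}$ and $y\in X_{i+1,i+2}$), contradicting the assumption.

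For (e), the first assertion follows immediately from Theorem~\ref{thm:BH}(i). For $[T_i, W_i]$ complete, suppose for contradiction that $x\in T_i$, $y\in W_i$, $xy\notin E(G)$. Set $K=\{v_i\}\cup A\cup T_i$, which is a clique by (a). The goal is to show that $K$ is a clique cutset separating $y$ from $v_{i+2}$. Using (b), $y$'s off-$C$ neighbors lie in $A\cup T_{i-1}\cup T_i\cup T_{i+1}\cup W_i\cup W_{i\pm2}\cup X_{i-2,i-1}\cup X_{i+1,i+2}\cup X_{i+2,i+3}$. Each potential escape from $y$ outside $K$ will be shown to land in a component not containing $v_{i+2}$, using repeated applications of $P_6$-freeness. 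The delicate subcase is when $y$ has a neighbor $z\in T_{i-1}$ (symmetrically $T_{i+1}$): here the putative induced path $y - z - v_{i+3} - v_{i+2} - v_{i+1} - x$ forces $zx\in E(G)$ to avoid an induced $P_6$, and then further arguments, exploiting that $T_{i-1}$ and $T_i$ are both cliques by (a) together with the graded-pair property of Lemma~\ref{lem:ab}, force $z$ into the clique cutset and close the case.

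For (f), by (e), every $u\in V(G)\setminus C$ satisfies $N_C(u)\neq\emptyset$. A case analysis on the pattern of $N_C(u)$ on $C_5$ shows that $C_4$-freeness forces $N_C(u)$ to be one of: a singleton (giving $W_i$), a consecutive pair (giving $X_{i,i+1}$), a consecutive triple (giving $T_i$), or all of $C$ (giving $A$). Every other nonempty subset of $V(C)$ contains a triple $v_j, v_{j+1}, v_{j+2}$ with $v_j,v_{j+2}\in N_C(u)$ but $v_{j+1}\notin N_C(u)$, yielding an induced $C_4$ on $\{u,v_j,v_{j+1},v_{j+2}\}$. The main obstacle is the $[T_i,W_i]$-complete claim in (e): turning the heuristic "$K$ is a clique cutset" into a rigorous argument requires a careful split on the possible escape neighbors of $y$, and in each subcase invoking either $P_6$-freeness or the structure established in (a)--(c) to trap the escape inside $K$.
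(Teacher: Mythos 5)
Your treatment of (a)--(d) and (f) matches the paper's argument and is essentially correct. One quibble on (b): the induced $C_4$ that kills an edge $xy$ between two of the listed sets has the form $x$-$y$-$v_k$-$v_j$-$x$, where $v_j\in N_C(x)\setminus N_C(y)$ and $v_k\in N_C(y)\setminus N_C(x)$ are \emph{adjacent} vertices of $C$; ``two non-adjacent $C$-vertices as common neighbors'' of the edge would give a diamond, not an induced $C_4$. The conclusion is still right in each of those five cases, so this is only a wording defect.

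The genuine gap is the second assertion of (e), that $[T_i,W_i]$ is complete, which you yourself leave as an unresolved ``obstacle.'' The plan you sketch --- show that $K=\{v_i\}\cup A\cup T_i$ is a clique cutset separating $y\in W_i$ from $v_{i+2}$ --- fails as stated: by part (c), $y$ is complete to $W_{i+2}\cup W_{i-2}\cup X_{i+1,i+2}\cup X_{i-2,i-1}$, and every vertex of those sets is adjacent to a vertex of $C\setminus\{v_i\}$, so for instance $y$-$w$-$v_{i+2}$ with $w\in W_{i+2}$ is a path from $y$ to $v_{i+2}$ that avoids $K$ entirely. Each such escape route would need its own $P_6$ argument (some do work, e.g.\ $v_{i-1}$-$x$-$v_{i+1}$-$v_{i+2}$-$w$-$y$ is an induced $P_6$ when $w\in W_{i+2}\cap N(y)$ and $xy\notin E(G)$), but you have not carried this out, and the ``components of $G\setminus K$'' framing is the wrong one since the contradiction in these subcases comes from $P_6$-freeness, not from exhibiting a separation. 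The paper's proof is a one-line application of Theorem~\ref{thm:BH}(i): since $x\in T_i$, the set $C'=(C\setminus\{v_i\})\cup\{x\}$ induces another $C_5$, and the only possible neighbor of $y$ on $C'$ is $x$ (because $N_C(y)=\{v_i\}$); as $G$ has no clique cutset, $C'$ is dominating, which forces $xy\in E(G)$. You should replace your sketch for (e) by this argument (or complete the case analysis of the escape routes); as written, (e) --- and hence the part of (f) that quotes it --- is not proved.
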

\begin{proof}
(\ref{c5-a}) If there are non-adjacent vertices $a,b\in A\cup T_i$,
then $\{a,v_{i-1},b,v_{i+1}\}$ induces a $C_4$.

(\ref{c5-b}) Let $i=1$ and suppose that there is an edge $xy$ in one
of the listed sets.  If $x\in T_1$ and $y\in T_3$, then $\{x,y,v_4,
v_5\}$ induces a $C_4$.  If $x\in X_{12}$ and $y\in X_{34}$, then
$\{x,v_2,v_3,y\}$ induces a $C_4$.  If $x\in T_1$ and $y\in W_4$ then
$\{x,y,v_4,v_5\}$ induces a $C_4$.  If $x\in W_1$ and $y\in W_2$, then
$x$-$y$-$v_2$-$v_3$-$v_4$-$v_5$ is an induced $P_6$.  If $x\in T_1$
and $y\in X_{34}$, then $\{x,v_2,v_3,y\}$ induces a $C_4$.  If $x\in
X_{12}$ and $y\in W_1$, then $y$-$x$-$v_2$-$v_3$-$v_4$-$v_5$ is a
$P_6$.  The other cases are symmetric.

(\ref{c5-c}) and (\ref{c5-d}) Let $i=1$ and suppose that there are
vertices $x\in X_{12}\cup W_1$ and $y\in X_{23}\cup W_3$.  If
$xy\notin E(G)$, then $x$-$v_1$-$v_5$-$v_4$-$v_3$-$y$ is a $P_6$.
This proves (\ref{c5-c}).  If $xy\in E(G)$ then the same vertices
induce a $C_6$, which proves (\ref{c5-d}).

(\ref{c5-e}) Follows from Theorem~\ref{thm:BH}.

(\ref{c5-f}) Follows by Theorem~\ref{thm:BH} and (\ref{c5-e}).
\end{proof}

\begin{theorem}\label{thm:c6}
Let $G$ be any $(P_6, C_4)$-free graph that contains a $C_6$ with
vertex-set $C=\{v_1,\ldots,v_6\}$ and $\{v_iv_{i+1}\mid i\in
\{1,\ldots,6\}, i\bmod 6\}$.  Let:
\begin{eqnarray*}
S &=& \{x\in V(G)\sm C\mid N_C(x)=C\}. \\
A_i &=& \{x\in V(G)\sm C\mid N_C(x)=\{v_{i-1}, v_i, v_{i+1}\}\}. \\
B_i &=& \{x\in V(G)\sm C\mid N_C(x)=\{v_{i-1}, v_i, v_{i+1}, v_{i+2}\}. \\
D_i &=& \{x\in V(G)\sm C\mid N_C(x)=\{v_{i}, v_{i+3}\}\}.\\
L &=& \{x\in V(G)\sm C\mid N_C(x)=\emptyset\}.
\end{eqnarray*}
Moreover, let $A = A_1\cup\cdots\cup A_6$, $B =
B_1\cup\cdots\cup B_6$, and $D = D_1\cup\cdots\cup D_6$.
Then the following properties hold for all $i$, $i \bmod 6$:
\begin{enumerate}[(a)]\setlength\itemsep{0pt}
\item \label{c6-V} $V(G) = V(C)\cup A\cup B\cup D \cup S \cup L$.
\item \label{c6-cliques}
Each of $A_i \cup B_i\cup B_{i+5}$, $D_i$ and $S$ is a clique.
\item \label{c6-edge-comp}
$[A_i, A_{i+1} \cup A_{i+5} \cup D_i]$, $[B_i, B_{i+1} \cup B_{i+3}
\cup B_{i+5} \cup D_{i+2}]$, and $[S, A_i\cup B_i\cup D_i]$ are
complete.
\item \label{c6-edge-emp}
$[A_i, A_{i+3} \cup B_{i+2} \cup B_{i+3} \cup D_{i+1} \cup D_{i+2}]$,
$[B_i, B_{i+2} \cup B_{i+4}]$, and $[D_i, D_{i+1}]$ are empty.
\item \label{c6-BiDi}
If $B_i \neq \emptyset$, then $D_i \cup D_{i+1}=\emptyset$.
\item\label{c6-subBi}
If $B_i \neq \emptyset$ and $B_{i+1}\neq \emptyset$, then $B_{i+3}
\cup B_{i+4} = \emptyset$.
\end{enumerate}
\end{theorem}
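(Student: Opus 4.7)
Following the template of Theorem~\ref{thm:c5}, the strategy is: for every forbidden adjacency pattern asserted by items~(a)--(f), I would assume a witness to the pattern exists and produce either an induced $C_4$ or an induced $P_6$ on a small set of vertices composed of the witness together with a few well-chosen vertices from $C$. Two standing observations guide the choice. First, since $G$ is $C_4$-free, any $x\in V(G)\setminus C$ adjacent to $v_j$ and $v_{j+2}$ must also be adjacent to $v_{j+1}$, lest $\{x,v_j,v_{j+1},v_{j+2}\}$ induce a $C_4$; this forces every $N_C(x)$ to be a union of consecutive arcs of $C_6$. Second, if external vertices $u,u'$ share two common $C$-neighbours $v_j,v_k$ with $v_jv_k\notin E(G)$, then $uu'$ must be an edge, for otherwise $\{u,v_j,u',v_k\}$ induces a $C_4$.

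For item~(a) I would enumerate the surviving arc patterns. A singleton $N_C(x)=\{v_i\}$ yields the induced $P_6$ $x$-$v_i$-$v_{i+1}$-$v_{i+2}$-$v_{i+3}$-$v_{i+4}$; a pair of consecutive neighbours $\{v_i,v_{i+1}\}$ yields the induced $P_6$ $x$-$v_{i+1}$-$v_{i+2}$-$v_{i+3}$-$v_{i+4}$-$v_{i-1}$; and a five-vertex consecutive neighbourhood is ruled out by the $C_4$ $\{x,v_{i-1},v_i,v_{i+1}\}$ on the unique non-neighbour $v_i$. What survives is exactly $\es$, $C$, three consecutive vertices, four consecutive vertices, and one antipodal pair, so $V(G)\setminus C$ partitions as $L\cup S\cup A\cup B\cup D$.

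For items (b)--(d), I would proceed pair by pair. Each clique claim in~(b) follows from the second observation applied to two non-adjacent common $C$-neighbours: $v_1,v_3$ for $S$, $v_i,v_{i+3}$ for $D_i$, and $v_{i-1},v_{i+1}$ for $A_i\cup B_i\cup B_{i+5}$. Each emptiness claim in~(d) is closed by finding two \emph{adjacent} common $C$-neighbours of the would-be-adjacent pair and forming an induced $C_4$ with them; for instance $\{b,v_{i-1},v_{i-2},c\}$ for $[B_i,B_{i+2}]$, and $\{d,v_i,v_{i+1},e\}$ for $[D_i,D_{i+1}]$. Each completeness claim in~(c) is converted, under the assumption of a missing edge $uu'$, into an induced $P_6$ obtained by walking from $u$ to $u'$ around $C_6$ on the side where neither vertex has a $C$-neighbour; for instance $a$-$v_{i-1}$-$v_{i-2}$-$v_{i-3}$-$v_{i-4}$-$a'$ to prove $[A_i,A_{i+1}]$ complete.

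For~(e), taking $b\in B_i$ and $d\in D_i$, the case $bd\in E(G)$ is resolved by the induced $C_4$ $\{b,v_{i+2},v_{i+3},d\}$, and the case $bd\notin E(G)$ by the induced $P_6$ $v_{i+1}$-$b$-$v_{i-1}$-$v_{i-2}$-$v_{i+3}$-$d$; the subcase $e\in D_{i+1}$ is symmetric (with $C_4$ $\{b,v_{i-1},v_{i-2},e\}$ or $P_6$ $v_i$-$b$-$v_{i+2}$-$v_{i+3}$-$v_{i+4}$-$e$). For~(f), previously proved items force a hypothetical triple $b\in B_i$, $b'\in B_{i+1}$, $c\in B_{i+3}$ to satisfy $bb',bc\in E(G)$ and $b'c\notin E(G)$; then, together with $v_{i+3}$, which sees $b',c$ but not $b$, the four vertices $\{b,b',c,v_{i+3}\}$ induce a $C_4$, and the $B_{i+4}$ case is symmetric via $v_{i-1}$. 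The main obstacle is organisational: the analysis fragments into many subcases, and in each one the delicate step is choosing the correct arc of $C_6$ so that the proposed $C_4$ or $P_6$ is genuinely induced; the trickiest instance is the non-edge subcase of~(e), where the $P_6$ must thread between the four $C$-neighbours of $b$ and the two $C$-neighbours of $d$ without creating any chord.
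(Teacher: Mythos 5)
Your proposal is correct and follows essentially the same route as the paper: classify $N_C(x)$ by the arc structure forced by $C_4$-freeness, then settle each adjacency claim by exhibiting an induced $C_4$ (via two non-adjacent common $C$-neighbours, or an edge $v_jv_{j+1}$ split between the two external vertices) or an induced $P_6$; all of your exhibited configurations check out, and your witness for item (f), using $v_{i+3}$, is in fact cleaner than the paper's printed set $\{x,y,v_3,z\}$, which contains the chord $xv_3$. The one spot where your general recipe needs a touch-up is $[A_i,D_i]$ in item (c): walking from $a\in A_i$ to $d\in D_i$ around either side of $C$ yields only five vertices, so one must prepend $v_{i+1}$ to get the induced $P_6$ $v_{i+1}$-$a$-$v_{i-1}$-$v_{i-2}$-$v_{i+3}$-$d$.
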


\begin{proof} We note that $D_i = D_{i+3}$, for all $i$.

(\ref{c6-V})  Suppose that there is a vertex $x$ in $G$. We may assume that $x\in V(G)\sm V(C)$. If $x$ has no neighbor in $C$, then $x\in L$. So, suppose that $x$ has a neighbor in $C$. If $N_C(x)=\{v_i\}$ (or $\{v_i,v_{i+1}\}$), for some $i$, then $(C\sm \{v_{i+1}\})\cup \{x\}$ induces a $P_6$.  In all the remaining cases,  we see that either $C\cup\{x\}$ contains an induced $C_4$ or    $x\in A\cup B\cup D\cup S$.  So (\ref{c6-V}) holds.

(\ref{c6-cliques}) If there are non-adjacent vertices $x$ and $y$ in
one of the listed sets, then either $\{x, v_{i-1}, v_{i+1}, y\}$ or
$\{x, v_{i}, v_{i+3}, y\}$ induces a $C_4$.

(\ref{c6-edge-comp}) Let $i=1$ and suppose that there are non-adjacent
vertices $x$ and $y$ in one of the listed sets.  If $x \in A_1$ and $y
\in A_2\cup D_1$, then $\{v_2, x, v_6, v_5, v_4, y\}$ induces a $P_6$.
If $x\in B_1$ and $y\in B_2$, then $\{x, v_1, y, v_3\}$ induces a
$C_4$.  If $x\in B_1$ and $y\in B_4\cup D_3$, then $\{x, v_3, y,
v_6\}$ induces a $C_4$.  If $x\in S$ and $y\in A_1\cup B_1\cup D_1$,
then either $\{x, v_6, y, v_2\}$ or $\{x, v_1, y, v_3\}$ induces a
$C_4$.  The other cases are symmetric.

(\ref{c6-edge-emp}) Let $i=1$ and suppose that there is an edge $xy$
in one of the listed sets.  If $x \in A_1$ and $y \in A_4\cup B_3\cup
D_2$, then $\{x, v_6, v_5, y\}$ induces a $C_4$.  If $x\in B_1$ and
$y\in B_3$, then $\{x, v_6, v_5, y\}$ induces a $C_4$.  If $x\in D_1$
and $y\in D_2$, then $\{x, v_1, v_2, y\}$ induces a $C_4$.  The other
cases are symmetric.

(\ref{c6-BiDi}) Let $i=1$ and let $x\in B_1$.  Up to symmetry, if
there exists a vertex $y\in D_1$, then by (\ref{c6-edge-comp}), $xy\in
E(G)$.  But then $\{x, y, v_4, v_3\}$ induces a $C_4$.  So $D_1 =
\emptyset$.

(\ref{c6-subBi}) Let $i=1$.  Let $x\in B_1$ and $y\in B_2$.  Up to
symmetry, if there exists a vertex $z\in B_4$, then by
(\ref{c6-edge-comp}), $xy, xz\in E(G)$, and by (\ref{c6-edge-emp}),
$yz\notin E(G)$.  But then $\{x, y, v_3, z\}$ induces a $C_4$.  So
$B_4 = \emptyset$.

This shows Theorem~\ref{thm:c6}.
\end{proof}

\subsection*{When there is an $F_3$}

Now we can give the proof of the first item of
Theorem~\ref{thm:structure} which we restate it as follows.

\begin{theorem}\label{lem:f3}
Let $G$ be a $(P_6,C_4)$-free graph with no universal vertex and no
clique cutset.  Suppose that $G$ contains an $F_3$.  Then $G$ is a
blowup of $F_3$.
\end{theorem}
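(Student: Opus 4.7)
The plan is to fix an induced copy of $F_3$ in $G$, with vertex-set $F$, and to classify every vertex of $V(G)\sm F$ according to its neighborhood in $F$. For each $v\in V(F_3)$, set
\[
Q_v \;=\; \{v\}\cup\{x\in V(G)\sm F : N_F(x)=N_{F_3}(v)\},
\]
and aim to show that $\{Q_v : v\in V(F_3)\}$ partitions $V(G)$ and realises $G$ as a blowup of $F_3$. Once this partition is established, the blowup properties follow by short $C_4$-free arguments: two non-adjacent vertices in some $Q_v$, together with two non-adjacent vertices of $N_{F_3}(v)$, induce a $C_4$, so each $Q_v$ is a clique; the same type of argument applied to non-adjacent $x\in Q_u$, $y\in Q_v$ with $uv\in E(F_3)$ forces $[Q_u,Q_v]$ to be complete; and $[Q_u,Q_v]=\es$ when $uv\notin E(F_3)$ is immediate from the definition. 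For the few vertices of $F_3$ whose neighborhood fails to contain a non-adjacent pair, one uses instead a direct $P_6$-argument through $F$.

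The main obstacle, and the heart of the proof, is showing that every $x\in V(G)\sm F$ satisfies $N_F(x)=N_{F_3}(v)$ for some $v\in V(F_3)$. My plan is a case analysis on $S=N_F(x)\subseteq V(F_3)$, enumerating the $2^{|V(F_3)|}$ candidate subsets and, for each $S$ that is not an $N_{F_3}(v)$, producing a forbidden induced subgraph. The standard obstructions are (i) an induced $C_4$ on $\{x,p,y,q\}$ where $p,q\in S$ are non-adjacent in $F_3$ and $y$ is some common neighbor of $p,q$ in $F_3$ missing from $S$, or (ii) an induced $P_6$ obtained by prepending $x$ to a suitable chordless path in $F_3$. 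A bookkeeping check, making use of the explicit picture of $F_3$ in Figure~\ref{fig:f123}, should show that every ``non-standard'' $S$ falls into one of these two patterns.

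Two global hypotheses handle the boundary cases of the analysis. Vertices with $N_F(x)=\es$ are ruled out using the no-clique-cutset assumption: by $C_4$-freeness applied to the graded pair of disjoint cliques arising from $F$ and a suitable common neighborhood in $G\sm F$ (Lemma~\ref{lem:ab}), the separator between the component of $x$ in $G\sm F$ and $F$ would be a clique cutset, a contradiction. Vertices with $N_F(x)$ equal to a superset that does not coincide with any $N_{F_3}(v)$ are absorbed into the appropriate $Q_v$ when $F_3$ has a universal vertex; otherwise the no-universal-vertex hypothesis of $G$ rules them out directly. Assembling the partition and the adjacency checks then yields the desired blowup structure.
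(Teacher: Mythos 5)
Your target classification is the right one: in the final structure every vertex outside the fixed copy $F$ of $F_3$ is indeed an $F$-clone of some vertex of $F_3$, and your two obstruction patterns ($C_4$ through a missing common neighbour of a non-adjacent pair of $S$, and a $P_6$ obtained by prepending the vertex to a chordless $P_5$ of $F_3$) do, as far as I can check, cover every ``bad'' set $S$ with $\es\neq S\subsetneq V(F_3)$. This is organized differently from the paper, which first classifies vertices against the $C_6$ inside $F_3$ via Theorem~\ref{thm:c6} (getting $B=D=\es$, then $L\cup S=\es$) and only afterwards splits each $A_{2i}$ according to adjacency to $\{x,y,z\}$; your direct enumeration avoids that machinery at the cost of a much larger case check. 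One small but real error in the setup: $Q_v$ must collect the vertices with $N_F(u)=N_{F_3}[v]$ (closed neighbourhood), not $N_{F_3}(v)$; as written, a vertex of $Q_v\sm\{v\}$ is non-adjacent to $v$, $Q_v$ is not a clique, and in fact such vertices cannot exist (they create a $C_4$ with $v$ and two non-adjacent neighbours of $v$).

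The genuine gaps are in the two boundary cases, where the argument cannot be ``local'' in $N_F$. First, a vertex $u$ complete to $F$ is \emph{not} ruled out ``directly'' by the no-universal-vertex hypothesis: to show $u$ is universal in $G$ you must show it is adjacent to every vertex outside $F$, which is fine for vertices already placed in some $Q_v$ (a $C_4$ through two non-adjacent $F_3$-neighbours of $v$), but fails a priori for vertices with $N_F=\es$. Second, your treatment of $N_F(u)=\es$ is not a proof: $F$ is not a clique, so Lemma~\ref{lem:ab} does not apply to it, and what you actually need is that the neighbourhood of the component of $u$ in $G\sm F$ is contained in the clique of $F$-complete vertices. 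That in turn requires separate $P_6$ arguments showing that no vertex of any $Q_v$ has a neighbour in that component (e.g.\ $u'$-$w$-$v_2$-$v_3$-$v_4$-$v_5$ for $w\in Q_{v_1}$), and hence presupposes that the clone classification has already been completed. So the two boundary cases are interdependent and must be handled in a specific order after the main enumeration, not dispatched in one line each. Finally, the main enumeration itself --- the heart of the proof --- is only asserted (``a bookkeeping check should show''); since the critical cases are exactly the large sets $S$ where obstruction (ii) is unavailable and one must verify that the only surviving sets are $N_{F_3}[x]$, $N_{F_3}[y]$, $N_{F_3}[z]$ and $V(F_3)$, this check needs to be exhibited, not presumed.
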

\begin{proof}
Consider the graph $F_3$ as shown in Figure~\ref{fig:f123} and let
$C=\{v_1, \ldots, v_6\}$.  By Theorem~\ref{thm:c6}(\ref{c6-V}), and
with the same notation, every vertex in $V(G)\setminus C$ belongs to
$A_i\cup B_i\cup D_i\cup S \cup L$ for some $i$.  Note that $x\in
A_2$, $y\in A_4$ and $z\in A_6$.  We first claim that:
\begin{equation}\label{nobd}
\mbox{$B_i\cup D_i=\emptyset$, for all $i$.}
\end{equation}
Proof: Suppose on the contrary, and up to symmetry, that there is a
vertex $u\in B_1\cup D_1$.  Suppose that $u\in B_1$.  By
Theorem~\ref{thm:c6}(\ref{c6-cliques}) we have $ux\in E(G)$, and by
Theorem~\ref{thm:c6}(\ref{c6-edge-emp}) we have $uy\notin E(G)$.  Then
either $\{u,x,z,v_6\}$ or $\{u,v_3,y,z\}$ induces a $C_4$.  Now
suppose that $u\in D_1$.  By Theorem~\ref{thm:c6}(\ref{c6-edge-emp})
we have $ux, uz\notin E(G)$.  Then $u$-$v_4$-$v_5$-$z$-$x$-$v_2$ is a
$P_6$.   So (\ref{nobd}) holds.

Next, we claim that:
\begin{equation}\label{nos}
\mbox{$L\cup S=\emptyset$}.
\end{equation}
Proof: Suppose that $L\neq\emptyset$.  By
Theorem~\ref{thm:c6}(\ref{c6-cliques}), $S$ is a clique.  Since $S$ is
not a clique cutset, and by (\ref{nobd}), some vertex $w$ in $L$ has a
neighbor $a\in A$, say $a\in A_1$.  But then
$w$-$a$-$v_6$-$v_5$-$v_4$-$v_3$ is a $P_6$, a contradiction.  Hence
$L=\emptyset$.  Now if $S\neq\emptyset$, then by
Theorem~\ref{thm:c6}(\ref{c6-cliques}) and (\ref{c6-edge-comp}), any
vertex in $S$ is universal, a contradiction.  So (\ref{nos}) holds.

\medskip

We note that every vertex $a\in A_2$ is either complete or
anticomplete to $\{y,z\}$, for otherwise $G[\{a, v_3, y, z, v_1\}]$
has an induced $C_4$.  So let $A'_2=\{v_2\}\cup \{u\in A_2\mid u$ is
anticomplete to $\{y,z\}\}$ and $X=A_2\setminus A'_2$.  Note that
$x\in X$.  Define sets $A'_4$, $Y$, $A'_6$, $Z$ similarly.

By Theorem~\ref{thm:c6}(\ref{c6-edge-comp}) and (\ref{c6-edge-emp}),
we know that $[A_1, A'_2\cup X \cup A'_6\cup Z]$ is complete and
$[A_1, A'_4\cup Y]=\emptyset$.  Likewise, $[A_3, A'_2\cup X \cup
A'_4\cup Y]$ is complete and $[A_3,A'_6\cup Z]=\emptyset$, and $[A_5,
A'_4\cup Y \cup A'_6\cup Z]$ is complete and $[A_5, A'_1\cup
X]=\emptyset$.  Moreover there is no edge $a_1a_3$ with $a_1\in A_1$
and $a_3\in A_3$, for otherwise $\{a_1, a_3, y, z\}$ induces a $C_4$.
So $[A_1,A_3]=\emptyset$, and similarly $[A_3,A_5]=\emptyset$ and
$[A_5, A_1]=\emptyset$.

There is no edge $a'_2a'_4$ with $a'_2\in A'_2$ and $a'_4\in A'_4$,
for otherwise $\{a'_2, a'_4, y,x\}$ induces a $C_4$.  So
$[A'_2,A'_4]=\emptyset$, and similarly $[A'_4,A'_6]=\emptyset$ and
$[A'_6, A'_2]=\emptyset$.

There is no edge $a'_2y'$ with $a'_2\in A'_2$ and $y'\in Y$, for
otherwise $\{a'_2, y', z, v_1\}$ induces a $C_4$.  Hence, and by
symmetry, $[A'_2,Y\cup Z]=\emptyset$, and similarly $[A'_4,Z\cup
X]=\emptyset$ and $[A'_6, X\cup Y]=\emptyset$.

Finally, any two vertices $x'\in X$ and $y'\in Y$ are adjacent, for
otherwise $\{x', v_3, y', z\}$ induces a $C_4$.  Hence $[X,Y]$ is
complete, and similarly $[X,Z]$ and $[Y,Z]$ are complete. Now we exhibit the mapping
$Q_v\rightarrow v$, $v\in V(F_3)$ of the definition of a blowup, as follows:
 $A_i'\rightarrow v_i$, for $i$ even, and $A_i\rightarrow v_i$, for $i$ odd, $X\rightarrow x$,  $Y\rightarrow y$, and
 $Z\rightarrow z$. Then the above properties mean that $G$ is a blowup of $F_3$. This completes the proof.
\end{proof}

\subsection*{When there is an $F_1$ and no $F_3$}

Here we give the proof of the second item of
Theorem~\ref{thm:structure}.

\begin{theorem}\label{thm:f1nof3}
Let $G$ be a $(P_6, C_4)$-free graph with no universal vertex and no
clique cutset.  Suppose that $G$ contains an $F_1$ and no $F_3$.  Then
$G$ is a band.
\end{theorem}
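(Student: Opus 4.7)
The plan is to mimic the strategy of the preceding Theorem~\ref{lem:f3}: extract a short hole, apply one of the two general ``hole-containment'' theorems to get a partition of $V(G)$, and then use the forbidden-subgraph hypotheses to whittle the partition down to the claimed structure. Since $F_1$ contains an induced $C_5$ (and since $G$ is $F_3$-free, hence $C_6$-free by the third item of Theorem~\ref{thm:structure} applied in reverse — or since we may simply choose a $C_5$ inside a copy of $F_1$), fix an induced $C_5$ on vertices $C=\{v_1,\ldots,v_5\}$ and apply Theorem~\ref{thm:c5} to obtain the partition of $V(G)\setminus C$ into $A$, $T_1,\ldots,T_5$, $W_1,\ldots,W_5$ and $X_{12},X_{23},X_{34},X_{45},X_{51}$. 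Because $G$ has neither a clique cutset nor a universal vertex, parts \textbf{(e)} and \textbf{(f)} of Theorem~\ref{thm:c5} say that these sets really do cover $V(G)$.

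The target structure of a band has seven sets $Q_1,\ldots,Q_5,R_2,R_3$, and matching the definition of $R_j$ (three consecutive complete neighborhoods among the $Q_i$'s) against the definition of $T_j$ (three consecutive neighbors on $C$) strongly suggests the assignment $R_2\supseteq T_2$, $R_3\supseteq T_3$, with each $Q_i$ built from $\{v_i\}\cup W_i$ together with the appropriate incident $X_{i-1,i}$ and $X_{i,i+1}$ pieces and, in one designated coordinate (most naturally $Q_5$, which is anticomplete to $R_2\cup R_3$ in the band), absorbing $A$ as well. The first step of the proof is therefore to choose a copy of $F_1$ in $G$ containing the fixed $C_5$, and use the extra apex vertices of this $F_1$ together with the $F_3$-free hypothesis to show that \emph{three} of the five sets $T_1,T_4,T_5$ are empty: a nonempty $T_i$ (for $i\in\{1,4,5\}$) combined with the two extra vertices from $F_1$ (plus the $C_5$) should be shown to induce either a $C_6$ or an $F_3$, in either case contradicting our hypotheses (the first via item~(\ref{c6}) of Theorem~\ref{thm:structure}, since we can assume $G$ is $F_1$-containing and a $C_6$ would force a Petersen-blowup structure inconsistent with the fixed $C_5$). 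A parallel analysis using $C_4$-freeness should restrict which among the $X_{i,i+1}$ are simultaneously nonempty.

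Once $T_1=T_4=T_5=\emptyset$ and the remaining membership is sorted out, I would verify the four band axioms. Cliqueness of the $Q_i$'s and $R_j$'s reduces to Theorem~\ref{thm:c5}(\ref{c5-a}) together with $C_4$-freeness applied to unions of adjacent subsets. The complete pairs $[R_2,Q_1\cup Q_2\cup Q_3]$, $[R_3,Q_2\cup Q_3\cup Q_4]$, $[Q_2,Q_3]$, $[Q_5,Q_1\cup Q_4]$ follow from Theorem~\ref{thm:c5}(\ref{c5-c}) together with the $N_C$-labels of the constituent vertices, and the empty pairs $[Q_1,Q_3\cup R_3\cup Q_4]$, $[Q_4,Q_1\cup Q_2\cup R_2]$, $[Q_5,Q_2\cup R_2\cup Q_3\cup R_3]$ follow analogously from Theorem~\ref{thm:c5}(\ref{c5-b}). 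For the gradedness of $\{Q_1,Q_2\}$, $\{Q_3,Q_4\}$ and $\{R_2,R_3\}$, I would apply Lemma~\ref{lem:ab} to each pair: each pair consists of two disjoint cliques in the $C_4$-free graph $G$, so gradedness is automatic from the first item of that lemma.

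The main obstacle, as in Theorem~\ref{lem:f3}, is the bookkeeping of the emptiness arguments: precisely specifying the copy of $F_1$, exhibiting an explicit induced $F_3$ (or $P_6$ or $C_4$) whenever one of the forbidden $T_i$'s, $X_{i,i+1}$'s or cross-edges is assumed nonempty, and deciding for each vertex of $A$ and of the surviving $X$-sets exactly which $Q_i$ it joins without destroying cliqueness or completeness of the remaining pairs. I expect the argument to be noticeably longer than the $F_3$ case because the $C_5$ has less symmetry than the $C_6$ and because two out of three of the ``apex-triples'' $T_2,T_3$ must be allowed to remain, so the eliminations must discriminate among the $T_i$'s rather than simply kill them all.
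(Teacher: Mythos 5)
Your high-level plan (fix a $C_5$ inside the given $F_1$, apply Theorem~\ref{thm:c5}, then sort the resulting sets into the seven band parts) is the right starting point, but the specific sorting you propose is wrong, and some of the statements you intend to prove are false. First, $T_1$, $T_4$ and $T_5$ are \emph{not} empty in general: in the paper's proof they survive and become $Q_i=\{v_i\}\cup T_i$ for $i\in\{1,4,5\}$ (a band with $|Q_1|\ge 2$ realizes a nonempty $T_1$), so any attempt to derive a contradiction from $T_5\neq\emptyset$ must fail -- indeed $\{t_5,v_1,x,y,z,v_4\}$ induces a $C_6$, which is not forbidden here (bands may contain $C_6$'s; your parenthetical claim that $F_3$-freeness yields $C_6$-freeness is also unfounded). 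Second, the sets that genuinely must be killed are exactly the ones you plan to keep: $W=\emptyset$ (a vertex of $W_1$ cannot sit in $Q_1$ because $[Q_5,Q_1]$ must be complete and $v_5\in Q_5$), $X_{45}\cup X_{51}=\emptyset$, and $A=\emptyset$. In particular $A$ cannot be absorbed into $Q_5$, since $A$ is complete to $v_2$ and $v_3$ while $[Q_5,Q_2\cup R_2\cup Q_3\cup R_3]$ must be empty; showing $A=\emptyset$ is precisely where the $F_3$-free hypothesis is used (a vertex of $A$ is either complete to $X_{12}\cup X_{23}\cup X_{34}$, hence universal, or anticomplete to it, yielding an induced $F_3$ with $C\cup\{x,y,z\}$).

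The second missing idea is the construction of $Q_2,R_2,Q_3,R_3$. It is not the case that $R_2\supseteq T_2$: the band axiom $[R_2,Q_1]$ complete forces $R_2$ to be only those vertices of $T_2\cup\{v_2\}\cup X_{12}\cup Y_1$ that are complete to $Q_1=\{v_1\}\cup T_1$, with the rest going to $Q_2$; here $Y_1,Y_4$ is a partition of $X_{23}$ according to anticompleteness to $T_4$ or $T_1$ (which requires its own $C_4$ argument). One then still has to prove, and it is not automatic, that $[Q_3,R_2]$ and $[Q_2,R_3]$ are complete. Only the gradedness step of your outline is genuinely free (disjoint cliques in a $C_4$-free graph form a graded pair). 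As written, the proposal would not produce a correct proof.
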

\begin{proof}
Consider the graph $F_1$ as shown in Figure~\ref{fig:f123} and let
$C=\{v_1, \ldots, v_5\}$.  We use the same notation as in
Theorem~\ref{thm:c5}.  So $x\in X_{12}$, $y\in X_{23}$ and $z\in
X_{34}$.  By Theorem~\ref{thm:c5}(\ref{c5-b}) and (\ref{c5-c}), we
know that $[X_{23}, X_{12}\cup X_{34}]$ is complete and $[X_{12},
X_{34}]=\emptyset$.  Note that $X_{12}$ is a clique, for otherwise
$v_1, y$ and two non-adjacent vertices from $X_{12}$ induce a $C_4$.
Similarly, $X_{23}$ and $X_{34}$ are cliques.  We claim that:
\begin{equation}\label{f1-nowxs}
\mbox{$W=\emptyset$, and $X_{51}\cup X_{54}=\emptyset$, and
$A=\emptyset$.}
\end{equation}
Proof: Suppose the contrary.  Up to symmetry, there is a vertex $u\in
W_1\cup W_2\cup W_5\cup X_{51}\cup A$.  Suppose $u\in W_1$.  By
Theorem~\ref{thm:c5}(\ref{c5-b}) we have $ux, uy\notin E(G)$.  Then
$u$-$v_1$-$x$-$y$-$v_3$-$v_4$ is a $P_6$.  Now suppose $u\in W_2$.  By
Theorem~\ref{thm:c5}(\ref{c5-b}) we have $uz\notin E(G)$.  Then either
$u$-$v_2$-$y$-$z$-$v_4$-$v_5$ or $u$-$y$-$z$-$v_4$-$v_5$-$v_1$ is a
$P_6$.  Now suppose $u\in W_5$.  By Theorem~\ref{thm:c5}(\ref{c5-b})
we have $uz\notin E(G)$.  Then $u$-$v_5$-$v_1$-$v_2$-$v_3$-$z$ is a
$P_6$.  Now suppose $u\in X_{51}$.  By
Theorem~\ref{thm:c5}(\ref{c5-b}) we have $uy, uz\notin E(G)$.  Then
$u$-$v_1$-$v_2$-$y$-$z$-$v_4$ is a $P_6$.  Thus we have established
that $W=\emptyset$ and $X_{51}\cup X_{54}=\emptyset$ so $X=X_{12}\cup
X_{23}\cup X_{34}$.  Finally, suppose that $u\in A$.  We note that for
any two vertices $x'\in X_{12}$ and $y'\in X_{23}$ the vertex $u$ is
either complete or anticomplete to $\{x',y'\}$, for otherwise
$G[u,v_1, x', y', v_3]$ contains an induced $C_4$.  The same holds for
any two vertices in $X_{23}$ and $X_{34}$.  It follows that $u$ is
either complete or anticomplete to $X$.  If $u$ is complete to $X$
then by Theorem~\ref{thm:c5}(\ref{c5-a}), $u$ is a universal vertex, a
contradiction.  If $u$ is anticomplete to $X$ then $\{v_1, \ldots,
v_5, x, y, z, u\}$ induces an $F_3$, a contradiction.  Thus
(\ref{f1-nowxs}) holds.

\medskip

By (\ref{f1-nowxs}) we have $V(G)=C\cup T_1\cup\cdots\cup T_5\cup
X_{12}\cup X_{23}\cup X_{34}$.  By Theorem~\ref{thm:c5}(\ref{c5-b}) we
know that $[T_5, T_2\cup T_3\cup X_{23}]=\emptyset$.  We claim that:
\begin{equation}\label{f1-t5}
\mbox{$[T_5, X_{12}\cup X_{34}]=\emptyset$, and $[T_5, T_1\cup T_4]$
is complete.}
\end{equation}
Proof: Pick any vertex $t_5\in T_5$.  Suppose up to symmetry that
$t_5$ has a neighbor $x'\in X_{12}$.  Then either $\{t_5,x',y,z\}$
induces a $C_4$ or $v_5$-$t_5$-$x'$-$v_2$-$v_3$-$z$ is a $P_6$, a
contradiction.  Now suppose up to symmetry that $t_5$ has a
non-neighbor $t_1\in T_1$.  Then $t_5$-$v_5$-$t_1$-$v_2$-$v_3$-$z$ is
a $P_6$ (since $t_1z\notin E(G)$ by Theorem~\ref{thm:c5}(\ref{c5-b})).
Thus (\ref{f1-t5}) holds.

\medskip

By Theorem~\ref{thm:c5}(\ref{c5-b}) we have $[T_1, T_3\cup T_4\cup
X_{34}]=\emptyset$ and $[T_4, T_1\cup T_2\cup X_{12}]=\emptyset$.  We
claim that:
\begin{equation}\label{f1-t1t4}
\mbox{$[T_1,X_{12}]$ and $[T_4,X_{34}]$ are complete.}
\end{equation}
Proof: If, up to symmetry, there are non-adjacent vertices $t_1\in
T_1$ and $x'\in X_{12}$, then either $\{t_1,v_1,x',y\}$ induces a $C_4$ or
$t_1$-$v_1$-$x'$-$y$-$z$-$v_4$ is a $P_6$.  Thus (\ref{f1-t1t4})
holds.

\medskip

By Theorem~\ref{thm:c5}(\ref{c5-b}) we have $[T_2, T_4]=\emptyset$ and
$[T_3,T_1]=\emptyset$.  We claim that:
\begin{equation}\label{f1-t2t3}
\longbox{$[T_2, X_{12}\cup X_{23}]$ and $[T_3, X_{23}\cup X_{34}]$ are
complete.  Moreover, every vertex in $T_2$ is complete either to $T_1$
or to $T_3$, and every vertex in $T_3$ is complete either to $T_2$ or
to $T_4$.}
\end{equation}
Proof: Up to symmetry pick any $t_2\in T_2$, $x'\in X_{12}$ and $y'\in
X_{23}$.  Then $t_2y'\in E(G)$, for otherwise either $\{t_2, v_2, y',
z\}$ induces a $C_4$ or $t_2$-$v_2$-$y'$-$z$-$v_4$-$v_5$ is a $P_6$.
Then $t_2x'\in E(G)$, for otherwise $\{t_2, y', x', v_1\}$ induces a
$C_4$.  This proves the first sentence of (\ref{f1-t2t3}).  Now
suppose that some $t_2\in T_2$ has a non-neighbor $t_1\in T_1$ and a
non-neighbor $t_3\in T_3$.  Then either $\{t_1, v_1, t_2, y\}$ induces
a $C_4$ or $t_1$-$v_1$-$t_2$-$y$-$t_3$-$v_4$ is a $P_6$.  Thus
(\ref{f1-t2t3}) holds.
\begin{equation}\label{f1-y}
\mbox{Every vertex in $X_{23}$ is anticomplete to $T_1$ or $T_4$.}
\end{equation}
Proof: If any $y'\in X_{23}$ has neighbors $t_1\in T_1$ and $t_4\in
T_4$, then $\{y, t_1, v_5, t_4\}$ induces a $C_4$.  Thus (\ref{f1-y})
holds.

By (\ref{f1-y}) there is a partition $Y_1, Y_4$ of $X_{23}$ such that
$[Y_1, T_4]=[Y_4, T_1]=\emptyset$.

\medskip

Now let $Q_i=\{v_i\}\cup T_i$ for each $i\in \{1,4,5\}$.  We observe
that the set $T_2\cup \{v_2\}\cup X_{12}\cup Y_1$ is a clique, because
each of $T_2\cup \{v_2\}$, $X_{12}$ and $Y_1$ and they are pairwise
complete as proved above.  Likewise $T_3\cup \{v_3\}\cup X_{34}\cup
Y_4$ is a clique.  Let $R_2=\{u\in T_2\cup \{v_2\}\cup X_{12}\cup
Y_1\mid u$ is complete to $Q_1\}$, and let $Q_2= (T_2\cup \{v_2\}\cup
X_{12}\cup Y_1)\setminus R_2$.  Likewise let $R_3=\{u\in T_3\cup
\{v_3\}\cup X_{34}\cup Y_4\mid u$ is complete to $Q_4\}$, and let
$Q_3= (T_3\cup \{v_3\}\cup X_{34}\cup Y_4)\setminus R_3$.  Note that
$\{v_2\}\cup X_{12}\subseteq R_2$ and $\{v_3\}\cup X_{34}\subseteq
R_3$ by (\ref{f1-t1t4}).  So $Q_2\subseteq T_2\cup Y_1$ and
$Q_3\subseteq T_3\cup Y_4$.  We observe that $[Q_2, Q_3]$ is complete
by (\ref{f1-t2t3}) and because $X_{23}$ is a clique.  Further, we
claim that:
\begin{equation}\label{f1-QR}
\mbox{$[Q_3, R_2]$ and $[Q_2, R_3]$ are complete.}
\end{equation}
Proof: Suppose that there are non-adjacent vertices $q\in Q_3$ and
$r\in R_2$.  Then $r\notin \{v_2\}\cup Y_1$, and so $r\in T_2\cup
X_{12}$, and $q$ has a non-neighbor $t\in T_4$.  If $r\in X_{12}$,
then $q$-$v_3$-$t$-$v_5$-$v_1$-$r$ is a $P_6$ (since $rt\notin E(G)$,
by Theorem~\ref{thm:c5}(\ref{c5-b})), a contradiction.  So $r\in T_2$.
Then since $qz\in E(G)$ (by (\ref{f1-t2t3})) and $\{q, z, r, v_2\}$
does not induce a $C_4$, $rz\notin E(G)$.  But then
$v_5$-$t$-$z$-$q$-$v_2$-$r$ is a $P_6$, a contradiction.  Thus
(\ref{f1-QR}) holds.

Moreover, by the definition of $Q_1,\ldots,Q_4,R_2$ and $R_3$, the
pairs $\{Q_1,Q_2\}$, $\{Q_2,Q_3\}$ and $\{R_2,R_3\}$ are graded.
Hence the sets $Q_1, \ldots, Q_5, R_2, R_3$ form a partition of $V(G)$
which shows that $G$ is a band.
\end{proof}

\subsection*{When there is a $C_6$ and no $F_1$}

Here we give the proof of the third item of
Theorem~\ref{thm:structure}, which we restate as follows.
\begin{theorem}\label{thm:c6nof1}
Let $G$ be a $(P_6, C_4)$-free graph that has no clique-cutset and no
universal vertex, and suppose that $G$ is $F_1$-free.  If $G$ contains
an induced $C_6$, then $G$ is a blowup of one of the graphs $H_1, H_2,
H_3, H_4$.
\end{theorem}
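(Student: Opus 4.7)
The plan is to fix an induced $C_6$ with vertices $v_1,v_2,\ldots,v_6$ and first dispose of the non-dominating case. By Theorem~\ref{thm:BH}(ii), if this $C_6$ is not dominating then $G$ is the join of a complete graph with a blowup of the Petersen graph; since $G$ has no universal vertex the complete-graph summand must be empty, so $G$ is a blowup of $H_1$ (the Petersen graph) and we are done. From here I assume that every induced $C_6$ of $G$ is dominating, which in particular forces $L=\emptyset$ in the partition supplied by Theorem~\ref{thm:c6}.

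Applying Theorem~\ref{thm:c6} to $v_1,\ldots,v_6$ gives the partition $V(G)=V(C)\cup A\cup B\cup D\cup S$ together with the adjacency constraints (\ref{c6-cliques})--(\ref{c6-subBi}). The $F_1$-free hypothesis is the tool for ruling out most configurations: any vertex $b\in B_i$ together with four vertices of $C$ produces an induced $C_5$ (for instance $b\in B_1$ gives the $C_5$\,\, $v_6$-$b$-$v_3$-$v_4$-$v_5$), and one can similarly produce induced $C_5$'s from a single $D$-, $A$-, or $S$-vertex. Further vertices of $V(G)\setminus C$ attach to these $C_5$'s as ``ears''; if three consecutive ears appear on such a $C_5$ we get a copy of $F_1$. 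Combining this with the pairwise restrictions (\ref{c6-BiDi}) and (\ref{c6-subBi}) on which $B_i$ and $D_i$ can be simultaneously non-empty, one obtains a short list of admissible patterns describing which of $A_1,\ldots,A_6,\,B_1,\ldots,B_6,\,D_1,\ldots,D_6,\,S$ are nonempty.

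For each such pattern I plan to show that each non-empty class is a homogeneous clique. The usual two ingredients suffice: within a single class, $C_4$-freeness forces comparable neighbourhoods (Lemma~\ref{lem:ab}), while any remaining discrepancy between two vertices of the same class produces an induced $P_6$ using $C$-vertices. Once each class is homogeneous, $G$ is a blowup of the quotient graph obtained from Theorem~\ref{thm:c6}, and case-by-case inspection identifies this quotient as one of $H_2,H_3,H_4$. The no-clique-cutset and no-universal-vertex hypotheses are used throughout to rule out degeneracies --- for example, to forbid a partition of $S$ that would yield a clique separator, or to exclude an $S$-vertex that is complete to everything.

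The main obstacle will be the case analysis itself: the adjacency relations of Theorem~\ref{thm:c6} permit a non-trivial combinatorial tree of possibilities for which $A_i$, $B_i$, $D_i$, $S$ are populated, and each branch demands its own carefully chosen induced path or cycle to produce a contradiction or enforce homogeneity. The most delicate step will be verifying that the patterns that survive the $F_1$-free filter correspond precisely to the three graphs $H_2,H_3,H_4$ (together with the already-handled Petersen case $H_1$), rather than a larger list. That the $F_1$-free hypothesis cuts exactly at the right place is the key observation that makes the classification work out cleanly.
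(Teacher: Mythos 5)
Your proposal follows essentially the same route as the paper: dispose of the non-dominating $C_6$ via Theorem~\ref{thm:BH}(ii), partition $V(G)$ by Theorem~\ref{thm:c6}, use $F_1$-freeness to settle the adjacencies that Theorem~\ref{thm:c6} leaves open, and then classify the surviving patterns as blowups of $H_1,\ldots,H_4$. The one thing worth making explicit is that the entire force of $F_1$-freeness here is the two facts $[A_i,A_{i+2}]=\emptyset$ and $[A_i,B_{i+1}]$ complete (each obtained exactly by your ``three consecutive ears on a $C_5$'' observation), after which every pair of classes is complete or anticomplete, so $G$ is automatically a blowup and the remaining case analysis on which $B_i$, $D_i$ are non-empty (via Theorem~\ref{thm:c6}(\ref{c6-BiDi}) and (\ref{c6-subBi}), with $S=\emptyset$ from the no-universal-vertex hypothesis) is short.
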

\begin{proof}
Let $C=\{v_1, v_2, \ldots, v_6\}$ be the vertex-set of a $C_6$ in $G$,
with edges $v_iv_{i+1}$ (mod $6$).  We use Theorem~\ref{thm:c6} with
the same notation.  If $C$ is not dominating, then by
Theorem~\ref{thm:BH} and since $G$ has no universal vertex, $G$ is a
blowup of the Petersen graph.  Therefore we may assume that $C$ is
dominating.  So $L=\emptyset$ and $V(G) = V(C) \cup A \cup B \cup D
\cup S$.  Moreover, since $G$ is $F_1$-free, we have $[A_i, A_{i+2}] =
\emptyset$ and $[A_i, B_{i+1}]$ is complete.  So, by
Theorem~\ref{thm:c6}, each of the sets $A_1\cup\{v_1\}, \ldots,
A_6\cup\{v_6\},$ $B_1, \ldots, B_6,$ $D_1, D_2, D_3, S$ is a clique
and that any two of them are either complete or anticomplete to each
other.  So $G$ is a blowup of some graph.  We now make this more
precise.  Since $G$ has no universal vertex, by
Theorem~\ref{thm:c6}(\ref{c6-cliques}) and (\ref{c6-edge-comp}), we
have $S = \emptyset$.  If $B = \emptyset$, then $G$ is a blowup of the
Petersen graph.  Now assume that $B \neq \emptyset$.  First, suppose
that two consecutive $B_j$'s are non-empty, say $B_i, B_{i+1} \neq
\emptyset$.  Then by Theorem~\ref{thm:c6}(\ref{c6-subBi}), $B_{i+3}
\cup B_{i+4} = \emptyset$, and by Theorem~\ref{thm:c6}(\ref{c6-BiDi})
$D=\emptyset$.  So again by Theorem~\ref{thm:c6}(\ref{c6-subBi}), $G$
is a blowup of $H_4$.  Next, suppose that no two consecutive $B_j$'s
are non-empty and let $B_i \neq \emptyset$.  Then $B_{i-1} = \emptyset
= B_{i+1}$ and by Theorem~\ref{thm:c6}(\ref{c6-BiDi}), $D_i =
\emptyset = D_{i+1}$.  Now, if $B_{i+3} \neq \emptyset$ or $B_{i+2}
\cup B_{i+4} = \emptyset$, then $G$ is a blowup of $H_2$, and if
$B_{i+3} = \emptyset$ and $B_{i+2} \cup B_{i+4} \neq \emptyset$, then
by Theorem~\ref{thm:c6}(\ref{c6-BiDi}), $D = \emptyset$, and so $G$ is
a blowup of $H_3$.
\end{proof}

\subsection*{When there is an $F_2$ and no $C_6$}

Here we give the proof of the fourth item of
Theorem~\ref{thm:structure}, which we restate it as follows.

\begin{theorem}\label{thm:f2}
Let $G$ be a $(P_6, C_4)$-free graph that has no clique-cutset and no
universal vertex, and suppose that $G$ is $C_6$-free.  If $G$ contains
an $F_2$, then $G$ is a blowup of either $H_5$ or $F_{k,\ell}$ for
some integers $k,\ell\ge 1$.
\end{theorem}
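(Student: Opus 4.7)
The plan is to reuse the framework developed for the earlier cases. Since $F_2$ contains an induced $C_5$, $G$ contains a $C_5$ on vertices $C=\{v_1,\ldots,v_5\}$. Applying Theorem~\ref{thm:c5}, together with the absence of a clique cutset (which by \ref{c5-f} gives $V(G)=C\cup A\cup T\cup W\cup X$), I obtain the standard partition. The $C_6$-freeness then immediately activates \ref{c5-d}: for each $i$, at most one of $X_{i,i+1}$ and $X_{i+1,i+2}$ is non-empty, at most one of $W_i$ and $W_{i+2}$ is non-empty, and $X_{i,i+1}$ and $W_{i-1}\cup W_{i+2}$ cannot both be non-empty.

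Next I would choose the $C_5$ so that the witnesses to the $F_2$-subgraph either lie on $C$ or sit in the adjacent sets $T_i, W_i, X_{i,i+1}$. The specific attachment pattern of $F_2$ (as opposed to $F_1$, which gives three \emph{consecutive} $X$-vertices attached to $C$, and which is already handled by Theorem~\ref{thm:f1nof3}) forces the non-empty outer sets to sit on $C$ in only a few possible patterns up to the dihedral symmetry. Since Theorem~\ref{thm:f1nof3} already disposes of the $F_1$-case (producing a band, not a blowup of $H_5$ or $F_{k,\ell}$), I may additionally assume that $G$ is $F_1$-free; combined with \ref{c5-d} and the hypotheses of no clique cutset and no universal vertex, this leaves only a short list of configurations to analyze. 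I would enumerate these configurations explicitly, identifying for each which $T_i$ and which $X_{i,i+1}$ (or $W_i$) are non-empty.

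For each surviving configuration I would compute all pairwise adjacencies between the non-empty sets using \ref{c5-a}--\ref{c5-e} of Theorem~\ref{thm:c5} together with ad hoc $C_4$- and $P_6$-freeness arguments of the same flavour as in Theorem~\ref{thm:f1nof3}. The expectation is that one family of configurations yields a blowup of $H_5$, while the remaining ones yield a blowup of $F_{k,\ell}$: the two cliques $A,B$ would arise from two ``opposite'' $T_i$-sets augmented by appropriate $C$-vertices; the stable sets $U,W$ from the surviving $X$- or $W$-sets; and the three vertices $x,y,z$ from the three remaining $C$-vertices. The perfect matchings $[A,U]$ and $[B,W]$ required by the definition of $F_{k,\ell}$ would be obtained by applying Lemma~\ref{lem:ab} to the graded pairs that arise between disjoint cliques in a $C_4$-free graph, and then using the no-universal-vertex and no-clique-cutset assumptions together with $P_6$-freeness to force these graded structures to be bijective matchings rather than more nested adjacencies.

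The main obstacle will be the case enumeration: even after pruning by \ref{c5-d} and the additional constraints, several configurations with multiple non-empty $T_i$'s must be checked individually, and verifying that each produces either $H_5$ or some $F_{k,\ell}$ will be delicate. A secondary obstacle is correctly handling the degenerate limits of $F_{k,\ell}$ (when either $k$ or $\ell$ is small), where the target structure may also be realizable as a blowup of $H_5$ or of some smaller $F_{k',\ell'}$; keeping the classification consistent across these boundary cases is exactly where the no-universal-vertex assumption will be crucial, since a degenerate $F_{k,\ell}$ typically collapses to a graph with a universal vertex unless the opposite clique is sufficiently rich.
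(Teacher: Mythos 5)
Your setup is right --- the paper also starts from the $C_5$ contained in the $F_2$, applies Theorem~\ref{thm:c5}, and uses the $C_6$-freeness item to kill $X_{23}\cup X_{45}\cup X_{51}$ and $W_2\cup W_3\cup W_5$ and one of $W_1,W_4$ --- but beyond that the proposal has two genuine problems. First, you ``additionally assume that $G$ is $F_1$-free'' on the grounds that Theorem~\ref{thm:f1nof3} disposes of the $F_1$-case. That theorem concludes that $G$ is a band, which is not the conclusion you are asked to prove here; the statement of Theorem~\ref{thm:f2} does not exclude $F_1$ (or $F_3$), so you must either prove the claim without that assumption (as the paper does --- its proof never invokes $F_1$-freeness) or show that the hypotheses actually forbid an $F_1$, which you do not do.

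Second, and more seriously, the mechanism you propose for producing the matchings $[A,U]$ and $[B,W]$ of $F_{k,\ell}$ is wrong. Those matchings join a clique to a \emph{stable set}, whereas Lemma~\ref{lem:ab} and gradedness concern two disjoint \emph{cliques} and yield nested (chain-ordered) neighborhoods, not matchings; no appeal to the no-universal-vertex or no-clique-cutset hypotheses will convert such a nested chain into a bijective matching. In the paper the $F_{k,\ell}$ structure comes from an entirely different source: the set $W_1$ of private neighbors of $v_1$. One proves that each component $Z$ of $W_1$ is a homogeneous clique, that each such $Z$ has neighbors in exactly one of $R_2$ and $X_{34}$ (its ``support''), and that distinct components have pairwise disjoint supports; the components then become the cliques $Q_{u_i}$ (resp.\ $Q_{w_j}$) of the blowup and their supports the cliques $Q_{a_i}$ (resp.\ $Q_{b_j}$), which is exactly the clique-versus-stable-set matching required. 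The dichotomy in the conclusion is likewise concrete: $W_1\cup W_4=\emptyset$ yields a blowup of $H_5$, while $W_1\neq\emptyset$ yields $F_{k,\ell}$ (after showing $A=\emptyset$ and $R_3=\emptyset$). Your proposal defers all of this to an unexecuted case enumeration, and the one concrete mechanism it does name would not deliver the structure.
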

\begin{proof}
Consider the graph $F_2$ as shown in Figure~\ref{fig:f123} and let
$C=\{v_1, \ldots, v_5\}$.  We use the same notation as in
Theorem~\ref{thm:c5}.  Note that $t\in T_5$, and $x\in X_{12}$ and
$y\in X_{34}$, so Theorem~\ref{thm:c5}(\ref{c5-e}) implies that the
sets $X_{23}$, $X_{45}$, $X_{15}$ and $W_2$, $W_3$, $W_5$ are all
empty, and one of $W_1, W_4$ is empty.  So $V(G)=C\cup
T_1\cup\cdots\cup T_5\cup X_{12}\cup X_{34} \cup A\cup W_1\cup W_4$.
We establish a number of properties.  (Some of them were also proved
in \cite[Proof of Lemma~4]{GH}.)
\newcounter{ici}
\begin{enumerate}[(i)]\setlength\itemsep{0pt}
\item\label{f2-1}
{\it Each vertex in $T_5$ is either complete or anticomplete to
$X_{12}\cup X_{34}$.  In particular, $t$ is complete to $X_{12}\cup
X_{34}$.} \\
Proof: Suppose that some vertex $t_5\in T_5$ is not complete and not
anticomplete to $X_{12}\cup X_{34}$.  It follows that $t_5$ has a
neighbor $x'\in X_{12}$ and a non-neighbor $y'\in X_{34}$, or
vice-versa.  Then $v_5$-$t_5$-$x'$-$v_2$-$v_3$-$y'$ is a $P_6$.
\item\label{f2-2}
{\it $X_{12}$ and $X_{34}$ are cliques.} \\
Proof: If, up to symmetry, $X_{12}$ contains two non-adjacent vertices
$x',x''$, then by (\ref{f2-1}), $\{t,x',x'',v_2\}$ induces a $C_4$.
\item\label{f2-3}
{\it Each vertex in $T_2$ is either complete or anticomplete to
$X_{12}$, and each vertex in $T_3$ is either complete or anticomplete
to $X_{34}$.} \\
Proof: If, up to symmetry, some vertex $t_2\in T_2$ has a neighbor
$x'$ and a non-neighbor $x''$ in $X_{12}$, then, by (\ref{f2-2}),
$x''$-$x'$-$t_2$-$v_3$-$v_4$-$v_5$ is a $P_6$.
\item\label{f2-4}
{\it $[T_2, X_{34}]=\emptyset$, and $[T_3, X_{12}]=\emptyset$.} \\
Proof: Suppose, up to symmetry, that there are adjacent vertices
$t_2\in T_2$ and $y'\in X_{34}$.  If $t_2t\in E(G)$ then
$\{t_2,v_3,v_4,t\}$ induces a $C_4$.  If $t_2t\notin E(G)$, then by
(\ref{f2-1}), $\{t_2,y',t,v_1\}$ induces a $C_4$.
\item\label{f2-5}
{\it $[T_1, T_2\cup T_5 \cup X_{12}]$ and $[T_4, T_3\cup T_5\cup
X_{34}]$ are complete.} \\
Proof: Suppose, up to symmetry, that some vertex $t_1\in T_1$ has a
non-neighbor $u\in T_2\cup T_5 \cup X_{12}$.  Recall that $t_1y\notin
E(G)$ by Theorem~\ref{thm:c5}(\ref{c5-b}).  Also, since
$\{v_5,t,y,v_3,v_2,t_1\}$ does not induce a $C_6$, $t_1t\in E(G)$.
Suppose that $u\in X_{12}$.  Then $\{t_1,t,u,v_2\}$ induces a $C_4$, a
contradiction.  In particular $t_1x\in E(G)$.  Now suppose that $u\in
T_5$ and $u\neq t$.  If $ux\in E(G)$, then $\{u,x,t_1,v_5\}$ induces a
$C_4$.  If $ux\notin E(G)$, then by (\ref{f2-1}), $uy\notin E(G)$, and
$u$-$v_5$-$t_1$-$v_2$-$v_3$-$y$ is a $P_6$.  Finally, if $u\in T_2$,
then by (\ref{f2-4}), we have $uy\notin E(G)$, and
$u$-$v_2$-$t_1$-$v_5$-$v_4$-$y$ is a~$P_6$.
\item\label{axx}
{\it $[A, X_{12}\cup X_{34}]$ is complete.} \\
Proof: If, up to symmetry, there are non-adjacent vertices $a\in A$
and $x'\in X_{12}$, then by Theorem~\ref{thm:c5}(\ref{c5-a}) and
(\ref{f2-1}) the set $\{a,t,x',v_2\}$ induces~a~$C_4$.
\setcounter{ici}{\value{enumi}}
\end{enumerate}

Now let:
\begin{eqnarray*}
Q_i &=& \{v_i\} \cup T_i \mbox{ for } i\in\{1,4\}, \\
Q_2 &=& \{v_2\}\cup \{u\in T_2\mid u \mbox{ is complete to } X_{12}\}
\mbox{ and } R_2 = T_2\sm Q_2, \\
Q_3 &=& \{v_3\}\cup \{u\in T_3\mid u \mbox{ is complete to }
X_{34}\} \mbox{ and } R_3 = T_3\sm Q_3, \\
Q_5 &=& \{u\in T_5\mid u \mbox{ is complete to } X_{12}\cup X_{34}\}
\mbox{ and } R_5 = \{v_5\} \cup (T_5\sm Q_5),
\end{eqnarray*}
Recall that, by Theorem~\ref{thm:c5}(\ref{c5-b}), $[T_i,
T_{i+2}]=\emptyset$, for all $i$.  Then:
\begin{enumerate}[(i)]\setlength\itemsep{0pt}\setcounter{enumi}{\value{ici}}
\item
{\it $[Q_2,R_3]$ and $[Q_3,R_2]$ are complete.} \\
Proof: If there are non-adjacent vertices $u\in Q_2$ and $r\in R_3$,
then $r$-$v_3$-$u$-$x$-$t$-$v_5$ is a $P_6$.  The proof is similar for
$[Q_3,R_2]$.
\item
$[R_2, R_3]=\emptyset$. \\
Proof: If $r_2\in R_2$ and $r_3\in R_3$ are adjacent then
$x$-$v_1$-$r_2$-$r_3$-$v_4$-$y$ is a $P_6$.
\setcounter{ici}{\value{enumi}}
\end{enumerate}

Suppose that $W_1\cup W_4=\emptyset$.  By (\ref{axx}) and
Theorem~\ref{thm:c5}(\ref{c5-a}), $[A,V(G)\setminus A]$ is complete
and $A$ is a clique; since $G$ has no universal vertex, we deduce that
$A=\emptyset$.  Then $V(G)$ is partitioned into the ten cliques $Q_1$,
$Q_2$, $Q_3$, $Q_4$, $Q_5$, $R_5$, $X_{12}$, $R_2$, $R_3$, $X_{34}$,
and any two of them are either complete or anticomplete to each other,
and the adjacencies proved above show that $G$ is a blowup of $H_5$.

Therefore let us assume that $W_1\cup W_4\neq\emptyset$.  By
Theorem~\ref{thm:c5}(\ref{c5-d}) one of $W_1$ and $W_4$ is empty.  Up
to symmetry, let us assume that $W_1\neq\emptyset$ and
$W_4=\emptyset$.  Hence $V(G)=Q_1\cup\cdots\cup Q_5\cup R_2\cup
R_3\cup R_5\cup X_{12}\cup X_{34}\cup W_1\cup A$.  Recall that every
induced $C_5$ in $G$ is dominating, by Theorem~\ref{thm:BH}.  Then:
\begin{enumerate}[(i)]\setlength\itemsep{0pt}\setcounter{enumi}{\value{ici}}
\item\label{w1m}
{\it $[W_1,Q_1]$ is complete, and $[W_1, Q_3\cup R_3\cup Q_4\cup
X_{12}]=\emptyset$.} \\
This follows directly from Theorem~\ref{thm:c5}(\ref{c5-b})--(\ref{c5-e}).
\item\label{wq5}
{\it $[W_1, Q_5]$ is complete.} \\
Proof: If any $w\in W_1$ and $u\in Q_5$ are non-adjacent, then either
$\{w,v_1,u,y\}$ induces a $C_4$ or $\{u,x,v_2,v_3,y\}$ is a
non-dominating $C_5$ by (\ref{w1m}).
\item
$[W_1,Q_2\cup R_5]=\emptyset$. \\
Proof: Suppose that $w\in W_1$ and $u\in Q_2\cup R_5$ are adjacent.
If $u\in Q_2$, then, since $t\in Q_5$ and by (\ref{w1m}) and
(\ref{wq5}), $\{w,t,x,u\}$ induces a $C_4$.  If $u\in R_5$, then
$w$-$u$-$v_4$-$v_3$-$v_2$-$x$ is a $P_6$ by (\ref{w1m}).
\item\label{wr3}
$R_3=\emptyset$.  \\
Proof: Pick any $w\in W_1$.  If there is any vertex $r\in R_3$, then
$\{w,v_1,v_2,r,v_4,$ $y\}$ induces a $P_6$ or a $C_6$  by (\ref{w1m}).
\item\label{uh}
{\it Each component $Z$ of $W_1$ is homogeneous in $G\sm A$.} \\
Proof: Otherwise, there are adjacent vertices $z,z'\in Z$ and a vertex
$u\notin W_1\cup A$ adjacent to $z$ and not to $z'$.  By the preceding
points $u$ is in $R_2\cup X_{34}$.  If $u\in R_2$, then
$z'$-$z$-$u$-$v_3$-$v_4$-$v_5$ is a $P_6$.  If $u\in X_{34}$, then
$z'$-$z$-$u$-$v_3$-$v_2$-$x$ is a $P_6$ by (\ref{w1m}).
\item\label{un}
{\it Each component $Z$ of $W_1$ has either a neighbor in $R_2$ and no
neighbor in $X_{34}$, or a neighbor in $X_{34}$ and no neighbor in
$R_2$.} \\
Proof: If $Z$ has no neighbor in $R_2\cup X_{34}$, then by the
preceding points we have $N(Z)=Q_1\cup Q_5\cup A'$ for some
$A'\subseteq A$, and so $N(Z)$ is a clique by Theorem~\ref{thm:c5},
contradicting the hypothesis that $G$ has no clique cutset.  On the
other hand if $Z$ has a neighbor $r\in R_2$ and a neighbor $u\in
X_{34}$, then by (\ref{uh}) for any $z\in Z$ we see that
$\{z,r,v_3,u\}$ induces a $C_4$.
\item\label{zcl}
{\it Each component $Z$ of $W_1$ is a clique.} \\
Proof: Suppose that $Z$ contains non-adjacent vertices $z,z'$.  By
(\ref{uh}) and~(\ref{un}) $z$ and $z'$ have a common neighbor $u$ in
$R_2\cup X_{34}$.  Then $\{z,u,z',t\}$ or $\{z,u,z',v_1\}$ induces a
$C_4$.
\item\label{z1z2}
{\it If $Z, Z'$ are distinct components of $W_1$, then $N(Z)\cap
N(Z')\cap (R_2\cup X_{34})=\emptyset$.} \\
(Otherwise there is a $C_4$ as in the proof of (\ref{zcl}).)
\item\label{a0}
$A=\emptyset$.  \\
Proof: Suppose that there exists $a\in A$.  Since $G$ has no universal
vertex, there is a non-neighbor $z$ of $a$.  By
Theorem~\ref{thm:c5}(\ref{c5-a}) and by (\ref{axx}) we have $z\in
W_1$.  By (\ref{uh}) and~(\ref{un}) $z$ has a neighbor $u\in R_2\cup
X_{34}$.  But then $\{a,u,z,t\}$ or $\{a,u,z,v_1\}$ induces a $C_4$.
\end{enumerate}

By (\ref{wr3}) and (\ref{a0}) we have $V(G)=Q_1\cup\cdots\cup Q_5\cup
R_2\cup R_5\cup X_{12}\cup X_{34}\cup W_1$.  Now it is a routine
matter to check that $G$ is a blowup of $F_{k,\ell}$ for some
$k,\ell\ge 1$.  We clarify this point by exhibiting the mapping
$Q_v\rightarrow v$ of the definition of a blowup, as follows.  If $Z$
is any component of $W_1$, we say that it is an $R_2$-component
(resp.~$X_{34}$-component) if it has a neighbor in $R_2$ (resp.~in
$X_{34}$), and we call the set $N(Z)\cap R_2$ (resp.~$N(Z)\cap
X_{34}$) the \emph{support} of $Z$.  By (\ref{un}) and (\ref{z1z2})
the supports are non-empty and pairwise disjoint.  Let $Z_1, Z_2,
\ldots, Z_p$ be the $R_2$-components of $W_1$, and let $Z_1', Z_2',
\ldots, Z_q'$ be the $X_{34}$-components of $W_1$.  Let $k = p +1$ and
$\ell=q+1$.  Then:
\begin{itemize}\setlength\itemsep{0pt}
\item
$Z_i\rightarrow u_i$ and $N(Z_i)\cap R_2\rightarrow a_i$ for each $i
\in \{1,2\ldots, p\}$, and $X_{12}\rightarrow u_{p+1}$ and
$Q_2\rightarrow a_{p+1}$, and $R_2 \setminus \cup_{i=1}^p (N(Z_i)\cap
R_2) \rightarrow a_0$.
\item
$Z_j'\rightarrow w_j$ and $N(Z_j')\cap X_{34}\rightarrow b_j$ for each
$j \in \{1,2\ldots, q\}$, and $R_5\rightarrow w_{q+1}$ and
$Q_4\rightarrow b_{q+1}$, and $X_{34} \setminus \cup_{j=1}^q
(N(Z_j')\cap X_{34}) \rightarrow b_0$.
\item $Q_1\rightarrow x$, $Q_5\rightarrow y$, and $Q_3\rightarrow z$.
\end{itemize}
Since the components of $W_1$ and their supports are cliques, we see
that $G$ is a blowup of $F_{k,\ell}$.  This completes the proof of the
theorem.
\end{proof}

\subsection*{When there is a $C_5$, no $C_6$ and no $F_2$}

Here we give the proof of the last item of
Theorem~\ref{thm:structure}.
\begin{theorem}\label{thm:final}
Let $G$ be a $(P_6, C_4)$-free graph that has no clique-cutset and no
universal vertex, and suppose that $G$ is $C_6$-free and $F_2$-free.
If $G$ contains a $C_5$, then $G$ is either a belt or a boiler.
\end{theorem}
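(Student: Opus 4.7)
The plan is to use the $C_5$ together with Theorem~\ref{thm:c5} to obtain a partition of $V(G)$, and then exploit $C_6$- and $F_2$-freeness together with the absence of clique cutsets and universal vertices to arrive at a belt or boiler decomposition.

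Fix a $C_5$ on vertices $v_1,\ldots,v_5$ and use the notation of Theorem~\ref{thm:c5}. By $C_6$-freeness and Theorem~\ref{thm:c5}(\ref{c5-d}), at most one of each adjacent pair of $X$-sets is non-empty, opposite $W$-sets cannot both be non-empty, and $X_{i,i+1}$ is incompatible with $W_{i-1}\cup W_{i+2}$. Arguing as in property (\ref{f2-1}) of Theorem~\ref{thm:f2} via a short $P_6$ argument, every $T_i$-vertex is complete or anticomplete to each $X$-set it can possibly meet; combined with $F_2$-freeness this forces, whenever two non-adjacent $X$-sets are both non-empty, the opposite $T_i$-set to be anticomplete to their union (otherwise a $C_5$, a $T_i$-vertex, and representatives of the two $X$-sets induce an $F_2$). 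One then proceeds by a case analysis in the spirit of Theorem~\ref{thm:f2}, defining candidate partition classes by refining each $T_i$ into a ``complete'' part and a ``non-complete'' part with respect to the adjacent $X$- or $T$-sets, and verifying every adjacency clause by a $C_4$- or $P_6$-contradiction.

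The dichotomy between belt and boiler is governed by how the components of $W$ attach to the rest of $G$ via their supports in $T\cup X\cup A$. When $W$ has at most two distinct supports, the configuration is a belt: taking $Q_i=\{v_i\}\cup T_i$ for the appropriate indices, and carving $R_2,R_3$ out of the middle-layer vertices that fail to be complete to the $Q_1\cup Q_4$ side, exactly as in the construction at the end of the proof of Theorem~\ref{thm:f1nof3}, one verifies the four belt axioms directly. When $W$ has at least three pairwise distinct supports, those components become $M_1,\ldots,M_k$ with $k\ge 3$ of a boiler, their supports become $B_1,\ldots,B_k$, one $C_5$-vertex together with certain $T_i$-vertices becomes the central clique $Q$, and the remaining vertices split naturally into $A$ (the rest of the $T$- and $X$- and $A$-vertices of Theorem~\ref{thm:c5}) and $L$ (the $W$-components anticomplete to $M$); the $(P_4,2P_3)$-freeness of $G[L]$ and $G[M]$ demanded by the boiler definition then follows by applying Lemma~\ref{lem:chxab} to the appropriate triple.

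The main obstacle is the bookkeeping of this case analysis: each case requires checking every adjacency clause in the belt or boiler definition, most delicately the gradedness conditions of the belt and the asymmetric conditions that $[A,M_1\cup B_1\cup M_2\cup B_2]$ be complete while no vertex of $A$ is complete to all of $B$ in the boiler. A subtle point is showing that the parameter $k$ in the boiler case must be at least $3$ and not just $2$; if $k=2$ the would-be partition either reduces to a belt or exhibits a clique cutset of the form $Q\cup A\cup B_1$ separating the two branches, and it is precisely the no-clique-cutset hypothesis that rules the latter out and therefore produces the dichotomy.
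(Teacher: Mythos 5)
Your overall plan (start from Theorem~\ref{thm:c5}, exploit $C_6$- and $F_2$-freeness to kill most of the $X$- and $W$-sets, then split into a belt case and a boiler case) matches the paper's strategy, but the criterion you propose for the belt/boiler dichotomy is not the right one, and the identification of the boiler parts is wrong. In the paper's proof (after normalizing so that $X=X_{12}\cup X_{34}$ and $W=W_1$), the dichotomy is governed by whether the set $X_{34}^N$ --- the vertices of $X_{34}$ that have a neighbor in $W_1$ but none in $T_2$ --- is empty; it has nothing to do with counting the distinct supports of the components of $W$. Indeed, in the belt case $W_1$ may have arbitrarily many components (all of $\{v_2\}\cup T_2\cup X_{12}\cup W_1$ is simply poured into $Q_2\cup R_2$, and $R_2$ need not be a clique), while in the boiler case the condition $k\ge 3$ holds automatically because $M=\{v_2,v_5\}\cup T_2\cup T_5\cup X_{12}\cup W_1$ already contains the three pairwise anticomplete non-empty pieces $\{v_2\}\cup X_{12}$, $\{v_5\}\cup T_5$ and $W_1$ --- so your worry about ruling out $k=2$ via a clique cutset is solving a problem that does not arise, and your claim that $L$ consists of ``the $W$-components anticomplete to $M$'' is incorrect: in the paper $L=X_{34}^0$ (a subset of $X_{34}$), and \emph{all} of $W_1$ lies inside $M$. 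The genuinely hard step of the boiler verification --- that every vertex of $L=X_{34}^0$ has a neighbor in $A$ --- requires introducing $A_0$, producing a vertex $x_0\in X_{34}^N$ anticomplete to $A_0$ with a neighbor $w_0\in W_1$, proving $G[X_{34}^0\cup A_0]$ chordal, and invoking Lemma~\ref{lem:chxa} plus a delicate connectivity/clique-cutset argument; none of this appears in your outline.

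Two further gaps. First, your belt construction carves $R_2,R_3$ ``out of the middle-layer vertices that fail to be complete to the $Q_1\cup Q_4$ side,'' but the belt axioms require $[Q_1,Q_2\cup R_2]$ to be \emph{complete}, so $R_2$ cannot be characterized that way; the paper instead takes $Q_j$ to be the universal vertices of the middle-layer subgraph and must then prove the non-obvious fact that $Q_2\neq\emptyset$. Second, you never choose the $C_5$ to minimize $|T|$; this choice is what yields property~(\ref{ff-tic}) of the paper's proof (completeness of $[T_i,T_{i\pm1}]$ when the relevant $X$-sets are empty), which in turn is needed for $[Q_1\cup Q_4,Q_5]$ to be complete and for the $Q_i$ to behave as the belt axioms demand. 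Without that normalization several of the completeness claims you defer to ``bookkeeping'' would simply be false for an arbitrary $C_5$ in $G$.
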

\begin{proof}
Let $C=\{v_1, \ldots, v_5\}$ be the vertex-set of a $C_5$ in $G$ with
edges $v_iv_{i+1}$ (mod $5$).  We use the same notation as in
Theorem~\ref{thm:c5}.    We
choose $C$ such that $|T|$ is minimized.  Remark that since $G$ is
$(P_6, C_4, C_6)$-free every hole in $G$ has length $5$ and is
dominating by Theorem~\ref{thm:BH}.  We establish a number of
properties.  (Some of them were also proved in \cite[Lemma~5]{GH}.)
\begin{enumerate}[(i)]\setlength\itemsep{0pt}
\item\label{ff-tic}
{\it If $X_{i-2,i-1}\cup X_{i+1,i+2}=\emptyset$, then $T_i$ is complete to
$T_{i-1}\cup T_{i+1}$.}  \\
Proof: Up to symmetry let $i=1$ and suppose that $X_{23}\cup
X_{45}=\emptyset$ and that some vertex $t_1\in T_1$ has a non-neighbor
$t_2\in T_2$.  Let $C'=\{t_1,v_2,v_3,v_4,v_5\}$.  So $C'$ induces a
$C_5$, and $t_2$ has only two neighbors on it, so the choice of $C$
(minimizing $|T|$) implies the existence of a vertex that has three
neighbors on $C'$ and two on $C$.  Such a vertex must be in
$X_{23}\cup X_{45}$, a contradiction.
\item\label{ff-zti1}
{\it Every component $Z$ of $W_i$ is anticomplete to one of $T_{i-1},
T_{i+1}$.} \\
Proof: Let $i=1$ and suppose that there are vertices $z,z'\in Z$ such
that $z$ has a neighbor $t_2\in T_2$ and $z'$ has a neighbor $t_5\in
T_5$.  If we can choose $z=z'$, then $C\cup\{z,t_2,t_5\}$ induces an
$F_2$.  Otherwise let $P$ be a shortest path between $z$ and $z'$ in
$G[Z]$.  Then $V(P)\cup\{t_2,v_2,t_5,v_5\}$ contains an induced $P_6$.
\item\label{ff-zti2}
{\it For every component $Z$ of $W_i$, every vertex of $T_{i-1}\cup
T_{i+1}$ is either complete or anticomplete to $Z$.} \\
Proof: Let $i=1$.  Suppose that $y,z$ are adjacent vertices in $Z$ and
that some vertex $t_2\in T_2$ is adjacent to $y$ and not to $z$.  By
(\ref{ff-zti1}) $[Z,T_5]=\emptyset$.  Then
$z$-$y$-$t_2$-$v_3$-$v_4$-$v_5$ is a $P_6$.
\item\label{ff-wix}
{\it Every vertex in $W_i$ has a neighbor in $X_{i-2,i+2}$.  In
particular if $W_i\neq\emptyset$, then $X_{i-2,i+2}\neq\emptyset$.}
\\
Proof: Let $i=1$ and suppose that some vertex of $W_1$ has no neighbor
in $X_{34}$.  Let $Z$ be the component of $W_1$ that contains this
vertex.  By (\ref{ff-zti1}) we may assume that $[Z,T_5]=\emptyset$.
Let $Z_0=\{z\in Z\mid z$ has no neighbor in $X_{34}\}$, so
$Z_0\neq\emptyset$.  Let $Y_0$ be a component of $G[Z_0]$, and let
$Y_1=N(Y_0)\cap (Z\setminus Z_0)$ and $Y_2=N(Y_0)\cap T_2$, and
$A_0=N(Y_0)\cap A$.  By Theorem~\ref{thm:c5} and since
$[Z,T_5]=\emptyset$ we have $N(Y_0)=\{v_1\}\cup T_1\cup Y_1\cup
Y_2\cup A_0$ and $Y_0$ is complete to $\{v_1\}\cup T_1$, and by
(\ref{ff-zti2}) $Y_2$ is complete to $Y_0$.  Suppose that some vertex
$y\in Y_1$ is not complete to $Y_0$.  Then there are adjacent vertices
$y_0,z_0\in Y_0$ and a vertex $x\in X_{34}$ such that
$z_0$-$y_0$-$y$-$x$-$v_4$-$v_5$ is a $P_6$.  Hence $Y_0$ is complete
to $N(Y_0)\setminus A_0$.  Since $G$ has no clique-cutset, there are
non-adjacent vertices $u,v\in N(Y_0)$.  By Theorem~\ref{thm:c5} and
(\ref{ff-zti2}) we know that $[Y_1\cup A_0, \{v_1\}\cup T_1\cup Y_2]$
is complete, so we have either (a) $u,v\in Y_1$, or (b) $u\in Y_1$ and
$v\in A_0$, or (c) $u\in T_1$ and $v\in Y_2$.  Pick any $y_0\in Y_0$.
In case (a), by the definition of $Z_0$ there are vertices $x,x'\in
X_{34}$ such that $xu, x'v\in E(G)$.  If we can choose $x=x'$, then
$\{x,u,y_0,v\}$ induces a $C_4$; and in the opposite case either $\{x,
x', u, v, y_0\}$ induces a non-dominating $C_5$ (if $xx'\in E(G)$),
because $v_5$ has no neighbor in it, or $\{y_0, u, v, x, x', v_4\}$
induces a $C_6$, a contradiction.  In case (b) we may choose $y_0$
adjacent to $v$.  By the definition of $Z_0$, $u$ has a neighbor $x\in
X_{34}$.  Then $vx\notin E(G)$, for otherwise $\{v,x,u,y_0\}$ induces
a $C_4$.  But then $\{v_1, v_3, v_4, v_5, x, y_0, u, v\}$ induces an
$F_2$.  In case (c), $\{y_0,u,v_2,v\}$ induces a $C_4$.
\item\label{ff-xiitt}
{\it $X_{i+2,i-2}$ is anticomplete to one of $T_{i-1}, T_{i+1}$.} \\
Proof: Let $i=1$ and suppose that there are vertices $x,y\in X_{34}$
such that $x$ has a neighbor $t_2\in T_2$ and $y$ has a neighbor
$t_5\in T_5$.  Then $xt_5\notin E(G)$, for otherwise
$\{v_1,t_2,x,t_5\}$ induces a $C_4$; and similarly $yt_2\notin E(G)$.
Moreover $xy\notin E(G)$, for otherwise
$v_2$-$t_2$-$x$-$y$-$t_5$-$v_5$ is a $P_6$.  But then
$\{v_1,v_2,v_3,v_4,x,y,t_2,t_5\}$ induces an $F_2$.
\item\label{ff-xiitx}
{\it If $[X_{i+2,i-2},T_{i+1}]\neq\emptyset$ then
$X_{i-1,i}=\emptyset$.  Likewise if
$[X_{i+2,i-2},T_{i-1}]\neq\emptyset$ then $X_{i,i+1}=\emptyset$.} \\
Proof: Let $i=1$, and suppose that some vertex $x\in X_{34}$ has a
neighbor $t\in T_2$ and that there is a vertex $y\in X_{51}$.  Then
$xy\notin E(G)$, for otherwise $\{x,v_4,v_5,y\}$ induces a $C_4$, and
$ty\in E(G)$, for otherwise $v_2$-$t$-$x$-$v_4$-$v_5$-$y$ is a $P_6$;
but then $C\cup\{t,x,y\}$ induces an $F_2$.
\item\label{ff-xiitx2}
{\it Every vertex in $X_{i+2,i-2}$ that has a neighbor in $T_{i+1}$ is
complete to $T_{i-2}$.} \\
Proof: Let $i=1$, and suppose that some vertex $x\in X_{34}$ has a
neighbor $t\in T_2$ and that $x$ is not adjacent to a vertex $y\in
T_4$.  Then by Theorem~\ref{thm:c5}(\ref{c5-b}), $ty\notin E(G)$.  But
then $C\cup\{t,x,y\}$ induces an $F_2$.
\item\label{ff-wxiit}
{\it If $W_i\neq \emptyset$, then $[X_{i+2,i-2}, T_{i+2}\cup T_{i-2}]$
is complete.} \\
Proof: Let $i=1$, and suppose that, up to symmetry, there are
non-adjacent vertices $x\in X_{34}$ and $t\in T_3$ and that there is a
vertex $w\in W_1$.  Then $\{w,v_1,v_2,t,v_4,x\}$ induces a $P_6$ or a
$C_6$.  \setcounter{ici}{\value{enumi}}
\end{enumerate}

Suppose that $X=\emptyset$.  Then (\ref{ff-wix}) implies that
$W=\emptyset$, so $V(G)= C\cup T\cup A$.  Moreover $A=\emptyset$, for
otherwise any vertex in $A$ is universal in $G$, by
Theorem~\ref{thm:c5}(\ref{c5-a}); and (\ref{ff-tic}) implies that
$[T_i, T_{i+1}]$ is complete for all $i$.  So $G$ is a blowup of
$C_5$, which is a special case of a belt.

\medskip

Now assume that $X\neq\emptyset$, say $X_{34}\neq\emptyset$.  By
Theorem~\ref{thm:c5}(\ref{c5-d}) and by symmetry, we may assume that
$X_{23}\cup X_{45}\cup X_{51}=\emptyset$, so $X=X_{12}\cup X_{34}$,
and consequently, by (\ref{ff-xiitt}) and (\ref{ff-xiitx}) and up to
symmetry, that $[X_{34}, T_5]=\emptyset$ and $[X_{12},
T_5]=\emptyset$.  By (\ref{ff-wix}) we have $W=W_1\cup W_4$, and by
Theorem~\ref{thm:c5}(\ref{c5-e}) one of $W_1,W_4$ is empty, so, still
up to symmetry, we may assume that $W_4=\emptyset$.  Let:
\begin{eqnarray*}
W_1^T &=& \{w\in W_1\mid w \mbox{ has a neighbor in } T_2\},  \\
W_1^N &=& \{w\in W_1\mid w \mbox{ has no neighbor in } T_2\},  \\
X_{34}^T &=& \{x\in X_{34}\mid x \mbox{ has a neighbor in } T_2\}, \\
X_{34}^N &=& \{x\in X_{34}\mid x \mbox{ has no
neighbor in  $T_2$ and has a neighbor in } W_1\}, \\
X_{34}^0 &=& \{x\in X_{34}\mid x \mbox{ has no
neighbor in } T_2\cup W_1\}, \\
X_{34}^W &=& \{x\in X_{34}\mid x \mbox{ has a neighbor in } W_1\}.
\end{eqnarray*}
Clearly $W_1=W_1^T\cup W_1^N$ and $X_{34}=X_{34}^T\cup X_{34}^N\cup
X_{34}^0$.  Moreover we have $X_{34}^N\subseteq X_{34}^W \subseteq
X_{34}^N \cup X_{34}^T$.  Recall that $[W_1,T_1]$ is complete and that
$[W_1, T_3\cup T_4\cup X_{12}]=\emptyset$ by
Theorem~\ref{thm:c5}(\ref{c5-b})--(\ref{c5-e}).  By (\ref{ff-tic}),
$[T_1, T_2\cup T_5]$ and $[T_4, T_3\cup T_5]$ are complete.  We
establish some additional facts.
\begin{enumerate}[(i)]\setlength\itemsep{0pt}\setcounter{enumi}{\value{ici}}
\item\label{ff-w1t5}
$[W_1, T_5]=\emptyset$. \\
Proof: Suppose that $w\in W_1$ and $t\in T_5$ are adjacent.  By
(\ref{ff-wix}) $w$ has a neighbor $x\in X_{34}$.  Since $[X_{34},
T_5]=\emptyset$, we see that $v_5$-$t$-$w$-$x$-$v_3$-$v_2$ is a $P_6$.
\item\label{ff-w1tn}
$[W_1^T, W_1^N]=\emptyset$.  \\
This follows directly from (\ref{ff-zti2}).
\item\label{ff-wxt}
{\it For every edge $wx$ with $w\in W_1$ and $x\in X_{34}$, every
vertex $u$ in $T_2$ is either complete or anticomplete to $\{w,x\}$.
Hence $[W_1^T, X_{34}^N]=\emptyset$ and $[W_1^N, X_{34}^T]=\emptyset$.
Also every vertex $u$ in $A$ is either complete or anticomplete to
$\{w,x\}$.} \\
Proof: In the opposite case there is a $C_4$ in
$G[\{v_1,w,x,v_3,u\}]$.
\item\label{ff-ax34tw1t}
{\it $[A, X_{34}^T \cup W_1^T]$ is complete.} \\
Proof: Consider any $a\in A$.  First pick any $x\in X_{34}^T$, so $x$
has a neighbor $t\in T_2$.  Then $at\in E(G)$ by
Theorem~\ref{thm:c5}(\ref{c5-a}), and $ax\in E(G)$, for otherwise
$\{a,t,x,v_4\}$ induces a $C_4$.  Now pick any $w\in W_1^T$.  So $w$
has a neighbor $t\in T_2$ and, by (\ref{ff-wix}), a neighbor $x\in
X_{34}$.  Then $xt\in E(G)$ by (\ref{ff-wxt}), so $x\in X_{34}^T$, and
$ax\in E(G)$ by the preceding point of this claim.  Then $aw\in E(G)$,
for otherwise $\{a,v_1,w,x\}$ induces a~$C_4$.
\item\label{ff-xw1}
{\it Any vertex $x\in X_{34}^W$ is complete to $(X_{34}\sm x)\cup
T_3\cup T_4$.} \\
Proof: Suppose up to symmetry that $x$ has a non-neighbor $y\in
(X_{34}\sm x)\cup T_3$.  Let $w\in W_1$ be any neighbor of $x$.  Then
either $\{w,x,y,v_3\}$ induces a $C_4$ (if $wy\in E(G)$) or
$v_5$-$v_1$-$w$-$x$-$v_3$-$y$ is a $P_6$.
 \item\label{ff-x12t3}
{\it Every vertex in $X_{12}$ has a neighbor in $T_3$.} \\
Proof: Suppose on the contrary that the set $Z=\{z\in X_{12}\mid z$
has no neighbor in $T_3\}$ is non-empty, and let $Y$ be the vertex-set
of a component of $G[Z]$.  Let $Y'=N(Y)\cap X_{12}$, $T'_1=N(Y)\cap
T_1$, $T'_2=N(Y)\cap T_2$, and $A'=N(Y)\cap A$.  By
Theorem~\ref{thm:c5} and the current assumption we have
$N(Y)=\{v_1,v_2\}\cup Y'\cup T'_1\cup T'_2\cup A'$.  Suppose that some
vertex $u\in Y'\cup T'_1\cup T'_2$ is not complete to $Y$; so there
are adjacent vertices $y,z\in Y$ with $uy\in E(G)$ and $uz\notin
E(G)$.  If $u\in Y'$, then $u\in X_{12}\setminus Z$, so $u$ has a
neighbor $t\in T_3$, and then $z$-$y$-$u$-$t$-$v_4$-$v_5$ is a $P_6$.
If $u\in T'_1$, then $z$-$y$-$u$-$v_5$-$v_4$-$v_3$ is a $P_6$.  The
proof is similar if $u\in T'_2$.  Hence $Y$ is complete to
$\{v_1,v_2\}\cup Y'\cup T'_1\cup T'_2$.  Since $G$ has no clique
cutset, the set $N(Y)$ contains two non-adjacent vertices $u,v$.  By
Theorem~\ref{thm:c5}, and since $[T_1,T_2]$ is complete, and up to
symmetry, we have $u\in Y'$ and $v\in Y'\cup T'_1\cup T'_2\cup A'$.
So $u$ has a neighbor $t\in T_3$.  Pick any $y\in Y$.  If $v\in Y'$,
then $v$ has a neighbor $s\in T_3$, and either $\{y,u,v,s\}$ induces a
$C_4$ (if we can choose $s=t$) or $\{y,u,v,s,t\}$ induces $C_5$ that
does not dominate $v_5$.  If $v\in T'_1$, then $\{y,u,t,v_4,v_5,v\}$
induces a $C_6$.  If $v\in T'_2$, then either $\{y,u,t,v\}$ induces a
$C_4$, or $\{y,u,t,v_3,v\}$ induces a $C_5$ that does not dominate
$v_5$.  If $v\in A'$, then we can choose $y$ adjacent to $v$, and then
$\{y,u,t,v\}$ induces a $C_4$.
\item
{\it $[X_{12}, T_1]$ is complete.} \\
Proof: This follows from (\ref{ff-x12t3}) and (\ref{ff-xiitx2}).
 \item\label{ff-ax12}
{\it $[X_{12}, A]$ is complete.} \\
Proof: Pick any $a\in A$ and $x\in X_{12}$. By (\ref{ff-x12t3}) $x$ has
a neighbor $t\in T_3$. We have $at\in E(G)$ by Theorem~\ref{thm:c5},
and $ax\in E(G)$, for otherwise $\{a,t,x,v_1\}$ induces a $C_4$.
\item\label{ff-x34z}
{\it For any component $Z$ of $G[X_{34}^0]$ the set $[Z, N(Z)\setminus
A]$ is complete and $N(Z)\setminus A$ is a clique.} \\
Proof: Let $Z$ be (the vertex-set of) a component of $G[X_{34}^0]$.
Then $N(Z)\setminus A\subseteq \{v_3, v_4\}\cup T_3\cup T_4\cup
X_{34}^N\cup X_{34}^T$.  First suppose that $[Z, N(Z)\setminus A]$ is
not complete.  So there are adjacent vertices $y,z\in Z$ and a vertex
$u\in N(Z)\setminus A$ with $uy\in E(G)$ and $uz\notin E(G)$.  Clearly
$u\notin \{v_3,v_4\}$.  If $u\in X_{34}^N\cup X_{34}^T$, then, by
(\ref{ff-xw1}) $u$ has no neighbor in $W_1$, so $u$ has a neighbor
$t\in T_2$, and then $z$-$y$-$u$-$t$-$v_1$-$v_5$ is a $P_6$.  If $u\in
T_3$, then $z$-$y$-$u$-$v_2$-$v_1$-$v_5$ is a $P_6$.  If $u\in T_4$
then $z$-$y$-$u$-$v_5$-$v_1$-$v_2$ is a $P_6$, a contradiction.  Now
suppose that $N(Z)\setminus A$ is not a clique, so it contains two
non-adjacent vertices $u,v$.  Pick any $z\in Z$.  By
Theorem~\ref{thm:c5} and since $[T_4,T_3]$ is complete we have either
(a) $u,v\in X_{34}^N\cup X_{34}^T$ or (b) $u\in X_{34}^N\cup X_{34}^T$
and $v\in T_3\cup T_4$.  In case (a), by (\ref{ff-xw1}) $u$ and $v$
have no neighbor in $W_1$, so they have neighbors respectively $t$ and
$t'$ in $T_2$; then $\{z, u,v, t,t'\}$ induces either a $C_4$ or a
non-dominating $C_5$ (because $v_5$ has no neighbor in it), a
contradiction.  In case~(b), item (\ref{ff-xw1}) implies that $u$ has
no neighbor in $W_1$, so $u$ has a neighbor $t\in T_2$.  If $v\in
T_3$, then $\{z, u,t,v_2,v\}$ induces a non-dominating $C_5$ (because
of $v_5$).  If $v\in T_4$, then $v_2$-$t$-$u$-$z$-$v$-$v_5$ is a
$P_6$.
\item\label{ff-x340}
{\it For each component $Z$ of $G[X_{34}^0]$ there are vertices $a\in
A$, $z\in Z$, $w\in W_1$ and $x\in X_{34}^N$ such that $az, wx\in
E(G)$ and $aw,ax\notin E(G)$.} \\
Proof: We have $N(Z)\subseteq X_{34}^N\cup X_{34}^T\cup \{v_3,
v_4\}\cup T_3\cup T_4\cup A$.  Since $G$ has no clique cutset there
are two non-adjacent vertices $u,v\in N(Z)$.  By (\ref{ff-x34z}) and
Theorem~\ref{thm:c5}, and since $T_3\cup T_4$ is a clique, we have
$u\in A$ and consequently $v\in X_{34}^N\cup X_{34}^T$, and by
(\ref{ff-ax34tw1t}) $v\in X_{34}^N$.  So $v$ has a neighbor $w\in
W_1$, and $uw \notin E(G)$ by (\ref{ff-wxt}).
\setcounter{ici}{\value{enumi}}
\end{enumerate}

Suppose that $X_{34}^N=\emptyset$.  Then $X_{34}^0=\emptyset$ by
(\ref{ff-x340}), and $W_1^N=\emptyset$ by~(\ref{ff-wix})
and~(\ref{ff-wxt}).  So $X_{34} = X_{34}^T$ and $W_1 = W_1^T$.  Now
$[A,V(G)\setminus A]$ is complete by Theorem~\ref{thm:c5},
(\ref{ff-ax34tw1t}) and (\ref{ff-ax12}), and since $G$ has no
universal vertex it follows that $A=\emptyset$.  So $V(G)=C\cup
T_1\cup \cdots \cup T_5\cup W_1^T \cup X_{12}\cup X_{34}^T$.  Let:
\begin{eqnarray*}
Q_i &=& \{v_i\}\cup T_i \mbox{ for each } i\in\{1,4,5\}. \\
Q_2 &=& \{v\mid v \mbox{ is universal in } G[\{v_2\}\cup T_2 \cup
X_{12}\cup W_1]\}. \\
R_2 &=& (\{v_2\}\cup T_2 \cup X_{12}\cup W_1)\sm Q_2.  \\
Q_3 &=& \{v\mid v \mbox{ is universal in } G[\{v_3\}\cup T_3 \cup
X_{34}]\}.  \\
R_3 &=& (\{v_3\}\cup T_3 \cup X_{34}) \sm Q_3.
\end{eqnarray*}
Hence $V(G)=Q_1\cup\cdots\cup Q_5\cup R_2\cup R_3$.  We claim that
$Q_2\neq\emptyset$.  Indeed, if $W_1=\emptyset$ then $v_2\in Q_2$.  So
suppose that $W_1\neq\emptyset$.  By (\ref{ff-wix}) and (\ref{ff-xw1})
the set $Y_{34}=\{x\in X_{34}\mid x$ has a neighbor in $W_1\}$ is
non-empty and is a clique.  Since $X_{34}^N\cup X_{34}^0=\emptyset$,
every vertex of $Y_{34}$ has a neighbor in $T_2$, and it follows that
some vertex $t$ in $T_2$ is complete to $Y_{34}$ (otherwise there are
vertices $y', y''\in Y_{34}$ and $t',t''\in T_2$ that induce a $C_4$).
Let us verify that $t \in Q_2$.  We know that $t$ is complete to
$T_2\sm t$ by Theorem~\ref{thm:c5}.  Any $w\in W_1$ has a neighbor
$x\in X_{34}$ by (\ref{ff-wix}), and so $tw\in E(G)$ for otherwise
$\{t,x,w,v_1\}$ induces a $C_4$.  Now consider any $y\in X_{12}$.
Pick any $w\in W_1$ and $x\in X_{34}\cap N(w)$.  Then $ty \in E(G)$,
for otherwise $y$-$v_2$-$t$-$x$-$v_4$-$v_5$ is a $P_6$.  So $t\in
Q_2$, and the claim that $Q_2\neq\emptyset$ is established.
Now the properties of the nine sets $Q_1, \ldots, Q_5, R_2, R_3$ satisfy
all the axioms of the belt.  We make this more precise as follows:
\begin{itemize}\setlength\itemsep{0pt}
\item By Theorem~\ref{thm:c5} and by (\ref{ff-tic}), we know that
$Q_1, Q_4$ and $Q_5$ are non-empty cliques, $[Q_1\cup Q_4, Q_5]$ is
complete and $[Q_1, Q_4]=\emptyset$.

\item
Clearly $Q_2$ and $Q_3$ are cliques, with $v_3\in Q_3$, and
$Q_2\neq\emptyset$ as seen above.

\item
By (\ref{ff-tic}), (\ref{ff-xiitx2}) and Theorem~\ref{thm:c5},
$[Q_1,Q_2\cup R_2]$ and $[Q_4,Q_3\cup R_3]$ are complete.

\item
By the definition of $Q_2$ and $Q_3$, Theorem~\ref{thm:c5}, and since
$[X_{12}\cup X_{34},T_5]=\emptyset$, we have $[Q_2,Q_4\cup
Q_5]=\emptyset$ and $[Q_3,Q_1\cup Q_5]=\emptyset$.

\item
By the definition of $Q_2$, $Q_3$, $R_2$ and $R_3$, we have: for each
$j\in {2,3}$, $[Q_j, R_j]$ is complete, every vertex in $R_j$ has a
non-neighbor in $R_j$, every vertex in $Q_2\cup R_2$ has a neighbor in
$Q_3\cup R_3$ (by (\ref{ff-wix}) and (\ref{ff-x12t3})), and every
vertex in $Q_3\cup R_3$ has a neighbor in $Q_2\cup R_2$ (by the
definition of $X_{34}^T$)).
\end{itemize}

Thus $G$ is a belt.

\medskip

Therefore we may assume that $X_{34}^N\neq\emptyset$.  So, $W_1\neq
\emptyset$.  Then:
\begin{enumerate}[(i)]\setlength\itemsep{0pt}\setcounter{enumi}{\value{ici}}
\item\label{ff-x34n}
{\it $X_{34}^N\cup X_{34}^T\cup T_3\cup T_4$ is a clique.} \\
Proof: By (\ref{ff-wxiit}) and by Theorem~\ref{thm:c5}, it is enough
to show that $X_{34}^N\cup X_{34}^T$ is a clique.  Suppose that there
are non-adjacent vertices $x,x'\in X_{34}^N\cup X_{34}^T$.  Pick any
$y\in X_{34}^N$.  By (\ref{ff-xw1}), $y\notin \{x,x'\}$ and $yx,yx'\in
E(G)$, and $x, x'\in X_{34}^T$.  So $x$ has a neighbor $t\in T_2$, and
$x'$ has a neighbor $t'\in T_2$.  Then $\{y,x,x',t,t'\}$ induces a
cycle of length either $4$ (if $t=t'$) or $5$ and not dominating
(because $v_5$ has no neighbor in it), a contradiction.
\item\label{ff-x34ch}
{\it $G[X_{34}^0]$ is chordal.} \\
Proof: If $G[X_{34}]$ contains a hole $C$, then $C$ either has
length~$4$ or at least~$6$ or is a non-dominating $C_5$ (because of
$v_5$).
\item\label{ff-sau}
{\it For every simplicial vertex $s$ of $G[X_{34}^0]$, there are
vertices $a\in A$ and $u\in X_{34}^0\cup X_{34}^N$ with $sa, su\in
E(G)$ and $au\notin E(G)$.  } \\
Proof: Let $Z$ be the vertex-set of the component of $G[X_{34}^0]$
that contains~$s$.  So $N_Z(s)$ is a clique.  We have $N(s) \subseteq
N_Z(s) \cup (N(Z)\setminus A)\cup A$.  By (\ref{ff-x34z}) the set
$N_Z(s)\cup (N(Z)\setminus A)$ is a clique.  Since $G$ has no clique
cutset there are two non-adjacent vertices $u,v$ in $N(s)$, and so
$u\in N_Z(s)\cup (N(Z)\setminus A)$ and $v\in A$.  Since
$N(Z)\setminus A\subseteq X_{34}^N\cup X_{34}^T\cup \{v_3, v_4\}\cup
T_3\cup T_4$ and $A$ is complete to $X_{34}^T\cup \{v_3, v_4\}\cup
T_3\cup T_4$ by Theorem~\ref{thm:c5} and~(\ref{ff-ax34tw1t}), we have
$u\in X_{34}^0\cup X_{34}^N$.
\setcounter{ici}{\value{enumi}}
\end{enumerate}

Let $A_0=\{a\in A\mid a$ has a neighbor in $X_{34}^0$ and a
non-neighbor in $X_{34}^N\}$.  By~(\ref{ff-x340}) we have
$A_0\neq\emptyset$.  Since $A_0$ and $X_{34}^N$ are cliques (by
Theorem~\ref{thm:c5} and (\ref{ff-x34n})) and by the third item of
Lemma~\ref{lem:ab}, there is a vertex $x_0$ in $X_{34}^N$ that is
anticomplete to $A_0$.  Let $w_0$ be a neighbor of $x_0$ in $W_1$.
Then $w_0$ is anticomplete to $A_0$ by (\ref{ff-wxt}).

\begin{enumerate}[(i)]\setlength\itemsep{0pt}\setcounter{enumi}{\value{ici}}
\item\label{ff-x0X340comp}
{\it $x_0$ is complete to $X_{34}^0$.} \\
Proof: If there is a vertex $x\in X_{34}^0$ that is non-adjacent to
$x_0$, then $v_2$-$v_1$-$w_0$-$x_0$-$v_4$-$x$ is a $P_6$.
\item\label{ff-x34a0ch}
{\it $G[X_{34}^0\cup A_0]$ is chordal.} \\
Proof: If $G[X_{34}^0\cup A_0]$ contains a hole $C$, then $C$ either
has length~$4$ or at least~$6$ or is a non-dominating $C_5$ (because
of $w_0$).
\item\label{ff-zaz}
{\it Every vertex in $X_{34}^0$ has a neighbor in $A_0$.} \\
Proof: Let $Z$ be the vertex-set of any component of $G[X_{34}^0]$,
and let $Z_A=\{z\in Z\mid z$ has a neighbor in $A_0\}$, and suppose
that $Z\neq Z_A$.  By (\ref{ff-x34a0ch}) and Lemma~\ref{lem:chxa}
applied to $G[Z\cup A_0]$, $Z$ and $A_0$, some simplicial vertex $s'$
of $G[Z]$ has no neighbor in $A_0$.  Let $S=\{s''\in Z\mid$
$N_Z[s'']=N_Z[s']\}$; so the vertices in $S$ are simplicial in $G[Z]$
and pairwise clones, and $S$ is a clique.  Let $s$ be a vertex in $S$
with the smallest number of neighbors in $A$.  If $s$ has any neighbor
$a\in A_0$, then, since $\{S, A_0\}$ is a graded pair of cliques, by
Lemma~\ref{lem:ab} all vertices in $S$ are adjacent to $a$, a
contradiction.  So $s$ has no neighbor in $A_0$.  By (\ref{ff-sau})
there are vertices $b\in A$ and $u\in Z\cup X_{34}^N$ with $sb, su\in
E(G)$ and $bu\notin E(G)$.  We know that $b\notin A_0$, so $b$ is
complete to $X_{34}^N$, and so $u\in Z$.  Moreover $u\notin S$, for
otherwise the choice of $s$ is contradicted (since $b\in A$, and the
pair $\{A,S\}$ is graded).  Hence $u$ is not a simplicial vertex of
$G[Z]$, and so it has a neighbor $v\in Z\setminus N[s]$.  Consider any
$a\in A_0$.  We know that $as\notin E(G)$; then also $au\notin E(G)$,
for otherwise $\{a,b,s,u\}$ induces a $C_4$; and $av\notin E(G)$, for
otherwise $s$-$u$-$v$-$a$-$v_1$-$w_0$ is a $P_6$.  Hence $\{s,u,v\}$
is anticomplete to $A_0$.  Let $Y$ be the component of $G[Z\setminus
Z_A]$ that contains $s,u,v$.  Let $Z_Y=\{z\in Z_A\mid z$ has a
neighbor in $Y\}$, and let $A_Y=\{b'\in A\mid b'$ has a neighbor in
$Y\}$.  Note that $[A_Y, X_{34}^N]$ is complete.  Since
$Z_A\neq\emptyset$ and $G[Z]$ is connected, $Z_Y\neq \emptyset$.  Then
$[Y,Z_Y]$ is complete, for otherwise there are adjacent vertices
$y,y'\in Y$, a vertex $z\in Z_Y$, and a vertex $a\in A_0$ such that
$y'$-$y$-$z$-$a$-$v_1$-$w_0$ is a $P_6$.  Then $Z_Y$ is a clique, for
otherwise $\{s,v,z,z'\}$ induces a $C_4$ for any two non-adjacent
vertices $z,z'$ in $Z_Y$.  Moreover, for any $b'\in A_Y$ and $z\in
Z_Y$, we have $b'z\in E(G)$, for otherwise $\{b',y,z,a\}$ induces a
$C_4$ for any $y\in Y\cap N(b)$ and $a\in A_0\cap N(z)$.  We have
$N(Y)\subseteq Z_Y\cup A_Y\cup (N(Z)\setminus A)$, and by
Theorem~\ref{thm:c5}, items (\ref{ff-ax34tw1t}) and (\ref{ff-x34z})
and the fact that $[A_Y, X_{34}^N]$ is complete, this set is a clique,
a contradiction.
\item
{\it $[X_{34}^0, X_{34}^T]$ is complete.} \\
Proof: Suppose that some $z\in X_{34}^0$ and $x\in X_{34}^T$ are
non-adjacent.  By (\ref{ff-zaz}) $x$ has a neighbor $a\in A_0$.  Then,
by (\ref{ff-ax34tw1t}), $\{a, x, x_0, z\}$ induces a $C_4$.
\item
{\it For any two components $Z,Z'$ of $G[W_1]$, the sets $N(Z)\cap
X_{34}$ and $N(Z')\cap X_{34}$ are disjoint.} \\
Proof: Otherwise $\{v_1,x,z,z'\}$ induces a $C_4$ for some $z\in Z$,
$z'\in Z'$ and $x\in N(Z)\cap N(Z')\cap X_{34}$.
\end{enumerate}

Let:
\begin{eqnarray*}
Q &=& \{v_1\}\cup T_1, \\
B &=& \{v_3, v_4\}\cup T_3 \cup T_4 \cup X_{34}^T \cup X_{34}^N, \\
M &=& \{v_2, v_5\}\cup T_2 \cup T_5 \cup X_{12}\cup W_1,  \\
L &=& X_{34}^0.
\end{eqnarray*}
We know that $A$ and $Q$ are cliques, and $B$ is a clique by
(\ref{ff-x34n}).  Every vertex in $L$ has a neighbor in $A$ by
(\ref{ff-zaz}), and every vertex in $M$ has a neighbor in $B$ by
(\ref{ff-wix}) and (\ref{ff-x12t3}).  The subgraph $G[L]$ is $(P_4,
2P_3)$-free by Lemma~\ref{lem:chxab}, using $A_0$ in the role of $Y$,
$x_0$ in the role of $c$, and $v_1$ and $w_0$, respectively, in the
role of $c'$ and $c''$ .  The subgraph $G[M]$ has at least three
components because $\{v_2\}\cup X_{12}$, $\{v_5\}\cup T_5$ and $W_1$
are pairwise anticomplete to each other and non-empty, and $G[M]$ is
$(P_4, 2P_3)$-free by Lemma~\ref{lem:chxab}, using $B$ in the role of
$Y$, $v_1$ in the role of $c$ and the fact that $G[M]$ is not
connected.  Hence the sets $Q, A, B, L, M$ form a partition of $V(G)$ that
shows that $G$ is a boiler.
\end{proof}

\section{Additional properties of belts and boilers}\label{sec:bb}

Belts  and boilers have some additional and useful properties that we give below.

\subsection{Belts}
\begin{theorem}\label{thm:belts}
Let $G$ be a belt, with the same notation as in Section~\ref{sec:intro}.  Then:
\begin{enumerate}[(a)]\setlength\itemsep{0pt}
\item\label{belta}
For each $j\in\{2,3\}$, any two non-adjacent vertices in $R_j$ have no
common neighbor in $Q_{5-j}$.
\item\label{beltb}
$[R_2,R_3]=\emptyset$.
\item\label{beltc}
For each $j\in\{2,3\}$, every vertex of $Q_j$ that has a neighbor in
$R_{5-j}$ is complete to $Q_{5-j}$.
\item\label{beltd}
The graphs $G[R_2]$ and $G[R_3]$ are $(P_4, 2P_3)$-free.
\end{enumerate}
\end{theorem}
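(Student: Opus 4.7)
The plan is to handle (a) and (c) by direct forbidden-subgraph arguments, to derive (b) by a similar argument that switches between $P_6$- and $C_6$-outcomes, and to read off (d) as an application of Lemma~\ref{lem:chxab}. Throughout I use the fact that the belts produced by Theorem~\ref{thm:final} have $Q_1, Q_4, Q_5$ all non-empty, since those cliques contain the vertices $v_1, v_4, v_5$ of the $C_5$ from which the belt was built. Item (a) then follows immediately by contradiction: if non-adjacent $r, r' \in R_j$ shared a common neighbor $q \in Q_{5-j}$, then any vertex $p$ in $Q_1$ (when $j=2$) or $Q_4$ (when $j=3$) is complete to $R_j$ and anticomplete to $Q_{5-j}$ by the belt axioms, so $\{r, p, r', q\}$ induces a $C_4$, which is forbidden.

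For item (b), I would assume $r \in R_2$ and $s \in R_3$ with $rs \in E(G)$ and use the non-universality clause to pick $r^* \in R_2 \sm N[r]$ and $s^* \in R_3 \sm N[s]$. Any $q_1 \in Q_1$ is adjacent to both $r, r^*$ but not to $s$, so $C_4$-freeness on $\{r, q_1, r^*, s\}$ forces $r^*s \notin E(G)$; symmetrically, using any $q_4 \in Q_4$, we get $rs^* \notin E(G)$. If $r^*s^* \notin E(G)$ then $r^*$-$q_1$-$r$-$s$-$q_4$-$s^*$ is an induced $P_6$; otherwise those six vertices in cyclic order form an induced $C_6$, since every would-be chord lies in one of the empty pairs $[Q_1, Q_4]$, $[Q_1, R_3]$, $[Q_4, R_2]$, or has already been excluded. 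Either way we contradict the belt's forbidden-induced-subgraph hypotheses.

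For item (c), suppose $q_2 \in Q_2$ has a neighbor $s \in R_3$ but a non-neighbor $q \in Q_3$; then $sq \in E(G)$ by completeness of $[Q_3, R_3]$, and picking $s^* \in R_3 \sm N[s]$ and applying $C_4$-freeness to $\{q_2, s, s^*, q\}$ forces $q_2 s^* \notin E(G)$. For any $q_1 \in Q_1$ and $q_5 \in Q_5$ the sequence $q_5$-$q_1$-$q_2$-$s$-$q$-$s^*$ is then an induced $P_6$, each potential chord being killed by one of the empty pairs $[Q_5, Q_2 \cup Q_3 \cup R_3]$, $[Q_1, R_3]$, $[Q_1, Q_3]$, or by the non-edges already established above.

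For item (d), I would apply Lemma~\ref{lem:chxab} with $X = R_2$, $Y = Q_3$, $c \in Q_1$, $c' \in Q_4$, and $c'' \in Q_5$: by (b) the required neighbor in $Q_3 \cup R_3$ of any $r \in R_2$ actually lies in $Q_3$, so $Y$ is a clique with the required adjacency to $X$, and the witnesses $c, c', c''$ exhibit the remaining complete/empty patterns directly from the belt axioms (with $c'c'' \in E(G)$ coming from $[Q_4, Q_5]$ complete); the lemma then yields that $G[R_2]$ is $(P_4, 2P_3)$-free, and $G[R_3]$ follows by symmetry. The main obstacle is the case analysis in (b), where one must recognise that the subcase $r^*s^* \in E(G)$ produces an induced $C_6$ rather than a second $P_6$; once this dichotomy is in place, the remaining work reduces to routine chord-checking against the belt's empty and complete pairs.
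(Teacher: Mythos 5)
Your proposal is correct and follows essentially the same route as the paper: item (a) via a $C_4$ through a vertex of $Q_1$ (resp.\ $Q_4$), item (b) via the same six vertices $r^*, q_1, r, s, q_4, s^*$ yielding a $P_6$ or $C_6$, item (c) via a direct $P_6$ built from the non-universality of $R_{5-j}$ (the paper routes its path through $Q_4, Q_5$ rather than $Q_1, Q_5$, an immaterial difference), and item (d) by invoking Lemma~\ref{lem:chxab} with exactly the witnesses $c\in Q_1$, $c'\in Q_4$, $c''\in Q_5$ and $Y=Q_3$ after using (b). Your explicit remark that $Q_1,Q_4,Q_5$ are non-empty is an assumption the paper makes only implicitly, so no issue there.
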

\begin{proof}
(\ref{belta}) If two non-adjacent vertices $r,r'\in R_2$ have a common
neighbor $v$ in $Q_3$, then $\{v_1,r,r',v\}$ induces a $C_4$.

\smallskip
	
(\ref{beltb}) Suppose that any $r_2\in R_2$ and $r_3\in R_3$ are
adjacent.  By the definition of a belt, for each $j\in\{2,3\}$ the
vertex $r_j$ has a non-neighbor $r'_j\in R_j$.  Then $r_2r'_3\notin
E(G)$, for otherwise $\{r_2,r'_3,v_4,r_3\}$ induces a $C_4$, and
similarly $r_3r'_2\notin E(G)$.  Then $\{r'_2,v_1,r_2,r_3,v_4,r'_3\}$
induces a $P_6$ or $C_6$.

\smallskip

(\ref{beltc}) Consider any $u\in Q_3$ which has a neighbor $r_2\in
R_2$, and suppose that $u$ has a non-neighbor $v\in Q_2$.  By the
definition of a belt $r_2$ has a non-neighbor $r'_2\in R_2$.  Then
$ur'_2\notin E(G)$, for otherwise $\{u,r'_2,v_1,r_2\}$ induces a
$C_4$.  But then $r'_2$-$v$-$r_2$-$u$-$v_4$-$v_5$ is a $P_6$.  The
proof is similar when $j=2$.

\smallskip

(\ref{beltd}) Pick a vertex $q_i\in Q_i$ for each $i\in\{1,4,5\}$.
Lemma~\ref{lem:chxab}, using vertices $q_1$, $q_4$ and $q_5$ in the
role of $c$, $c'$ and $c''$, implies that $G[R_2]$ is $(P_4,
2P_3)$-free.  The proof is similar for $G[R_3]$.
\end{proof}

Note that Theorem~\ref{thm:belts}(\ref{beltd}) means that $(R_2, Q_3)$
and $(R_3,Q_2)$ are $\cal C$-pairs.

\subsection{Boilers}\label{subsec:boilers}
Let $G$ be a boiler, with the same notation as in the definition.
Since every vertex in $A$ has a non-neighbor in $B$,
Lemma~\ref{lem:ab} implies that some vertex $b^*$ in $B$ is
anticomplete to $A$.  Let $m^*$ be any neighbor of $b^*$ in $M$.  Then
$m^*$ too is anticomplete to $A$ (for otherwise $\{m^*, a, b, b^*\}$
induces a $C_4$ for some $a\in A$ and $b\in B_1\cup B_2$).  Pick a
vertex $z\in Q$.

If $L$ is a clique, then $(A\cup M, B\cup L)$ is a $\cal C$-pair, so
the structure of $G$ is completely determined by
Theorem~\ref{thm:cpair} and the fact that $Q$ is complete to $A\cup M$
and anticomplete to $B\cup L$.

Therefore let us assume that $L$ is not a clique.  Let $U$ be the set
of universal vertices of~$L$.  (Possibly $U=\emptyset$.)  Let $A_L =
\{a\in A\mid a$ has a neighbor in $L\}$ and $A'_L=\{a\in A\mid a$ has
a neighbor in $L\setminus U\}$.

\begin{theorem}\label{lem:boil1}
Let $G$ be a boiler, with the same notation as above, and assume that
$L$ is not a clique.  Then, up to a permutation of the set
$\{3,\ldots,k\}$, there is an integer $j\in\{3,\ldots, k\}$ such that the
following hold:
\begin{enumerate}[(i)]\setlength\itemsep{0pt}
\item\label{boil-1}
For each $a\in A_L\setminus A'_L$, there is an integer
$i\in\{j,\ldots,k\}$ such that $a$ is complete to $M_1\cup
B_1\cup\cdots\cup M_{i-1}\cup B_{i-1}$ and anticomplete to $M_{i}\cup
B_{i}\cup\cdots\cup M_k\cup B_k$;
\item\label{boil-2}
$A'_L$ is complete to $(M\cup B)\setminus (M_{k}\cup B_{k})$ and
anticomplete to $M_k\cup B_k$;
\item\label{boil-3}
$A\setminus A_L$ is complete to $M_1\cup B_1\cup\cdots\cup M_{j-1}\cup
B_{j-1}$ and anticomplete to $M_{j}\cup B_{j}\cup\cdots\cup M_k\cup
B_k$.
\end{enumerate}
\end{theorem}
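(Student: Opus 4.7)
The plan is to attach to each $a\in A$ the set
\[
S(a):=\{i\in\{3,\ldots,k\}\mid a\text{ is complete to }M_i\cup B_i\},
\]
which is well defined by the fourth item of the boiler axioms (each $a\in A$ is complete or anticomplete to $M_i\cup B_i$ for every $i\geq 3$). The argument has four stages: (1) produce an index that lies outside every $S(a)$; (2) show that $\{S(a):a\in A\}$ is a chain under inclusion; (3) locate each $S(a)$ relative to the three classes $A'_L$, $A_L\setminus A'_L$, $A\setminus A_L$; and (4) reindex $\{3,\ldots,k-1\}$ so that each $S(a)$ becomes an initial segment, exhibiting the integer $j$ in the statement.

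For stage~(1), the vertex $b^*$ from the setup lies in some $B_l$ with $l\geq 3$ (as $A$ is complete to $B_1\cup B_2$ and $b^*$ is anticomplete to $A$); the complete/anticomplete dichotomy then forces every $a\in A$ to be anticomplete to $M_l\cup B_l$, so $l\notin S(a)$. Relabel so that $l=k$, which supplies the ``$k$'' used in items~(ii)--(iii). For stage~(2), if $i\in S(a)\setminus S(a')$ and $i'\in S(a')\setminus S(a)$, then $\{a,b_i,b_{i'},a'\}$ induces a $C_4$ (edges $aa'$, $b_ib_{i'}$ because $B$ is a clique, $ab_i$, $a'b_{i'}$; non-edges $ab_{i'}$, $a'b_i$), contradicting $C_4$-freeness.

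Stage~(3) is the heart of the argument and breaks into three facts. \emph{(3a)} If $a\in A\setminus A_L$ and $a''\in A_L$ with witness $l\in L\cap N(a'')$, then $S(a)\subseteq S(a'')$; otherwise any $i\in S(a)\setminus S(a'')$ yields the induced $C_4$ $\{a,a'',l,b_i\}$ (edges $aa''$, $a''l$, $lb_i$, $b_ia$; non-edges $al$ since $a\notin A_L$ and $a''b_i$ since $i\notin S(a'')$). \emph{(3b)} For $a\in A'_L$, $S(a)=\{3,\ldots,k-1\}$: pick $l\in(L\setminus U)\cap N(a)$ and $l'\in L\setminus N[l]$, and suppose $i\in\{3,\ldots,k-1\}\setminus S(a)$. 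If $l'\sim a$ then $\{a,l,b_i,l'\}$ is an induced $C_4$; otherwise take a neighbour $a'$ of $l'$ in $A$ and split on whether $i\in S(a')$ and whether $l\sim a'$, each alternative giving either an induced $C_4$ of the same type (e.g.\ $\{a,l,b_i,a'\}$ when $l\not\sim a'$ and $i\in S(a')$) or, in the corner case $l\sim a'$ (which forces $a'\in A'_L$), a forbidden $P_6$ obtained by extending the path $b_i$-$l$-$a$-$a'$ by a carefully chosen vertex in $M_j$ with $j$ in a non-empty symmetric difference of $S(a')$ and $S(a)$, or by using the vertices $b^*,m^*$ and a second non-neighbour of $l'$ in $L$ furnished by the $(P_4,2P_3)$-free bamboo structure of $G[L]$. \emph{(3c)} All $a\in A\setminus A_L$ have the same $S(a)$: by~(3a) each such $S(a)$ is contained in $\bigcap_{a'''\in A_L}S(a''')$; if two $a,a''\in A\setminus A_L$ had $S(a)\subsetneq S(a'')$, picking $i\in S(a'')\setminus S(a)$ and $a'''\in A_L$ (which exists because $L\neq\emptyset$) and invoking the chain from~(2) together with a $P_6$-construction analogous to that in~(3b) would produce a forbidden $P_6$.

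Stage~(4): the chain property from~(2) lets one permute $\{3,\ldots,k-1\}$ so that every $S(a)$ becomes an initial segment $\{3,\ldots,i_a-1\}$, keeping $k$ fixed. Define $j$ to be the common value $i_a$ for $a\in A\setminus A_L$ (taking $j=k$ if $A\setminus A_L=\emptyset$). Item~(ii) then follows from~(3b), item~(iii) from~(3c), and item~(i) because for $a\in A_L\setminus A'_L$ fact~(3a) gives $S(a)\supseteq\{3,\ldots,j-1\}$ while the chain together with~(3b) gives $S(a)\subseteq\{3,\ldots,k-1\}$, so $i_a\in\{j,\ldots,k\}$. The main obstacle is the residual corner of~(3b) with $l\sim a'$: the natural $C_4$ disappears and one must genuinely construct a $P_6$ that exploits the anti-adjacency of $b^*,m^*$ to $A$ together with the bamboo structure of $G[L]$; the same delicate construction drives~(3c).
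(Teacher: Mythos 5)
Your skeleton --- the chain structure on the sets $S(a)$ forced by $C_4$-freeness, the placement of $b^*$'s block last, and the $C_4$ through a vertex of $A\setminus A_L$, a vertex of $A_L$, its neighbour in $L$ and a $b_i$ --- is exactly the paper's, which packages the chain via Lemma~\ref{lem:ab} and proves item (iii) with the same four-cycle $\{d,b,x,a\}$ as your (3a). The genuine gap is your (3b), i.e.\ item (ii): that every $a\in A'_L$ is complete to every $M_i\cup B_i$ with $i<k$. Your case analysis bottoms out in the admitted ``residual corner'' ($l'\not\sim a$, $a'\in N(l')\cap A$, $l\sim a'$), where you only assert that ``one must genuinely construct a $P_6$''; no path is exhibited, and in that configuration no $C_4$ is available, so the assertion carries the whole burden. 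The ingredient you never use is the clique $Q$. The paper's argument is a single explicit path: take $x\in (L\setminus U)\cap N(a)$, a non-neighbour $x'$ of $x$ in $L$ (note $ax'\notin E(G)$, since otherwise $\{a,x,b^*,x'\}$ is a $C_4$), any $m\in M_i$ and any $z\in Q$; then $m$-$z$-$a$-$x$-$b^*$-$x'$ is an induced $P_6$, because $z$ is complete to $A\cup M$ and anticomplete to $B\cup L$, $b^*$ is anticomplete to $A$, $[L,M]=\emptyset$, and $m\in M_i$ has no neighbour in $B_k\ni b^*$. This one path disposes of the case with no reference to $a'$ at all, and the same device is what your unproved ``analogous $P_6$-construction'' in (3c) would have to supply; as written, items (ii) and (iii) are not established.

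Two smaller defects. In stage (4) you set $j=k$ when $A\setminus A_L=\emptyset$; that choice breaks item (i), which would then force every $a\in A_L\setminus A'_L$ to be complete to all blocks except the $k$-th --- you should take $j$ so that $\{j,\dots,k\}$ is exactly the set of indices at which some vertex of $A$ is anticomplete to $M_i\cup B_i$ (equivalently $j=\min_a i_a$ over all of $A$), in every case. Also, your stage (1) argument that $b^*$'s block can be relabelled $k$ is fine, but remember that the permutation promised in the statement must simultaneously make every $S(a)$ an initial segment; this does follow from your stage (2), but it should be said before $j$ is defined, not after.
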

\begin{proof}
Since $A$ and $B$ are disjoint cliques and $G$ is $C_4$-free,
$[A,B_1\cup B_2]$ is complete, and $b^*$ is anticomplete to $A$,
Lemma~\ref{lem:ab} implies that there is a permutation of $\{3,..,k\}$
such that for every vertex $a\in A$ there is an integer $i\in
\{3,\ldots,k\}$ such that $a$ is complete to $M_1\cup
B_1\cup\cdots\cup M_{i-1}\cup B_{i-1}$ and anticomplete to $M_{i}\cup
B_{i}\cup\cdots\cup M_k\cup B_k$.  We may assume that $b^*\in B_k$ and
$m^*\in M_k$.

Let $J = \{i\in\{3, \ldots, k\} \mid$ some vertex in $A$ is anticomplete
to $M_i\cup B_i\}$.  By the preceding paragraph there is an integer
$j$ such that $J=\{j,\ldots,k\}$.  In particular this implies the
validity of item (\ref{boil-1}) of the lemma.

\smallskip

Now consider any vertex $a\in A'_L$.  So $a$ has a neighbor $x\in
L\setminus U$, so $x$ has a non-neighbor $x'\in L$, and by the
definition of a boiler we have $ax'\notin E(G)$.  Suppose that $a$ is
not complete to $M_i\cup B_i$ for some $i<k$, so $a$ is anticomplete
to $M_i\cup B_i$, and pick any $m\in M_i$.  Then
$m$-$z$-$a$-$x$-$b^*$-$x'$ is a $P_6$.  So $a$ is complete to $(M\cup
B)\setminus (M_{k}\cup B_{k})$, which proves (\ref{boil-2}).

\smallskip
Finally, consider any vertex $d\in A\setminus A_L$.  So $d$ is
anticomplete to $L$.  Pick any $i\in J$ and $b\in B_i$.  So there is a
vertex $a\in A_L$ that is anticomplete to $B_i\cup M_i$.  By the
definition of $A_L$ the vertex $a$ has a neighbor $x\in L$.  Then $db$
is not an edge, for otherwise $\{d,b,x,a\}$ induces a $C_4$.  It
follows that $d$ is anticomplete to $B_i\cup M_i$ which proves
(\ref{boil-3}).
\end{proof}

\section{Bounding the chromatic number}

In this section, we give a proof for Theorem~\ref{thm:54bound} and
Theorem~\ref{thm:reeds}.

We say that a stable set of a graph $G$ is \emph{good} if it meets
every clique of size $\omega(G)$ in $G$; and that it is \emph{very
good} if it meets every (inclusionwise) maximal clique of $G$. Moreover, we say that a clique $K$ in $G$ is a \emph{$t$-clique} of $G$ if $|K|=t$.

\smallskip
We will use the following theorem as a tool in proving
Theorem~\ref{thm:54bound}.
\begin{theorem}\label{thm:tools}
Let $G$ be a graph such that every proper induced subgraph $G'$ of $G$
satisfies $\chi(G')\le \lceil \frac{5}{4}\omega(G')\rceil$.  Suppose
that one of the following occurs:

\begin{enumerate}[(i)]\itemsep=1pt
\item\label{degq}
$G$ has a vertex of degree at most
$\lceil\frac{5}{4}\omega(G)\rceil-1$.
\item\label{goods}
$G$ has a (very) good stable set;
\item\label{stable}
$G$ has a stable set $S$ such that $G\sm S$ is perfect.
\item\label{fives}
For some integer $t\ge 5$ the graph $G$ has $t$ stable sets
$S_1,\ldots,S_t$ such that $\omega(G\sm (S_1\cup\cdots\cup S_t))\le
\omega(G)-(t-1)$.
\end{enumerate}
Then $\chi(G)\le \lceil \frac{5}{4}\omega(G)\rceil$.
\end{theorem}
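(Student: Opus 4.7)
The proof is essentially a routine induction on $|V(G)|$, where each item provides a convenient reduction to smaller or simpler subgraphs for which the desired bound is already available; the arithmetic with the ceiling function $\lceil 5\omega/4\rceil$ is the only genuinely nontrivial ingredient. Let me abbreviate $n(H) := \lceil 5\omega(H)/4\rceil$ for any graph $H$, so the hypothesis is $\chi(G') \le n(G')$ for every proper induced subgraph $G'$ of $G$, and we want $\chi(G) \le n(G)$.

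\medskip

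For item (\ref{degq}), let $v$ be the low-degree vertex. I would apply the hypothesis to $G\setminus v$, obtaining a proper coloring with at most $n(G\setminus v) \le n(G)$ colors; since $v$ has at most $n(G)-1$ neighbors, some color in $\{1,\ldots,n(G)\}$ is free for $v$, and the extended coloring is proper.

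\medskip

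For items (\ref{goods}) and (\ref{stable}) the idea is the same: color the stable set $S$ with one color and color $G\setminus S$ separately, then check that the total fits. In (\ref{goods}), the fact that $S$ is good (it meets every $\omega(G)$-clique, which is automatic if $S$ is very good) gives $\omega(G\setminus S)\le \omega(G)-1$, so $\chi(G)\le 1+n(G\setminus S)\le 1+\lceil 5(\omega(G)-1)/4\rceil$, and the arithmetic identity $1+\lceil 5(\omega-1)/4\rceil \le \lceil 5\omega/4\rceil$ (checked in all four residues of $\omega$ modulo $4$) closes the case. In (\ref{stable}), perfection of $G\setminus S$ gives $\chi(G\setminus S)=\omega(G\setminus S)\le \omega(G)$, and the elementary inequality $1+\omega\le \lceil 5\omega/4\rceil$ (valid for every $\omega\ge 1$ as one sees case by case for small $\omega$ and from $5\omega/4\ge \omega+1$ for $\omega\ge 4$) finishes it.

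\medskip

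For item (\ref{fives}), one uses $t$ new color classes $S_1,\ldots,S_t$ and colors $G' := G\setminus (S_1\cup\cdots\cup S_t)$ by the induction hypothesis, obtaining
\[
\chi(G)\le t+\chi(G')\le t+\left\lceil \tfrac{5\omega(G')}{4}\right\rceil \le t+\left\lceil \tfrac{5(\omega(G)-t+1)}{4}\right\rceil = \left\lceil \tfrac{5\omega(G)-t+5}{4}\right\rceil,
\]
where the last equality absorbs the integer $t$ into the ceiling. Since $t\ge 5$, the numerator $5\omega(G)-t+5$ is at most $5\omega(G)$, so the whole expression is at most $\lceil 5\omega(G)/4\rceil = n(G)$, as required.

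\medskip

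Thus the proof is really a book-keeping exercise with four small lemmas; the only delicate point, and what I would expect to be the most error-prone step, is verifying the two ceiling inequalities used in (\ref{goods}) and (\ref{fives}) — the threshold $t\ge 5$ in (\ref{fives}) is tight precisely because $(t-5)/4$ is the slack in $\lceil 5\omega/4\rceil - t - \lceil 5(\omega-t+1)/4\rceil$, which explains why the hypothesis is stated with that constant.
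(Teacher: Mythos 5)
Your proposal is correct and follows essentially the same argument as the paper: each item is handled by coloring a proper induced subgraph via the hypothesis and adding the stable set(s) as new color classes (or reusing a free color for the low-degree vertex), with the same ceiling arithmetic, including the computation $t+\lceil 5(\omega-t+1)/4\rceil=\lceil(5\omega-t+5)/4\rceil\le\lceil 5\omega/4\rceil$ for $t\ge 5$. No gaps.
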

\begin{proof}
(\ref{degq}) Suppose that $G$ has a vertex $u$ with $d(u)\le
\lceil\frac{5}{4}\omega(G)\rceil-1$.  By the hypothesis we have
$\chi(G\sm u)\le \lceil\frac{5}{4}\omega(G\sm u)\rceil$.  So we can
take any $\chi(G\sm u)$-coloring of $G\sm u$ and extend it to a
$\lceil\frac{5}{4}\omega(G)\rceil$-coloring of $G$, using for $u$ a
(possibly new) color that does not appear in its neighborhood.

(\ref{goods}) Suppose that $G$ has a (very) good stable set $S$.  Then
$\omega(G\sm S)=\omega(G)-1$.  By the hypothesis we have $\chi(G\sm S)
\le \lceil \frac{5}{4}\omega(G\sm S)\rceil = \lceil \frac{5}{4}
(\omega(G)-1) \rceil \le \lceil \frac{5}{4}\omega(G)\rceil -1$.  We
can take any $\chi(G\sm S)$-coloring of $G\sm S$ and add $S$ as a new
color class, and we obtain a coloring of $G$.  Hence $\chi(G) \le
\lceil\frac{5}{4}\omega(G)\rceil$.

(\ref{stable}) Suppose that $G$ has a stable set $S$ such that $G\sm
S$ is perfect.  Then $\chi(G\sm S)=\omega(G\sm S)\le \omega(G)$.  We
can take any $\chi(G\sm S)$-coloring of $G\sm S$ and add $S$ as a new
color class.  Hence $\chi(G)\le \omega(G)+1 \le
\lceil\frac{5}{4}\omega(G)\rceil$.

(\ref{fives}) Note that $\frac{t}{t-1}\le \frac{5}{4}$ because $t\ge
5$.  We take any $\chi(G\sm (S_1\cup\cdots\cup S_t))$-coloring of
$G\sm (S_1\cup\cdots\cup S_t)$ and use $S_1, \ldots, S_t$ as $t$ new
colors and we get a coloring of $G$.  Then $\chi(G)\le \chi(G\sm
(S_1\cup\cdots\cup S_t))+t\le \lceil\frac{5}{4}(\omega(G)-(t-1))\rceil
+t \le \lceil\frac{5}{4}\omega(G)\rceil$ because $\frac{t}{t-1}\le
\frac{5}{4}$.
\end{proof}

\subsection{Chromatic bound for blowups}

We first note that  by a result of Lov\'asz \cite{Lovasz}, any blowup of a
perfect graph is a perfect graph.

For any integer $t\ge 2$ we say that $G$ is a \emph{$t$-blowup} of $H$
if $|Q_u|=t$ for all $u\in V(H)$.  Remark that, for an integer $k$, a
$k$-coloring of the $t$-blowup of $H$ is equivalent to a collection of
$k$ stable sets of $H$ such that every vertex of $H$ belongs to at
least $t$ of them.

\bigskip

\noindent{\bf Blowups of Petersen graph}

Let $H_1$ be the Petersen graph as shown in Figure~\ref{fig:h12345}.

\begin{lemma}\label{lem:2blowup}
Let $G$ be the $2$-blowup of the Petersen graph $H_1$. Then $\chi(G)=5$.
\end{lemma}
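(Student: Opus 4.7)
The plan is to establish the equality by showing both $\chi(G) \ge 5$ and $\chi(G) \le 5$, exploiting the well-known Kneser representation of the Petersen graph: $V(H_1)$ consists of all $2$-subsets of $\{1,2,3,4,5\}$, with two subsets adjacent if and only if they are disjoint. Under this description, the maximal independent sets of $H_1$ are precisely the five sets
\[
S_i = \{\{i,j\} \mid j \in \{1,\ldots,5\} \setminus \{i\}\}, \quad i \in \{1,\ldots,5\},
\]
each of size~$4$, so $\alpha(H_1) = 4$. Throughout I will use the observation, recorded just before the lemma, that for a $t$-blowup of $H$ a proper $k$-coloring is exactly the same data as a family of $k$ independent sets of $H$ such that every vertex of $H$ is covered at least $t$ times.

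For the lower bound, if $c$ is a $k$-coloring of the $2$-blowup $G$, then the corresponding color classes $T_1,\ldots,T_k$ of $H_1$ are independent sets whose union covers each vertex of $H_1$ at least twice. Counting incidences,
\[
\sum_{i=1}^{k} |T_i| \;\ge\; 2\,|V(H_1)| \;=\; 20,
\]
and since $|T_i| \le \alpha(H_1) = 4$ for each $i$, we conclude $k \ge 5$. Hence $\chi(G)\ge 5$.

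For the upper bound, I will simply exhibit an explicit family. Observe that every vertex $\{a,b\}$ of $H_1$ lies in exactly two of the sets $S_1,\ldots,S_5$, namely $S_a$ and $S_b$. Therefore the family $S_1,\ldots,S_5$ covers every vertex of $H_1$ exactly twice, which by the blowup correspondence translates into a proper $5$-coloring of $G$. Combined with the previous paragraph this yields $\chi(G)=5$.

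There is no real obstacle here: the whole argument hinges on the identification of the maximum independent sets of the Petersen graph via the Kneser representation and on the counting bound $k\,\alpha(H) \ge t\,|V(H)|$ for the chromatic number of a $t$-blowup. Both are standard and require no additional work beyond the explicit description above.
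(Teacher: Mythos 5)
Your proof is correct and follows essentially the same route as the paper: both arguments exhibit five stable sets of the Petersen graph covering each vertex exactly twice (yours are the Kneser stars $S_i$; the paper's are the same five maximum stable sets written in its own vertex labels), and both obtain the lower bound from $|V(G)|=20$ together with $\alpha(G)\le 4$. One small slip: the five stars are the \emph{maximum} stable sets of $H_1$, not all the \emph{maximal} ones (for instance $\{\{1,2\},\{1,3\},\{2,3\}\}$ is a maximal stable set of size $3$), but since you only use $\alpha(H_1)=4$ and the independence of the $S_i$, this does not affect the argument.
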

\begin{proof}
The five sets $\{a,b,w_3,w_6\}$, $\{b,c,w_1,w_4\}$, $\{a,c,w_2,w_5\}$,
$\{z,w_1,w_3,w_5\}$ and $\{z, w_2, w_4, w_6\}$ are five stable sets,
and every vertex of $H_1$ belongs to two of them.  As observed above
this is equivalent to a $5$-coloring of $G$.  This is optimal because
$G$ has $20$ vertices and every stable set in $G$ has size at most
$4$.
\end{proof}

\begin{theorem}\label{thm:bu-peter}
If $G$ is any blowup of the Petersen graph $H_1$, then $\chi(G)\le
\lceil \frac{5}{4} \omega(G)\rceil$.
\end{theorem}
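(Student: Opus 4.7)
The plan is to apply Theorem~\ref{thm:tools}(\ref{stable}) by exhibiting a stable set $S$ of $G$ such that $G\setminus S$ is perfect. My candidate is $S=\bigcup_{v\in T} Q_v$, where $T$ is any maximum stable set of $H_1$ (so $|T|=4$) and the $Q_v$'s are the cliques of the blowup. Since $T$ is stable in $H_1$, the cliques $Q_v$ with $v\in T$ are pairwise anticomplete, so $S$ is stable in $G$; moreover $G\setminus S$ is precisely the blowup of $H_1\setminus T$. By the Lov\'asz result recalled earlier (blowups of perfect graphs are perfect), it will then suffice to show that $H_1\setminus T$ is perfect.

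The key structural step I need to verify is that \emph{for every maximum stable set $T$ of $H_1$, the graph $H_1\setminus T$ consists of exactly three disjoint edges}; in particular it is bipartite, hence perfect. I plan to derive this from the well-known fact that the Petersen graph is strongly regular with parameters $(10,3,0,1)$, so that every pair of non-adjacent vertices has exactly one common neighbor. Since the $\binom{4}{2}=6$ pairs inside $T$ are all non-adjacent, they contribute $6$ common-neighbor incidences in $V(H_1)\setminus T$. Writing $n_k$ for the number of vertices $u\notin T$ with exactly $k$ neighbors in $T$, and combining this with the handshake identity $\sum_{u\notin T}|N(u)\cap T|=\sum_{v\in T}\deg_{H_1}(v)=12$ and $|V(H_1)\setminus T|=6$, I would extract the small system
\[
n_0+n_1+n_2+n_3=6,\quad n_1+2n_2+3n_3=12,\quad n_2+3n_3=6,
\]
whose only non-negative solution is $n_0=n_1=n_3=0$ and $n_2=6$. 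Hence every $u\notin T$ has exactly two neighbors in $T$, so degree $3-2=1$ in $H_1\setminus T$, making $H_1\setminus T$ a $1$-regular graph on $6$ vertices, i.e., a $3$-matching.

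Granted this, $G\setminus S$ is the blowup of a bipartite graph and is therefore perfect. Theorem~\ref{thm:tools}(\ref{stable})---whose hypothesis about proper induced subgraphs is supplied by the outer induction on $|V(G)|$ used in the proof of Theorem~\ref{thm:54bound}---then yields $\chi(G)\le \omega(G)+1\le \lceil\tfrac{5}{4}\omega(G)\rceil$. The only non-routine obstacle is the combinatorial identification of $H_1\setminus T$ as a $3$-matching; once that is in hand, the rest of the argument collapses into a one-line appeal to Theorem~\ref{thm:tools}.
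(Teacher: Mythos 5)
There is a genuine gap, and it is at the very first step: the set $S=\bigcup_{v\in T}Q_v$ is \emph{not} a stable set of $G$ in general. The fact that $T$ is stable in $H_1$ only gives that the cliques $Q_v$, $v\in T$, are pairwise anticomplete; their union is therefore a disjoint union of cliques, and it contains edges as soon as some $Q_v$ with $v\in T$ has $|Q_v|\ge 2$. In a blowup the bags are cliques, so this happens for essentially every interesting instance (e.g.\ the $2$-blowup, where $\omega=6$ and $\chi=5$ while your argument would claim a partition into a stable set plus a perfect graph). Replacing $S$ by a transversal (one vertex from each nonempty $Q_v$, $v\in T$) does produce a stable set, but then $G\setminus S$ is again a blowup of the whole Petersen graph, not of $H_1\setminus T$, and it still contains induced $C_5$'s, so it is not perfect and Theorem~\ref{thm:tools}(\ref{stable}) does not apply. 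Trying instead to colour $G[S]$ separately only yields $\chi(G)\le \omega(G\setminus S)+\max_{v\in T}|Q_v|$, which can be close to $2\omega(G)$ and is far weaker than the claimed bound. So the reduction collapses; the one non-trivial thing you do prove --- that for every maximum stable set $T$ of the Petersen graph, every vertex outside $T$ has exactly two neighbours in $T$, whence $H_1\setminus T$ is a $3$-matching (via the strongly regular parameters $(10,3,0,1)$) --- is correct but does not rescue the argument.

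This failure is not incidental: a single stable set can lower the clique number by at most one, so no argument of the form ``remove one stable set and induct/appeal to perfection'' can by itself deliver a $\frac54\omega$ bound. The paper's proof is accordingly more laborious. It inducts on $|V(G)|$; when every $\omega$-clique of $G$ is \emph{balanced} (both bags of the corresponding edge of $H_1$ have size at least two), it removes two vertices from every bag, $5$-colours that $2$-blowup explicitly (Lemma~\ref{lem:2blowup}), and applies induction to the rest, which has clique number $\omega-4$; when some $\omega$-clique is unbalanced, a case analysis produces a \emph{good} stable set (one meeting every $\omega$-clique) so that Theorem~\ref{thm:tools}(\ref{goods}) applies. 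If you want to keep a clean structural flavour, the route to pursue is Theorem~\ref{thm:tools}(\ref{fives}) with $t=5$ (five stable sets whose removal drops $\omega$ by four), which is exactly what the $5$-colouring of the $2$-blowup encodes --- but you must then still handle the bags of size $0$ or $1$, which is where the paper's case analysis lives.
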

\begin{proof}
Let $q=\omega(G)$.  We prove the theorem by induction on $|V(G)|$.  We
may assume that $G$ is connected (otherwise we consider each component
separately) and that $G$ is not a clique.  Moreover, the theorem holds
easily if $G$ is any induced subgraph of $H_1$.  Now suppose that $G$
is not an induced subgraph of $H_1$.  So there is $x\in V(H_1)$ such
that $|Q_x|\ge 2$.  Since $G$ is connected and not a clique there
exists $y\in N_{H_1}(x)$ such that $Q_y\neq\es$, and so $q\ge 3$.  By
Theorem~\ref{thm:tools}~(\ref{goods}) we may assume that $G$ has no
good stable set.

Note that every maximal clique of $G$ consists of $Q_u\cup Q_v$ for
some edge $uv\in E(H_1)$ with $Q_u\neq\es$ and $Q_v\neq\es$, and we
denote it as $Q_{uv}$.  We say that such a maximal clique is
\emph{balanced} if $|Q_u|\ge 2$ and $|Q_v|\ge 2$.

Suppose that every $q$-clique of $G$ is balanced.  So $q\ge 4$.  Let
$X$ be a subset of $V(G)$ obtained by taking $\min\{2, |Q_v|\}$
vertices from $Q_v$ for each $v\in V(H_1)$.  We claim that:
\begin{equation}\label{omm4}
\omega(G\sm X)=q-4.
\end{equation}
Proof: Consider any maximal clique $K$ in $G$.  As observed above we
have $K=Q_u\cup Q_v$ for some edge $uv\in E(G)$ with $Q_u\neq\es$ and
$Q_v\neq\es$.  Suppose that $|K|=q$.  The hypothesis that every
$q$-clique is balanced implies that $X$ contains exactly four vertices
from $K$, so $|K\sm X|=|K|-4=q-4$.  Now suppose that $|K|\le q-1$.
The definition of $X$ implies that either $|K|\ge 3$ and $X$ contains
at least two vertices from $Q_u$ and one from $Q_v$, or vice-versa, or
$|K|=2$ and $X$ contains one vertex from each of $Q_u,Q_v$, and in any
case we have $|K\sm X|\le q-4$.  Thus (\ref{omm4}) holds.

\medskip

By (\ref{omm4}) and the induction hypothesis we have $\chi(G\sm X) \le
\lceil \frac{5}{4} \omega(G\sm X)\rceil = \lceil \frac{5}{4}
(q-4)\rceil = \lceil \frac{5}{4} q\rceil -5$.  By
Lemma~\ref{lem:2blowup} we know that $G[X]$ is $5$-colorable.  We can
take any $\chi(G\sm X)$-coloring of $G\sm X$ and use five new colors
for the vertices of $X$, and we obtain a coloring of $G$.  It follows
that $\chi(G) \le \lceil \frac{5}{4} q\rceil$ as desired.

Therefore we may assume that some $q$-clique of $G$ is not balanced,
say, up to symmetry, the clique $Q_{z a}$, with $|Q_{z}|\ge q-1$ and
$|Q_{a}|\le 1$.  So we also have $|Q_{b}|\le 1$ and $|Q_{c}|\le 1$.

Suppose that both $Q_{a w_1}$ and $Q_{a w_4}$ are $q$-cliques.  So
$|Q_{w_1}|\ge q-1$ and $|Q_{w_4}|\ge q-1$.  This implies $|Q_{w_j}|\le
1$ for each $j\in\{2,3,5,6\}$.  It follows that each of the cliques
$Q_{bw_2}$, $Q_{bw_5}$, $Q_{cw_3}$, $Q_{cw_6}$, $Q_{w_2w_3}$,
$Q_{w_5w_6}$ has size at most~$2$, so they are not $q$-cliques.  Then
$\{z,w_1,w_4\}$ is a good stable set.

Therefore we may assume that one of $Q_{a w_1}$ and $Q_{a w_4}$ is not
a $q$-clique.  Likewise, one of $Q_{b w_2}$ and $Q_{b w_5}$ is not a
$q$-clique, and one of $Q_{cw_3}$ and $Q_{c w_6}$ is not a $q$-clique.
This implies, up to symmetry, that we have either: (a) each of $Q_{a
w_1}$, $Q_{bw_5}$, $Q_{cw_3}$ is not a $q$-clique, or (b) each of
$Q_{aw_1}$, $Q_{bw_2}$, $Q_{cw_3}$ is not a $q$-clique.  In case (a),
we see that $\{z,w_2,w_4,w_6\}$ is a good stable set of $G$.  Hence
assume that we are in case~(b) and not in case~(a), and so $Q_{bw_5}$
is a $q$-clique, and so $|Q_{w_5}|\ge q-1$.  Hence $|Q_{w_4}|\le 1$
and $|Q_{w_6}|\le 1$.  It follows that $Q_{aw_4}$ and $Q_{cw_6}$ are
cliques of size at most~$2$, so they are not $q$-cliques.  Now
$Q_{aw_4}$, $Q_{bw_2}$, and $Q_{cw_6}$ are not $q$-cliques, so we are
in a situation similar to case~(a).  This completes the proof.
\end{proof}

We immediately have the following.

\begin{cor}\label{cor:bu-c5}
If $G$ is any blowup of
$C_5$, then $\chi(G)\le
\lceil \frac{5}{4} \omega(G)\rceil$. \hfill{$\Box$}
\end{cor}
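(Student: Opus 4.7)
The plan is to derive this corollary immediately from Theorem~\ref{thm:bu-peter} by exhibiting $C_5$ as an induced subgraph of the Petersen graph $H_1$. Since $H_1$ has girth $5$ but is not itself a $5$-cycle, it does contain an induced $C_5$; fix any such cycle and label its vertices $v_1,\ldots,v_5$ (in cyclic order).

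Next, given any blowup $G$ of $C_5$ with associated cliques $Q_1,\ldots,Q_5$, I would reinterpret $G$ as a blowup of $H_1$: assign $Q_i$ to $v_i$ for each $i\in\{1,\ldots,5\}$, and assign the empty clique to each of the remaining five vertices of $H_1$. This is a legitimate blowup because the paper's definition explicitly allows the parts $Q_v$ to be empty, and the adjacencies inside $G$ match those prescribed by $H_1$ on the nonempty parts (since $v_1\cdots v_5$ is an \emph{induced} $C_5$ in $H_1$, no spurious edges or non-edges arise from the embedding). Thus $G$ is literally a blowup of $H_1$.

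Once this identification is made, Theorem~\ref{thm:bu-peter} applies verbatim and gives $\chi(G)\le \lceil \tfrac{5}{4}\omega(G)\rceil$, which is exactly the claim. There is essentially no technical obstacle: the only point to verify is that empty parts are allowed in the definition of a blowup, and this is stated explicitly in the paper. Hence the corollary follows with no additional argument beyond the observation that $C_5$ is an induced subgraph of $H_1$.
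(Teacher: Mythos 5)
Your proposal is correct and matches the paper's intended derivation: the paper presents the corollary as an immediate consequence of Theorem~\ref{thm:bu-peter}, precisely because $C_5$ is an induced subgraph of the Petersen graph and the definition of a blowup allows empty parts. Nothing further is needed.
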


\bigskip

\noindent{\bf Blowups of $F_3$}

Consider the graph $F_3$  as shown in Figure~\ref{fig:f123}.

\begin{lemma}\label{lem:2bu-f3}
Let $G$ be the $2$-blowup of $F_3$.  Then $\chi(G)=7$.
\end{lemma}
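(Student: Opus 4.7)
The plan is to use the correspondence recalled just before Lemma~\ref{lem:2blowup}: a $k$-coloring of the $2$-blowup of a graph $H$ is equivalent to a family of $k$ stable sets of $H$ covering every vertex of $H$ at least twice. Taking $H=F_3$, the lemma reduces to computing the $2$-fold chromatic number of $F_3$, which I claim is $7$. Recall the structure of $F_3$: it consists of the six-cycle $v_1v_2v_3v_4v_5v_6$ together with a triangle on three further vertices $x,y,z$, where $x$ is adjacent to $v_1,v_2,v_3$, $y$ to $v_3,v_4,v_5$, and $z$ to $v_5,v_6,v_1$. A direct inspection shows $\omega(F_3)=\alpha(F_3)=3$, and the five maximum stable sets of $F_3$ are
\[
T_0=\{v_1,v_3,v_5\},\ T_e=\{v_2,v_4,v_6\},\ T_x=\{x,v_4,v_6\},\ T_y=\{y,v_2,v_6\},\ T_z=\{z,v_2,v_4\}.
\]

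For the upper bound $\chi(G)\le 7$, I would exhibit seven stable sets of $F_3$ that cover every vertex exactly twice. A convenient such family is
\[
T_0,\ T_x,\ T_y,\ T_z,\ \{v_1,y\},\ \{v_3,z\},\ \{v_5,x\},
\]
and a routine check verifies that every vertex of $F_3$ appears in exactly two of these seven stable sets; translating back, this gives a valid $7$-coloring of~$G$.

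For the lower bound $\chi(G)\ge 7$, suppose for contradiction that six stable sets $S_1,\ldots,S_6$ of $F_3$ cover every vertex twice. Then $\sum_i|S_i|\ge 2\cdot 9=18$, and since $|S_i|\le\alpha(F_3)=3$, equality must hold throughout and each $S_i$ is a maximum stable set, i.e.\ one of $T_0,T_e,T_x,T_y,T_z$. However $T_x$ (resp.\ $T_y$, $T_z$) is the only maximum stable set containing $x$ (resp.\ $y$, $z$), and $T_0$ is the only maximum stable set containing any of $v_1,v_3,v_5$. Hence each of $T_0,T_x,T_y,T_z$ must appear at least twice among $S_1,\ldots,S_6$, which already forces at least $8>6$ stable sets, a contradiction. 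The only mildly creative step is discovering the right seven-set family for the upper bound; once the maximum stable sets are enumerated, both the construction and the lower-bound counting argument are short and elementary.
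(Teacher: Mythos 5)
Your proof is correct. For the upper bound you and the paper do essentially the same thing: exhibit seven stable sets of $F_3$ covering every vertex twice (equivalently, an explicit $7$-coloring of the blowup), and your family checks out. For the lower bound the two arguments genuinely differ. The paper isolates the induced subgraph on $Q_{v_1}\cup Q_{v_2}\cup Q_{v_3}\cup Q_y\cup Q_z$, observes it has $10$ vertices and no stable set of size $3$ (so it needs $5$ colors), and then adds $Q_x$, which is complete to all of it, to get $\chi\ge 7$. You instead classify all maximum stable sets of $F_3$ (correctly: $T_0$, $T_e$, $T_x$, $T_y$, $T_z$ are the only ones, and $T_x,T_y,T_z,T_0$ are respectively the unique ones through $x$, $y$, $z$, and through each of $v_1,v_3,v_5$) and run a double-counting argument: six stable sets double-covering nine vertices forces all six to be maximum, and then multiplicity constraints force at least eight sets. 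Both arguments are short; the paper's has the advantage of not requiring a full enumeration of the maximum stable sets (only a look at one $5$-vertex induced subgraph), while yours is self-contained at the level of $F_3$ and makes the tightness of the bound transparent via the fractional/multi-coloring viewpoint already set up before Lemma~\ref{lem:2blowup}.
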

\begin{proof}
For each $v\in V(F_3)$ we call $v$ and $v'$ the two vertices of $Q_v$
in $G$.  The seven sets $\{x,v_4,v_6\}$, $\{y,v_2,v'_6\}$,
$\{z,v'_2,v'_4\}$, $\{x',v_5\}$, $\{y',v_1\}$, $\{z',v_3\}$ and
$\{v'_1,v'_3,v'_5\}$ form a $7$-coloring of $G$.  Hence $\chi(G) \leq
7$.  On the other hand we see that $\chi(G[Q_{v_1}\cup Q_{v_2}\cup
Q_{v_3}\cup Q_y\cup Q_z])\ge 5$ since that subgraph has $10$ vertices
and no stable set of size~$3$, and consequently $\chi(G[Q_x\cup
Q_1\cup Q_2\cup Q_3\cup Q_y\cup Q_z])\ge 7$.  Hence $\chi(G) \geq 7$.
\end{proof}

We say that $G$ is a \emph{special blowup} of $F_3$ if (up to
symmetry) we have $|Q_u|\le 1$ for each $u\in\{x,v_4,v_5,v_6\}$ and
$|Q_v|=t$ for each $v\in\{v_1,v_2,v_3,y,z\}$, for some integer $t\ge
2$.
\begin{lemma}\label{lem:sbu-f3}
Let $G$ be a special blowup of $F_3$.  Then
$\chi(G)\le\lceil\frac{5}{4}\omega(G)\rceil$.
\end{lemma}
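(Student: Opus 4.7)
The plan is to apply Theorem~\ref{thm:tools}~(\ref{fives}) by exhibiting five stable sets of $G$ whose removal drops the clique number by at least four. First I will fix notation: call the five cliques $Q_{v_1},Q_{v_2},Q_{v_3},Q_y,Q_z$ of size $t$ the \emph{large} cliques and the four $Q_x,Q_{v_4},Q_{v_5},Q_{v_6}$ (each of size at most~$1$) the \emph{thin} cliques. Observe that $\{v_1,v_2,v_3,y,z\}$ induces a $C_5$ in~$F_3$ (namely $v_1{-}v_2{-}v_3{-}y{-}z{-}v_1$), so the large cliques together induce a $t$-blowup of this $C_5$, which I will denote in cyclic order by $Q_1,\ldots,Q_5$ (indices taken modulo~$5$). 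A direct inspection of the ten triangles of $F_3$ then shows $\omega(G)=2t+m$, where $m=\max(|Q_x|,|Q_{v_4}|,|Q_{v_5}|,|Q_{v_6}|)\in\{0,1\}$: the only cliques of size $2t+1$ in~$G$ arise from the triangles of $F_3$ with exactly one thin endpoint, namely $Q_x\cup E_i$ with $E_i:=Q_i\cup Q_{i+1}$, together with $Q_{v_3}\cup Q_y\cup Q_{v_4}$, $Q_{v_5}\cup Q_y\cup Q_z$, and $Q_{v_1}\cup Q_z\cup Q_{v_6}$.

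Next, for each $j\in\{1,\ldots,5\}$ I will define $S_j=\{p_j,q_j\}$ with $p_j\in Q_{j+1}$ and $q_j\in Q_{j+3}$; these pairs are stable since $Q_{j+1}$ and $Q_{j+3}$ are non-adjacent cliques of the $C_5$-blowup, and since $t\ge 2$ the $2t$ vertices $p_j,q_j$ can be chosen so that each $Q_i$ loses exactly two distinct vertices to $S_1\cup\cdots\cup S_5$. I will then enlarge these sets by the thin extras whenever present, placing $v_4\in S_4$, $v_5\in S_5$, and $v_6\in S_1$; this preserves stability because $v_4$ is non-adjacent to $Q_z\cup Q_{v_2}\supseteq S_4$, $v_5$ to $Q_{v_1}\cup Q_{v_3}\supseteq S_5$, and $v_6$ to $Q_{v_2}\cup Q_y\supseteq S_1$. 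The vertex $x$, if present, is complete to every large clique and is intentionally left out of all five stable sets.

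The heart of the argument is a counting check. For each edge-clique $E_i$, exactly four of the sets $S_j$ meet $E_i$ in one vertex each (the ``opposite'' $S_j$ misses it), giving $|E_i\setminus\bigcup_j S_j|=2t-4$. For each size-$(2t+1)$ clique involving a thin extra from $\{v_4,v_5,v_6\}$, one additional vertex is removed, yielding $2t-4$ again. Only for the cliques $Q_x\cup E_i$ does $x$ survive the removal, yielding $|K\setminus\bigcup_j S_j|=2t-3$. In every case a maximal clique of $G\setminus\bigcup_j S_j$ has at most $\omega(G)-4$ vertices, and Theorem~\ref{thm:tools}~(\ref{fives}) then delivers $\chi(G)\le\lceil\frac{5}{4}\omega(G)\rceil$.

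The main obstacle is bookkeeping rather than any deep combinatorial trick: I must exhaustively enumerate all maximal cliques of $G$ (they all come from blowups of the ten triangles of $F_3$, some of which collapse to size~$\le 2t$ when a thin clique is empty) and verify that each such clique loses the claimed number of vertices. The distinctness of the two vertices each $Q_i$ contributes to $\bigcup_j S_j$ pins down the identity $|E_i\cap\bigcup_j S_j|=4$, and this distinctness crucially uses $t\ge 2$.
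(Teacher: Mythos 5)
Your enumeration of the maximal cliques of a special blowup, your five stable sets $S_1,\ldots,S_5$, and the count showing that every maximal clique loses at least four vertices are all correct. The gap is in the last step. Theorem~\ref{thm:tools} is stated under the hypothesis that \emph{every proper induced subgraph} of $G$ satisfies the $\lceil\frac{5}{4}\omega\rceil$ bound, and the proof of item~(\ref{fives}) uses precisely that bound for $G\setminus(S_1\cup\cdots\cup S_5)$. Elsewhere in the paper this hypothesis is supplied by an induction inside a hereditary class, but the class of special blowups of $F_3$ is not closed under taking induced subgraphs, and you set up no induction. Concretely, because you deliberately leave $x$ out of all five stable sets, the residual graph $G\setminus\bigcup_j S_j$ is a blowup of the $5$-wheel $F_3\setminus\{v_4,v_5,v_6\}$ (rim cliques of size $t-2$ plus the hub $Q_x$), and no previously established result covers blowups of a wheel --- Corollary~\ref{cor:bu-c5} only covers blowups of $C_5$. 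Moreover your closing computation $\lceil\frac{5}{4}(\omega-4)\rceil+5=\lceil\frac{5}{4}\omega\rceil$ has no slack, so the residual bound genuinely must be checked. It does hold: $\chi(G\setminus\bigcup_j S_j)\le\lceil\frac{5}{2}(t-2)\rceil+1=\lceil\frac{10t-16}{4}\rceil\le\lceil\frac{10t-15}{4}\rceil=\lceil\frac{5}{4}(2t-3)\rceil$, by colouring the rim via Corollary~\ref{cor:bu-c5} and giving $Q_x$ one extra colour; adding this line repairs the proof.

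For comparison, the paper avoids the wheel altogether by using \emph{six} colour classes rather than five: the stable sets $\{v_1,v_3,v_5\}$, $\{v_2,y\}$, $\{v_2,z\}$, $\{v_1,y\}$, $\{v_3,z\}$, $\{x,v_4,v_6\}$ cover two vertices of each large clique and all of $Q_x\cup Q_{v_4}\cup Q_{v_5}\cup Q_{v_6}$, so the residual graph is a genuine blowup of $C_5$ and Corollary~\ref{cor:bu-c5} applies directly, with the arithmetic $\lceil\frac{5}{4}(\omega-5)\rceil+6\le\lceil\frac{5}{4}\omega\rceil$ closing the argument. The sixth class is exactly what allows $x$ --- which is complete to all five large cliques and hence cannot join any of your five sets --- to be removed together with everything else.
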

\begin{proof}
We prove the theorem by induction on $|V(G)|$.  If $Q_x\cup
Q_{v_4}\cup Q_{v_5}\cup Q_{v_6}=\es$, then $G$ is a blowup of $C_5$,
so the lemma holds by Corollary~\ref{cor:bu-c5}.  Hence assume that
$Q_x\cup Q_{v_4}\cup Q_{v_5}\cup Q_{v_6}\neq\es$.  It follows that
$\omega(G)=2t+1$.  Let $X$ be a subset of $V(G)$ obtained by taking
two vertices from $Q_v$ for each $v\in \{v_1,v_2,v_3,y,z\}$ and the set $Q_x\cup Q_{v_4}\cup Q_{v_5}\cup Q_{v_6}$.
Then $\omega(G\sm X)=2t-4=\omega(G)-5$.  In $F_3$
the six sets $\{v_1,v_3,v_5\}$, $\{v_2,y\}$, $\{v_2,z\}$, $\{v_1,y\}$,
$\{v_3,z\}$ and $\{x,v_4, v_6\}$ are such that every vertex from
$\{v_1,v_2,v_3,y,z\}$ belongs to two of them and every vertex from
$\{x,v_4,v_5,v_6\}$ belongs to one of them; hence they are equivalent
to a $6$-coloring of $G[X]$.  We can take any $\chi(G\sm X)$-coloring
of $G\sm X$ and use six new colors for $X$, and we obtain a coloring
of $G$.  Hence $\chi(G)\le\chi(G\sm X)+6 \le
\lceil\frac{5}{4}(\omega(G)-5)\rceil+6 = \lceil\frac{5}{4}
\omega(G)-\frac{25}{4}\rceil+6\ \le \lceil\frac{5}{4}\omega(G)\rceil$.
\end{proof}

\begin{theorem}\label{thm:bu-f3}
If $G$ is any blowup of $F_3$, then $\chi(G)\le \lceil \frac{5}{4}
\omega(G)\rceil$.
\end{theorem}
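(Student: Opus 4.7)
I would argue by induction on $|V(G)|$, with the base case $G=F_3$ settled directly (since $\chi(F_3)=4=\lceil 5\cdot 3/4\rceil$, as one can verify from the explicit four-coloring computed earlier). In the inductive step I may assume $G$ is connected and not a clique; every induced subgraph of a blowup of $F_3$ is itself a blowup of $F_3$ (allowing empty $Q_v$'s), so the induction hypothesis applies to every proper induced subgraph.

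The first move is to search for a good stable set in $G$. A direct inspection of $F_3$ reveals exactly eight non-redundant maximal stable sets: the five of size three, $\{v_1,v_3,v_5\}$, $\{v_2,v_4,v_6\}$, $\{x,v_4,v_6\}$, $\{y,v_2,v_6\}$, $\{z,v_2,v_4\}$, and the three of size two, $\{x,v_5\}$, $\{y,v_1\}$, $\{z,v_3\}$. Tabulating against the ten triangles $T_1,\ldots,T_{10}$ of $F_3$ (with the labeling from the proof of Theorem~\ref{lem:f3}), these miss respectively only $\{T_1\}$, $\{T_1,T_2,T_3,T_4\}$, $\{T_3\}$, $\{T_4\}$, $\{T_2\}$, $\{T_7,T_{10}\}$, $\{T_6,T_9\}$, $\{T_5,T_8\}$. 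Each lifts to a stable set of $G$ by choosing one vertex from each relevant $Q_v$. If any of these lifted sets meets every $\omega$-triangle of $G$, Theorem~\ref{thm:tools}(ii) closes the case at once.

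Otherwise, each of the eight lifted stable sets misses some $\omega$-triangle, forcing all of $T_1,T_2,T_3,T_4$ to be $\omega$-triangles (so $|Q_{v_1}|=|Q_y|=:b$, $|Q_{v_3}|=|Q_z|=:c$, $|Q_{v_5}|=|Q_x|=:a$, and $\omega=a+b+c$) together with at least one triangle in each of $(T_5,T_8)$, $(T_6,T_9)$, $(T_7,T_{10})$ being an $\omega$-triangle. Combined with $|Q_{v_2}|\le\min(b,c)$, $|Q_{v_4}|\le\min(a,c)$, $|Q_{v_6}|\le\min(a,b)$ (from requiring $T_5,\ldots,T_{10}\le\omega$), a short case analysis on the three disjunctions (eight subcases, collapsed by the three-fold rotational symmetry of $F_3$) shows $G$ must be either (a) a uniform $a$-blowup of $F_3$ (all $|Q_v|$ equal), or (b) up to rotation, a \emph{special blowup} in the sense of the paragraph preceding Lemma~\ref{lem:sbu-f3}. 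For case (a) with $a\ge 2$, I apply the two-vertex peel from the proof of Theorem~\ref{thm:bu-peter}: take $X$ with exactly two vertices from each $Q_v$, use Lemma~\ref{lem:2bu-f3} to color $G[X]$ with at most seven colors, and use induction on $G\setminus X$ (a blowup of $F_3$ with $\omega(G\setminus X)=\omega-6$) to conclude $\chi(G)\le 7+\lceil 5(\omega-6)/4\rceil=\lceil(5\omega-2)/4\rceil\le\lceil 5\omega/4\rceil$; the case $a=1$ is the base case $F_3$. For case (b), Lemma~\ref{lem:sbu-f3} applies directly.

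\textbf{Main obstacle.} The most delicate step is completing the structural case analysis above: the three disjunctions generate up to eight subcases, and some mixed configurations (for instance $b=c$ with $|Q_{v_2}|=b$ and $|Q_{v_4}|=a$ but $|Q_{v_6}|<a$) may be neither strictly uniform nor literally special. In such residual configurations I would either embed $G$ as an induced subgraph of a slightly larger blowup that is a genuine special blowup with the same $\omega$ (so that $\chi(G)\le\chi(\text{enclosure})\le\lceil 5\omega/4\rceil$ via Lemma~\ref{lem:sbu-f3}), or apply the two-vertex peel of case (a) directly after a careful check that $\omega(G\setminus X)\le\omega-6$ holds across \emph{all} triangles---not only the $\omega$-ones---using the derived equalities to control the non-maximum triangles where some $|Q_v|\in\{0,1\}$. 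A secondary technical point is handling empty $Q_v$'s at the start, where the peel reduction must be shown not to degrade past $\omega-6$ on triangles through such $v$; here the equalities $a=|Q_{v_5}|$ etc.\ keep things in check, but the bookkeeping is the most error-prone part of the argument.
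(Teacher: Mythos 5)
Your overall strategy is the one the paper follows: induct on $|V(G)|$, use Theorem~\ref{thm:tools}~(\ref{goods}) to assume there is no good stable set, deduce from the failure of a list of transversal stable sets that many triangles of $F_3$ carry $\omega$-cliques, and finish either by peeling a $2$-blowup colored with seven colors via Lemma~\ref{lem:2bu-f3} or by invoking Lemma~\ref{lem:sbu-f3}. The gap is in the middle step. Your claimed dichotomy, that the surviving configurations are either uniform blowups or special blowups, is false, as you yourself concede in the ``Main obstacle'' paragraph. For example, in your subcase $|Q_{v_4}|=a$, $|Q_{v_2}|=b=c$, the sizes $a=2$, $b=c=3$, $|Q_{v_6}|=1$ satisfy every constraint you derive ($T_5,T_6,T_7$ are then all $\omega$-triangles, so all three disjunctions hold), yet $G$ is neither uniform nor special. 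These residual configurations are exactly where the substance of the proof lies: they correspond to the paper's claim~(\ref{qno}) and the closing arithmetic on part sizes. Your proposed repair (run the two-vertex peel and verify $\omega(G\setminus X)\le\omega-6$ over all triangles, including those through a part of size at most one) does in fact go through --- e.g.\ for the triangle $\{z,v_6,v_1\}$ the needed inequality $|Q_z|+|Q_{v_1}|\le\omega-2$ reduces to $a\ge2$, and when the relevant minimum is at most one the graph is a special blowup up to an automorphism of $F_3$ --- but none of this is carried out in the proposal, and announcing a false structural statement and deferring its correction to an unexecuted ``careful check'' leaves the proof incomplete at its most delicate point.

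A second unaddressed issue is the role of empty parts in the stable-set deductions themselves, not just in the peel. The inference ``the lift of $\{v_1,v_3,v_5\}$ is not good, hence $Q_x\cup Q_y\cup Q_z$ is an $\omega$-clique'' is only valid when the relevant $Q_v$'s are nonempty: if, say, $Q_{v_1}=\emptyset$, a maximum clique avoiding the lifted set could be $Q_x\cup Q_{v_2}$ (living inside the triangle $\{x,v_1,v_2\}$), and then nothing forces $|Q_x|+|Q_y|+|Q_z|=\omega$. The paper devotes claim~(\ref{qxv}) and the opening of claim~(\ref{qc}) to ruling out these degeneracies (reducing to blowups of chordal graphs or of $C_5$, or exhibiting a good stable set). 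Your proposal mentions empty parts only in connection with the peel, so the equalities $|Q_{v_1}|=|Q_y|$, $|Q_{v_3}|=|Q_z|$, $|Q_{v_5}|=|Q_x|$ on which your entire case analysis rests are not actually established.
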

\begin{proof}
Let $q=\omega(G)$.  We prove the theorem by induction on $|V(G)|$.
Obviously the theorem holds if $G$ is any induced subgraph of $F_3$.  Now
suppose that $G$ is not an induced subgraph of $F_3$.  By
Theorem~\ref{thm:tools}~(\ref{goods}) we may assume that $G$ has no
good stable set.

Note that every maximal clique of $G$ consists of $Q_u\cup Q_v\cup
Q_w$ for some triangle $\{u,v,w\}$ in $F_3$, and we denote it as
$Q_{uvw}$.  We say that such a maximal clique is \emph{balanced} if
$|Q_u|\ge 2$, $|Q_v|\ge 2$, and $|Q_w|\ge 2$.

Suppose that every $q$-clique of $G$ is balanced.  Let $X$ be a subset
of $V(G)$ obtained by taking $\min\{2, |Q_v|\}$ vertices from $Q_v$
for each $v\in V(F_3)$.  The hypothesis that every $q$-clique is
balanced implies that $X$ contains exactly six vertices from each
$q$-clique of $G$, so $\omega(G\sm X)=\omega(G)-6$.  By the induction
hypothesis we have $\chi(G\sm X) \le \lceil \frac{5}{4} \omega(G\sm
X)\rceil = \lceil \frac{5}{4} (q-6)\rceil = \lceil \frac{5}{4} q-
\frac{30}{4}\rceil \le \lceil \frac{5}{4} q\rceil -7$.  By
Lemma~\ref{lem:2bu-f3} we know that $G[X]$ is $7$-colorable.  We can
take any $\chi(G\sm X)$-coloring of $G\sm X$ and use seven new colors
for the vertices of $X$, and we obtain a coloring of $G$.  It follows
that $\chi(G) \le \lceil \frac{5}{4}q\rceil$ as desired.  Therefore we
may assume that some $q$-clique of $G$ is not balanced.

\medskip

For each $v\in V(F_3)$, let $R_v$ consist of one vertex from $Q_v$ if
$Q_v\neq\es$, otherwise let $R_v=\es$.  We claim that we may assume
that:
\begin{equation}\label{qxv}
\mbox{Each of $Q_x$, $Q_y$ and $Q_z$ is non-empty.}
\end{equation}
Proof: Suppose up to symmetry that $Q_x=\es$.  If also $Q_{v_2}=\es$,
then $G$ is a blowup of $F_3\setminus\{x,v_2\}$, which is a chordal
graph, so $\chi(G)=\omega(G)$ and the theorem holds.  Therefore
 $Q_{v_2}\neq\es$. Likewise, $Q_{v_1}\neq\es$ and $Q_{v_3}\neq\es$.
Since $R_{v_1}\cup R_{v_3}\cup R_{v_5}$ is not a good stable set, we have $Q_{v_5}=\es$.
 Moreover, if $Q_{v_4}\cup Q_{v_6}=\es$, then $G$ is a blowup of $C_5$, and the theorem holds by Corollary~\ref{cor:bu-c5}.
  So up to symmetry we may assume that  $Q_{v_4}\neq\es$. Now if $Q_z=\es$, then $G$ is a blowup of $F_3\setminus\{x,z,v_5\}$, which is a chordal
graph, so $\chi(G)=\omega(G)$ and the theorem holds.
  So suppose that $Q_z\neq\es$. Then $R_{v_2}\cup R_{v_4} \cup R_z$ is a good stable set of $G$. Hence we may assume that (\ref{qxv}) holds.

\medskip

We claim that we may assume that:
\begin{equation}\label{qc}
\longbox{Each of $Q_{xyz}$, $Q_{xyv_3}$, $Q_{yzv_5}$, $Q_{zxv_1}$,
$Q_{xv_1v_2}$, $Q_{yv_3v_4}$ is a $q$-clique, and either $Q_{zv_5v_6}$
or  $Q_{xv_2v_3}$ is a $q$-clique.}
\end{equation}
Proof: If two of $R_{v_1}, R_{v_3}, R_{v_5}$ are empty, say
$R_{v_1}\cup R_{v_3}=\es$, then $G$ is a blowup of
$F_3\sm\{v_1,v_3\}$, which is a chordal graph, so $\chi(G)=\omega(G)$.
So at least two of $R_{v_1}, R_{v_3}, R_{v_5}$ are non-empty.  Since
$R_{v_1}\cup R_{v_3}\cup R_{v_5}$ is not a good stable set, there is a
$q$-clique in $G\sm (R_{v_1}\cup R_{v_3}\cup R_{v_5})$, and this
clique can only be $Q_{xyz}$.  Now consider the stable set $R_{x46} =
R_x\cup R_{v_4}\cup R_{v_6}$, which is not empty by (\ref{qxv}).
Since it is not a good stable set, there is a $q$-clique in $G\sm
R_{x46}$, and so $Q_{yzv_5}$ is a $q$-clique.  Likewise, $Q_{xyv_3}$
and $Q_{zxv_1}$are $q$-cliques.  Now consider the stable set $R_x\cup
R_{v_5}$.  Since it is not a good stable set, we deduce that one of
$Q_{yv_3v_4}$ and $Q_{zv_6v_1}$ is a $q$-clique.  Likewise, one of
$Q_{zv_5v_6}$ and $Q_{xv_2v_3}$ is a $q$-clique, and one of
$Q_{xv_1v_2}$ and $Q_{yv_4v_5}$ is a $q$-clique.  Up to symmetry this
yields the possibilities described in (\ref{qc}).  Thus we may assume
that (\ref{qc}) holds.

\medskip

Next we claim that we may assume that:
\begin{equation}\label{qno}
\mbox{ $Q_{zv_5v_6}$ is not a $q$-clique.}
\end{equation}
Proof: Suppose not.

First we show that we may assume that $|Q_{v_1}|\geq 2$. Suppose that $|Q_{v_1}|= \varepsilon \leq 1$. Let $a=|Q_{v_2}|$ and $b=|Q_x|$.  Since $Q_{xv_1v_2}$ is a $q$-clique,
we have $a+b+\varepsilon=q$.  Then, using the $q$-cliques given by
(\ref{qc}), we deduce successively that $|Q_z|=a$,
$|Q_y|=\varepsilon$, $|Q_{v_5}|=b$, $|Q_{v_6}|=\varepsilon$,
$|Q_{v_3}|=a$, and $|Q_{v_4}|=b$.  We have $|Q_{xv_2v_3}|= b+2a\le
q=a+b+\varepsilon$, so $a\le\varepsilon$.  Also we have
$|Q_{yv_4v_5}|= 2b+\varepsilon\le q=a+b+\varepsilon$, so $b\le a$.
Hence $b\le a\le\varepsilon\le 1$, which means that $G$ is isomorphic
to an induced subgraph of $F_3$, so the theorem holds.  So we may assume that $|Q_{v_1}|\geq 2$. Likewise,  we may assume that $|Q_{v_3}|\geq 2$, and $|Q_{v_5}|\geq 2$.

Next  we may assume that $|Q_{x}|\geq 2$ (otherwise since $Q_{xyz}$ and $Q_{yzv_5}$ are $q$-cliques (by (\ref{qc})), we have $|Q_{v_5}| \leq 1$, a contradiction). Likewise,  we have $|Q_{y}|\geq 2$ and  $|Q_{z}|\geq 2$.

Further, we may assume that  $|Q_{v_6}|\geq 2$ (otherwise since by (\ref{qc}) and by our assumption, $Q_{yzv_5}$ and $Q_{zv_5v_6}$ are $q$-cliques, we have $|Q_{y}| \leq 1$, a contradiction).  Likewise,  we have $|Q_{v_2}|\geq 2$ and  $|Q_{v_4}|\geq 2$.

Hence  the above analysis shows that every $q$-clique in $G$ is balanced, and the theorem holds as above. Thus we may assume
that (\ref{qno}) holds.

\medskip

Now by (\ref{qc}) and (\ref{qno}), we may assume that $Q_{xv_2v_3}$ is a $q$-clique. Let $a=|Q_{v_5}|$, $b=|Q_z|$ and $t=|Q_y|$. Then by (\ref{qc}),
$a+b+t=q$, and by using the $q$-cliques given by (\ref{qc}), we deduce successively that $|Q_x|=a$, $|Q_{v_1}|=t$ and $|Q_{v_2}|=b$.
Then again by (\ref{qc}) and by our assumption, since $Q_{xv_2v_3}$ and $Q_{xyv_3}$ are $q$-cliques, we see that $|Q_{v_3}|=b=t$.  So, $q=a+2t$. Since
$Q_{yv_3v_4}$ is a $q$-clique (by (\ref{qc})), we have $|Q_{v_4}|=a$. Thus $|Q_{yv_4v_5}|=2a+t\leq q=a+2t$, so $a\leq t$.
First suppose that $t\leq 1$. Then $a\leq 1$ and hence $q\leq 3$. This implies that, we may assume that $|Q_{v_6}|\leq 1$ (otherwise since $Q_{zv_5v_6}$ is not a $q$-clique (by (\ref{qno})),  $a+2t>a+t+|Q_{v_6}|$, and hence $t\geq 2$ which is a contradiction.). Thus  $G$ is an induced subgraph of $F_3$ and the theorem holds. So suppose that $t\geq 2$. Since some $q$-clique of $G$ is not balanced, there is a vertex $w\in
\{x,v_4,v_5\}$ such that $|Q_w|\le 1$. In any case, we have $a\leq 1$, and hence $q\leq 2t+1$. Now $|Q_{v_6v_1z}| = |Q_{v_6}|+2t \leq q\leq 2t+1$, so $|Q_{v_6}|\leq 1$. Hence the above analysis shows that $G$ is a special blowup of $F_3$, so the theorem holds as
a consequence of Lemma~\ref{lem:sbu-f3}.
\end{proof}

\bigskip

\noindent{\bf Blowups of $H_2,H_3, H_4$ and $H_5$}

Let $H_2,\ldots, H_5$ be the graphs as shown in Figure~\ref{fig:h12345}.

\begin{theorem}\label{bu-h2}
Let $G$ be any blowup of $H_2$.  Then $\chi(G)\le \lceil
\frac{5}{4}\omega(G)\rceil$.
\end{theorem}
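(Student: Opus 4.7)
The plan is to mirror the arguments used for blowups of the Petersen graph (Theorem~\ref{thm:bu-peter}) and of $F_3$ (Theorem~\ref{thm:bu-f3}). Let $q=\omega(G)$ and proceed by induction on $|V(G)|$. If $G$ is an induced subgraph of $H_2$ the bound follows by direct inspection, so I may assume $|Q_u|\ge 2$ for some $u\in V(H_2)$ and in particular $q\ge 3$. By Theorem~\ref{thm:tools}(\ref{goods}) I may also assume that $G$ has no good stable set. Every maximal clique of $G$ has the form $Q_K=\bigcup_{u\in K}Q_u$ for some maximal clique $K$ of $H_2$; call $Q_K$ \emph{balanced} if $|Q_u|\ge 2$ for every $u\in K$.

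First I would treat the \emph{balanced case}, in which every $q$-clique of $G$ is balanced. Let $X\subseteq V(G)$ be obtained by selecting $\min(2,|Q_v|)$ vertices from each $Q_v$; then $G[X]$ is an induced subgraph of the $2$-blowup of $H_2$. A separate explicit lemma, analogous to Lemmas~\ref{lem:2blowup} and~\ref{lem:2bu-f3} and proved by exhibiting a short list of stable sets of $H_2$, would bound $\chi(G[X])$ by a small constant $c$ chosen so that the arithmetic inequality $\lceil\tfrac{5}{4}(q-2\omega(H_2))\rceil+c\le\lceil\tfrac{5}{4}q\rceil$ holds; since $\omega(H_2)=3$, this forces $c\le 7$. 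Under the balance hypothesis one has $\omega(G\setminus X)\le q-2\omega(H_2)$, so the induction hypothesis applied to $G\setminus X$, combined with a $c$-coloring of $G[X]$ using new colors, delivers the desired bound.

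Next I would handle the \emph{unbalanced case}, in which some $q$-clique $Q_K$ has a part $Q_u$ with $|Q_u|\le 1$. Since $|Q_K|=q$, the remaining parts of $K$ must sum to at least $q-1$, which forces them to be close to $q$ in size; symmetric arguments applied at the other $q$-cliques of $G$ then propagate constraints and force several additional parts to have size $\le 1$ or to be empty. For each such configuration (up to the symmetries of $H_2$), the plan is either to exhibit an explicit good stable set of $G$, obtained by picking one vertex from carefully selected parts so as to meet every $q$-clique (contradicting the no-good-stable-set assumption), or to argue that enough parts are empty that $G$ is a blowup of a proper induced subgraph of $H_2$, such as a $C_6$, a chordal subgraph, or a graph to which Corollary~\ref{cor:bu-c5} applies, in which case the bound follows from an earlier result or from the perfectness of chordal graphs.

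The main obstacle will be the unbalanced case: enumerating all relevant unbalanced configurations of $H_2$ and producing a good stable set (or reduction) in each. Because $H_2$ has small clique number and only a handful of vertices, the number of subcases should remain manageable, but as in the proof of Theorem~\ref{thm:bu-f3} a careful use of the automorphisms of $H_2$ and repeated appeals to the no-good-stable-set assumption will be needed to keep the case analysis compact.
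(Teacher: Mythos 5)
There is a genuine gap, and it sits exactly where you locate the core of your plan: the balanced case. Your arithmetic assumes $\omega(H_2)=3$, but $H_2$ contains cliques of size four, namely $\{a,b,c,v_3\}$ and $\{a,b,c,v_6\}$ (these appear in the paper's list of maximal cliques $Q_{abcv_3}$, $Q_{abcv_6}$ of a blowup of $H_2$). Consequently the $2$-blowup of $H_2$ has clique number $8$, so the lemma you intend to prove, that it is $c$-colorable for some $c\le 7$, is false. Correcting the clique number does not rescue the scheme: with $\omega(H_2)=4$ you would need $c\le 10$ (which is attainable), but then the reduction breaks at the other end, because $H_2$ has maximal cliques of two different sizes ($3$ and $4$). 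A $q$-clique of $G$ lying over a triangle of $H_2$ (say $Q_a\cup Q_{v_1}\cup Q_{v_2}$) that is ``balanced'' loses only six vertices when you delete two per part, so you can only guarantee $\omega(G\setminus X)\le q-6$, and $\lceil\frac{5}{4}(q-6)\rceil+10>\lceil\frac{5}{4}q\rceil$. The balanced/unbalanced dichotomy works for the Petersen graph and for $F_3$ precisely because all maximal cliques of those graphs have the same size; $H_2$ does not have this property, so the argument does not transfer. Your unbalanced case is also only a sketch (``propagate constraints and enumerate''), so nothing in the proposal closes this hole.

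For comparison, the paper avoids the $2$-blowup lemma entirely. It first uses the minimum-degree condition of Theorem~\ref{thm:tools}(\ref{degq}) to force $|Q_{v_i}|\ge 2$ for every vertex $v_i$ of the hexagon (after reducing to a blowup of $C_5$ when $Q_{v_1}\cup Q_{v_2}=\emptyset$ and $Q_b=\emptyset$, and to a blowup of $F_3$ when $Q_c=\emptyset$, handled by Corollary~\ref{cor:bu-c5} and Theorem~\ref{thm:bu-f3}). It then finishes by exhibiting explicit families of five or six stable sets (depending on whether $|Q_c|\ge 2$, or $|Q_c|=1$ with $Q_a,Q_b\neq\emptyset$, or $Q_a=\emptyset$) that meet every relevant maximal clique in at least four, respectively five, vertices, and invokes Theorem~\ref{thm:tools}(\ref{fives}). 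If you want to keep your outline, you would have to replace the balanced-case reduction by an argument of this second kind, or at least treat the triangle-based $q$-cliques separately.
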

\begin{proof}
By the definition of a blowup, $V(G)$ is partitioned into cliques
$Q_v$, $v\in V(H_2)$.  If $Q_v\neq\emptyset$ we call $v$ one vertex of
$Q_v$, and if $|Q_v|\ge 2$ we call $v'$ a second vertex of $Q_v$.   We denote, e.g., the clique $Q_a\cup Q_{v_1}\cup Q_{v_2}$ by
$Q_{av_1v_2}$, etc. Let
$q=\omega(G)$. We prove the theorem by induction on $|V(G)|$.  By
Theorem~\ref{thm:tools} we may assume that every vertex $x\in V(G)$
satisfies $d(x)\ge \lceil\frac{5}{4}q\rceil$ and that $G$ has no good
stable set.

Suppose that $Q_{v_1}\cup Q_{v_2}=\emptyset$.  If $Q_b\neq\emptyset$,
then $\{b\}$ is a good stable set.  If $Q_b=\emptyset$, then $G$ is a
blowup of $C_5$, and the result follows from
Corollary~\ref{cor:bu-c5}.  Hence we may assume that $Q_{v_1}\cup
Q_{v_2}\neq\emptyset$.  Then both $Q_{v_1}$ and $Q_{v_2}$ are
non-empty, for otherwise some vertex in $Q_{v_1}\cup Q_{v_2}$ is
simplicial (and so has degree less than $q$).  Since $N[v_1]$ is
partitioned into the two cliques $Q_{v_6}$ and $Q_{av_1v_2}$, and $d(v_1)\ge \lceil\frac{5}{4}q\rceil$, we deduce that
$|Q_{v_6}|\ge \lceil\frac{q}{4}\rceil+1\ge 2$; and similarly (since
$N[v_1]$ is also partitioned into cliques $Q_{av_1v_6}$ and $Q_{v_2}$) we have $|Q_{v_2}|\ge
\lceil\frac{q}{4}\rceil+1\ge 2$.  Likewise $|Q_{v_3}|\ge 2$ and
$|Q_{v_1}|\ge 2$.  By the same argument we may assume that both
$Q_{v_4}$ and $Q_{v_5}$ are non-empty, and consequently $|Q_{v_4}|\ge
2$ and $|Q_{v_5}|\ge 2$.

If $Q_c = \emptyset$, then $G$ is a blow-up of $F_3$, and the
theorem follows from Theorem~\ref{thm:bu-f3}. So we may assume that $|Q_c|\geq 1$.
Then the set of maximal cliques  of $G$ is $\{$$Q_{av_1v_6}$, $Q_{av_1v_2}$, $Q_{av_2v_3}$,  $Q_{bv_3v_4}$, $Q_{bv_4v_5}$, $Q_{bv_5v_6}$, $Q_{abcv_3}$, $Q_{abcv_6}$$\}$.

Suppose that $|Q_c|\ge 2$.  Consider the five stable sets
$\{v_1,v_3,v_5\}$, $\{v_2,v_4,v_6\}$, $\{c,v'_1,v'_5\}$,
$\{c',v'_2,v'_4\}$, and $\{v'_3,v'_6\}$.  Then
every maximal clique of $G$ contains four vertices from their union;
so the result follows from Theorem~\ref{thm:tools}~(\ref{fives}) (with
$t=5$).    Therefore let us assume
that $|Q_c|=1$.

Suppose that both $Q_a$ and $Q_b$ are non-empty.  Consider the six
stable sets $\{v_1,v_3,v_5\}$, $\{v_2,v_4,v_6\}$, $\{v'_3,v'_6\}$,
$\{a,v'_5\}$, $\{b,v'_2\}$ and $\{c,v'_1,v'_4\}$.  Then  every maximal clique of $G$ contains five vertices from their
union; so the result follows from
Theorem~\ref{thm:tools}~(\ref{fives}) (with $t=6$).

Therefore we may assume up to symmetry that $Q_a=\emptyset$.  Note
that $Q_{bcv_3}$ is not a $q$-clique of $G$, because
$Q_{bv_3v_4}$ is a clique and $|Q_{v_4}|>|Q_c|$.
Likewise, $Q_{bcv_6}$ is not a $q$-clique of $G$.
Consider the five stable sets $\{v_1,v_3,v_5\}$, $\{v_2,v_4, v_6\}$,
$\{v'_2,v'_4\}$, $\{v'_3,v'_6\}$ and $\{c, v'_1,v'_5\}$.  Then  every maximal clique of $G$ contains four vertices
from their union, except for $Q_{bcv_3}$ and
$Q_{bcv_6}$, which contain only three vertices from
their union, but we know that these two are not $q$-cliques.  It
follows that $\omega(G\setminus X)\le q-4$, so the result follows from
Theorem~\ref{thm:tools}~(\ref{fives}).
\end{proof}

\begin{theorem}\label{bu-h3}
Let $G$ be any blowup of $H_3$.  Then $\chi(G)\le \lceil
\frac{5}{4}\omega(G)\rceil$.
\end{theorem}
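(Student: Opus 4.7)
The strategy is to mirror the proof of Theorem~\ref{bu-h2} very closely, because the graphs $H_2$ and $H_3$ arise in item~(\ref{c6}) of Theorem~\ref{thm:structure} as closely related structures built from the same $C_6$-decomposition (they differ only in which of the $B_i$'s are non-empty). Writing $V(G)=\bigcup_{v\in V(H_3)}Q_v$ for the blowup partition, setting $q=\omega(G)$, and inducting on $|V(G)|$, I may assume by Theorem~\ref{thm:tools}(\ref{degq}) and~(\ref{goods}) that every vertex of $G$ has degree at least $\lceil\tfrac{5}{4}q\rceil$ and that $G$ admits no good stable set.

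The first step isolates the degenerate cases. If all the ``extra'' vertices of $H_3$ (those outside the canonical $C_6=v_1\cdots v_6$) have empty blowup classes, then $G$ is a blowup of $C_5$ or of a chordal induced subgraph of $H_3$, and the bound follows from Corollary~\ref{cor:bu-c5} or from $\chi=\omega$ on chordal graphs. Otherwise, using that for each vertex $v_i$ on the $C_6$ the closed neighborhood $N_{H_3}[v_i]$ is covered by at most two maximal cliques of $H_3$, the minimum-degree hypothesis forces $|Q_{v_i}|\ge\lceil q/4\rceil+1\ge 2$ for every $i\in\{1,\ldots,6\}$, as in the opening paragraphs of the $H_2$ proof.

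The core argument exhibits an explicit family of stable sets of $H_3$ whose union meets every maximal clique in sufficiently many points, so that Theorem~\ref{thm:tools}(\ref{fives}) applies. When some extra vertex $w$ of $H_3$ satisfies $|Q_w|\ge 2$, I will write down five or six stable sets of $H_3$ (adapting the family $\{v_1,v_3,v_5\},\{v_2,v_4,v_6\},\{c,v'_1,v'_5\},\{c',v'_2,v'_4\},\{v'_3,v'_6\}$ used for $H_2$) whose union meets every maximal clique in at least $t-1\ge 4$ points; the resulting set $X$ is $t$-colorable in $G$, and since $\omega(G\setminus X)\le q-(t-1)$, induction together with Theorem~\ref{thm:tools}(\ref{fives}) gives the bound. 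When all extra vertices satisfy $|Q_w|\le 1$, the maximal cliques of $H_3$ that contain an extra vertex are strictly smaller than $q$ (just as $|Q_{bcv_3}|<q$ in the $H_2$ proof, because of the neighbouring $|Q_{v_i}|\ge 2$), so they can be disregarded when estimating $\omega(G\setminus X)$, and a covering by five stable sets again suffices.

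The principal obstacle is the concrete bookkeeping for $H_3$: I must enumerate the maximal cliques of $H_3$ and choose the stable sets so that each maximal clique is met four times or else is provably not a $q$-clique. Because $H_3$ differs from $H_2$ only in a small controlled way (one of the $B_i$'s is absent while one of $B_{i+2},B_{i+4}$ is non-empty, per Theorem~\ref{thm:c6nof1}), the list of stable sets and the case split on whether some $|Q_w|\ge 2$ will need only minor adjustments, and I expect no genuinely new ideas beyond those developed for Theorem~\ref{bu-h2}.
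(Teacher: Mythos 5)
There is a genuine gap: the two steps you import from the proof of Theorem~\ref{bu-h2} do not survive the passage to $H_3$, and the combinatorial core of your argument is deferred rather than done. Concretely, $H_3$ is the $C_6$ $v_1\cdots v_6$ together with three pairwise non-adjacent vertices $b_1,b_3,b_5$, where $b_j$ is adjacent to $v_{j-1},v_j,v_{j+1},v_{j+2}$; equivalently it is the square of a $9$-cycle, and \emph{every} maximal clique is a triangle consisting of an edge $v_iv_{i+1}$ of the $C_6$ plus one $b_j$. Your degree step fails here: in $H_2$ one could write $N[v_1]$ as $Q_{v_6}$ plus a single clique $Q_{av_1v_2}$, so $d(v_1)\ge\lceil\tfrac{5}{4}q\rceil$ forced $|Q_{v_6}|\ge\lceil\tfrac{q}{4}\rceil+1$; in $H_3$ the two maximal cliques covering $N[v_i]$ (say $Q_{v_iv_{i+1}b_j}$ and $Q_{v_{i-1}v_ib_{j'}}$) meet only in $Q_{v_i}$, so the degree hypothesis only bounds the \emph{sum} $|Q_{v_{i-1}}|+|Q_{b_{j'}}|$ from below and does not force $|Q_{v_i}|\ge 2$ for the $C_6$-classes, which is exactly what your five-stable-set covering needs. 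Likewise your fallback for the case ``all extra classes of size at most $1$'' cannot work: in $H_3$ every maximal clique contains a $b$-vertex, so the cliques you propose to ``disregard as not $q$-cliques'' are all of them; and your degenerate case ``all extra classes empty'' yields a blowup of $C_6$, which is neither chordal nor a blowup of $C_5$ (it is handled only because blowups of perfect graphs are perfect). Since the explicit stable-set family and the case analysis are precisely what you postpone, the proposal as it stands does not constitute a proof.

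For comparison, the paper's argument needs none of this machinery. The vertex set of $H_3$ splits into three stable sets, $\{v_1,v_3,v_5\}$, $\{v_2,v_4,v_6\}$ and $\{b_1,b_3,b_5\}$, each of which meets every triangle of $H_3$; hence if all three blowup classes of any one of these triples are non-empty, picking one vertex per class gives a good stable set and Theorem~\ref{thm:tools}(ii) applies. Otherwise one class from each triple is empty, and up to the cyclic symmetry of $H_3$ (as $C_9^2$) the deleted triple either contains two consecutive vertices, in which case $G$ is a blowup of a chordal graph and $\chi(G)=\omega(G)$, or is of the form $\{i,i+2,i+4\}$, in which case $G$ is a blowup of $C_5$ and Corollary~\ref{cor:bu-c5} finishes. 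If you want to salvage your route, you would first have to prove (not assume) that some configuration forces all six $C_6$-classes to have size at least $2$, and that is false in general without first invoking the transversal stable sets above — at which point you are back to the paper's proof.
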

\begin{proof}
By the definition of a blowup, $V(G)$ is partitioned into nine cliques
$Q_i$, $v_i\in V(H_3)$.  If $Q_i\neq\emptyset$ we call $v_i$ one
vertex of $Q_i$.  Note that every maximal clique of $G$ consists of $Q_u\cup Q_v\cup Q_w$ for some triangle $\{u,v,w\}$ in $H_3$.
If each of $Q_1, Q_4, Q_7$ is non-empty, then
$\{v_1,v_4,v_7\}$ is a good stable set of $G$, and the result follows
from Theorem~\ref{thm:tools}~(\ref{goods}).  Hence we may assume that
one of $Q_1, Q_4, Q_7$ is empty.  Likewise we may assume that one of
$Q_2, Q_5, Q_8$ is empty, and that one of $Q_3, Q_6, Q_9$ is empty.
Up to symmetry and relabelling, this yields the following two cases.

\noindent (i) $Q_i\cup Q_{i+1}=\emptyset$ for some $i$.  Then $G$ is a
chordal graph, so $\chi(G)=\omega(G)$.

\noindent (ii) $Q_i\cup Q_{i+2}\cup Q_{i+4}=\emptyset$ for some $i$.
Then $G$ is a blowup of $C_5$, and the result follows from
Corollary~\ref{cor:bu-c5}.
\end{proof}

\begin{theorem}\label{thm:bu-h5}
Let $G$ be a blowup of $H_5$.  Then $\chi(G)\le
\lceil\frac{5}{4}\omega(G)\rceil$.
\end{theorem}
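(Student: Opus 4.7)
The plan is to follow the inductive template of Theorems~\ref{thm:bu-peter}, \ref{thm:bu-f3}, and~\ref{bu-h2}. Using the labelling from Theorem~\ref{thm:f2}, we write $V(H_5)=\{Q_1,Q_2,Q_3,Q_4,Q_5,R_2,R_3,R_5,X_{12},X_{34}\}$, partition $V(G)=\bigcup_v Q_v$ accordingly, and let $q=\omega(G)$. The induction is on $|V(G)|$. For the base case, if $G$ is an induced subgraph of $H_5$, then a direct check gives $\omega(H_5)=3$ and $\chi(H_5)=4=\lceil 5\cdot 3/4\rceil$, so the bound holds whenever $q\ge 3$; the residual case $q\le 2$ makes $G$ a triangle-free induced subgraph of $H_5$, and a short inspection shows $\chi(G)\le 3$.

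The combinatorial engine of the proof is the following fact about $H_5$: each of its ten maximal cliques is a triangle, and the five ``outer'' vertices $S=\{X_{12},X_{34},R_2,R_3,R_5\}$ form an independent set that meets every triangle of $H_5$ in exactly one vertex. The required non-adjacencies $[X_{12},X_{34}]=\es$, $[X_{12},R_2]=\es$, $[R_2,R_3]=\es$, etc.\ are all collected from items (f2-1), (f2-3), (f2-4) and the explicit statement $[R_2,R_3]=\es$ used in the proof of Theorem~\ref{thm:f2}. Consequently, once some $|Q_v|\ge 2$ and $G$ has no good stable set (which we may assume by Theorem~\ref{thm:tools}~(\ref{goods})), at least one of the five blowup classes $Q_{X_{12}},Q_{X_{34}},Q_{R_2},Q_{R_3},Q_{R_5}$ must be empty: otherwise choosing one vertex in each yields a very good stable set of $G$ (every maximal clique of $G$ sits inside some triangle of $H_5$ and meets $S$ at a non-empty class), contradicting the assumption.

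The remaining case analysis splits by which outer class is empty. Using the involutive symmetries $X_{12}\leftrightarrow X_{34}$ and $R_2\leftrightarrow R_3$, the three starting sub-cases are $R_5=\es$, $X_{12}=\es$, and $R_2=\es$. In each, $G$ is a blowup of a nine-vertex induced subgraph of $H_5$ that still admits a very good stable set formed by the remaining four outer vertices; if all four of those classes are non-empty we conclude by Theorem~\ref{thm:tools}~(\ref{goods}), otherwise we iterate by emptying another outer class. The recursion terminates: after all five outer classes are empty, $G$ is a blowup of the inner $C_5$ and the bound follows from Corollary~\ref{cor:bu-c5}; intermediate leaves in which the surviving subgraph of $H_5$ is chordal give $\chi=\omega$ immediately, while any remaining leaf yields a blowup of $F_3$ (handled by Theorem~\ref{thm:bu-f3}) or of an induced subgraph of $H_2$ or $H_3$ (Theorems~\ref{bu-h2} and~\ref{bu-h3}). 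The main obstacle will be the careful bookkeeping of these sub-cases and, in the few tight leaves that resist the stable-set reduction, a degree-based argument through Theorem~\ref{thm:tools}~(\ref{degq}) combined with an explicit multi-stable-set covering via Theorem~\ref{thm:tools}~(\ref{fives}), in the spirit of the special-blowup treatment of Lemma~\ref{lem:sbu-f3}.
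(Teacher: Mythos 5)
Your reduction at the top level is correct: every maximal clique of $H_5$ is a triangle containing exactly one of the five pairwise non-adjacent outer vertices, so if all five outer classes are non-empty then one vertex from each is a very good stable set and Theorem~\ref{thm:tools}~(\ref{goods}) applies. (Incidentally, $H_5$ is the $C_5$ on $Q_1,\dots,Q_5$ with one outer vertex attached to each triple of consecutive inner vertices, so the rotational symmetry makes all five outer vertices equivalent and your ``three starting sub-cases'' are really one.) The genuine gap is in the iteration. Your very-good-stable-set argument survives the emptying of outer classes only as long as every inner edge $Q_iQ_{i+1}$ still has a non-empty outer class complete to both of its ends. The two outer classes complete to both $Q_1$ and $Q_2$ are $X_{12}$ and $R_2$; once both are empty, $Q_1\cup Q_2$ is a maximal clique of $G$ containing no outer vertex, so the remaining outer classes do not give a very good (nor, in general, a good) stable set, and you cannot ``iterate by emptying another outer class''. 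This stuck configuration --- a blowup of $C_5$ plus the three remaining consecutive outer vertices, with all eight classes non-empty --- is not covered by any of your listed leaves: it contains the inner $C_5$, hence is not chordal (indeed none of your intermediate graphs is chordal, since you never delete inner classes, so that branch of your plan is vacuous); it is not a blowup of $C_5$; and an edge/degree count shows the underlying $8$-vertex graph is not an induced subgraph of $F_3$ or $H_2$, while it is not $3$-colorable and hence not an induced subgraph of $H_3$. All of the actual work of the theorem is therefore hidden in the sentence deferring ``the few tight leaves'' to an unspecified degree-plus-covering argument.

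For comparison, the paper's proof supplies precisely that missing argument and does so uniformly, with no case analysis on which outer classes vanish: it first merges an outer class into its central inner class when the two neighbouring outer classes are empty, disposes of the case where an inner class and its private outer class are both empty (there the underlying graph is chordal), and then uses Theorem~\ref{thm:tools}~(\ref{degq}) at one vertex per index $i$ to force $|Q_{v_j}|\ge\lceil\frac{q}{4}\rceil+1\ge 2$ for every inner class; the five stable sets consisting of one vertex of $Q_{v_i}$ and one of $Q_{v_{i+2}}$ then meet every maximal clique in at least four vertices, and Theorem~\ref{thm:tools}~(\ref{fives}) concludes. To complete your proof you would need to carry out this (or an equivalent) argument for the stuck configuration above, at which point the outer-class case analysis contributes nothing and can be discarded.
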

\begin{proof}
By the definition of a blowup, $V(G)$ is partitioned into ten cliques
$Q_v$, $v\in V(H_5)$.  Note that if $Q_{t_{i-1}}\cup
Q_{t_{i+1}}=\emptyset$ for some $i$, then the vertices of $Q_{t_i}$
can be moved to $Q_{v_i}$, so we may assume in that case that
$Q_{t_i}=\emptyset$ too.  Let $q = \omega(G)$.  We prove the theorem
by induction on $|V(G)|$.

If $Q_{v_i}\cup Q_{t_i}=\emptyset$ for some $i$, then  $G$ is a chordal graph (as it is a blowup of a chordal graph), so $\chi(G)=\omega(G)$.  Hence let us
assume that $Q_{v_i}\cup Q_{t_i}\neq\emptyset$ for all $i$.  For each
$i$ let $x_i=t_i$ if $Q_{t_i}\neq\emptyset$, else let $x_i=v_i$.  In
any case if $d(x_i)< \lceil\frac{5}{4}q\rceil$ then we can conclude
using Theorem~\ref{thm:tools}~(\ref{degq}) and induction.  Hence
assume that $d(x_i)\ge \lceil\frac{5}{4}q\rceil$ for all $i$.  If
$x_i=t_i$, then $N[x_i]$ is partitioned into the two sets
$Q_{v_{i-1}}$ and $Q_{v_i}\cup Q_{t_i}\cup Q_{v_{i+1}}$, and the
latter set is a clique (of size at most $q$), so the inequality
$d(x_i)\ge \lceil\frac{5}{4}q\rceil$ implies $|Q_{v_{i-1}}|\ge
\lceil\frac{q}{4}\rceil+1\ge 2$.  Similarly $|Q_{v_{i+1}}|\ge
\lceil\frac{q}{4}\rceil+1\ge 2$.  On the other hand suppose that
$x_i=v_i$ (i.e., $Q_{t_i}=\emptyset$).  If $Q_{t_{i-2}}\neq\emptyset$
then the same argument implies $|Q_{v_{i-1}}|\ge 2$; while if
$Q_{t_{i-2}}=\emptyset$, then, as observed above, we have
$Q_{t_{i-1}}=\emptyset$, so the same argument (about $v_i)$, implies
$|Q_{v_{i-1}}|\ge 2$ again.  Hence in all cases we have $|Q_{v_j}|\ge
2$ for all $j$.

For each $i$ let $u_i, v_i$ be two vertices in $Q_{v_i}$.  Consider
the five stable sets $\{u_i, v_{i+2}\}$ $(i=1,\ldots,5)$, and let $X$ be
their union.  Any maximal clique $K$ of $G$ is included in
$Q_{v_i}\cup Q_{v_{i+1}}$ for some $i$, and so $K$ contains $u_i, v_i,
u_{i+1}, v_{i+1}$.  So $\omega(G\setminus X)=q-4$ and we can conclude
using Theorem~\ref{thm:tools}~(\ref{fives}) (with $t=5$) and the
induction hypothesis.
\end{proof}

\begin{theorem}\label{bu-h4}
Let $G$ be any blowup of $H_4$.  Then $\chi(G)\le \lceil
\frac{5}{4}\omega(G)\rceil$.
\end{theorem}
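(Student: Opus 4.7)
The plan is to follow closely the pattern used for Theorems~\ref{bu-h2}, \ref{bu-h3}, and~\ref{thm:bu-h5}. Induct on $|V(G)|$. Write $q=\omega(G)$, and let $Q_v$ ($v\in V(H_4)$) denote the clique decomposition of $V(G)$. Using Theorem~\ref{thm:tools}(\ref{degq}) I may assume every vertex has degree at least $\lceil 5q/4\rceil$, and using Theorem~\ref{thm:tools}(\ref{goods}) I may assume $G$ has no good stable set. The goal is then to produce a family of at least five stable sets of $G$ that together meet every $q$-clique in at least four vertices, so as to invoke Theorem~\ref{thm:tools}(\ref{fives}).

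First I would dispose of the degenerate cases. If one of the $Q_v$'s corresponding to a vertex of the underlying $C_6$ is empty, then $G$ becomes a blowup of an induced subgraph of $H_4$ that is either chordal (so $\chi(G)=\omega(G)$) or is a blowup of one of $C_5$, $F_3$, $H_2$, for which the required bound already follows from Corollary~\ref{cor:bu-c5}, Theorem~\ref{thm:bu-f3}, and Theorem~\ref{bu-h2}. Similarly, if both cliques attached to the two extra ``$B$-vertices'' of $H_4$ are empty, then $G$ is a blowup of $C_6$, handled by the chordal/$F_3$ reduction; and if exactly one is empty, the resulting blowup is a blowup of $H_2$ and Theorem~\ref{bu-h2} applies.

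Once every $Q_v$ is non-empty, I would use the degree lower bound to obtain a lower bound $|Q_{v_i}|\ge\lceil q/4\rceil+1\ge 2$ on each clique attached to a $C_6$-vertex, arguing exactly as in the proof of Theorem~\ref{bu-h2}: the closed neighborhood of a vertex in $Q_{v_i}$ partitions into two cliques of total size $d(v_i)+1\ge\lceil 5q/4\rceil+1$, one of size at most $q$, forcing the other to be large. An analogous argument applies to the extra $B$-vertices of $H_4$, giving $|Q_b|\ge 2$ there too, unless a reduction to the $H_2$ blowup already applies.

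Finally, I would write down an explicit family of five stable sets. The stable sets $\{v_1,v_3,v_5\}$ and $\{v_2,v_4,v_6\}$ of the underlying $C_6$ already account for two vertices of every triangle of $G$ supported on the $C_6$; three further stable sets, using a second copy $v'_i$ of each vertex of the $C_6$ together with the extra $B$-vertices of $H_4$, can be chosen so that every maximal clique of $G$ is met in four vertices in total. Applying Theorem~\ref{thm:tools}(\ref{fives}) with $t=5$ then yields $\chi(G)\le\lceil 5q/4\rceil$. The main obstacle, as in the proof of Theorem~\ref{thm:bu-f3}, will be handling the unbalanced $q$-cliques: if some $q$-clique has one $Q_v$ of size nearly $q$ and its neighbors small, one has to refine the stable sets (or exhibit a good stable set after all) via a short case analysis on which triangles of $H_4$ realize $q$-cliques, mirroring the arguments in \eqref{qxv}--\eqref{qno} in the $F_3$ proof.
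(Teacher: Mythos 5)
Your proposal stops short of a proof exactly where the content should be: the ``explicit family of five stable sets'' is never exhibited, and you yourself concede that the unbalanced $q$-cliques would require ``a short case analysis'' that you do not carry out. For $F_3$ and $H_2$ that case analysis is precisely the bulk of the work, so deferring it here leaves a genuine gap rather than a routine verification. Moreover, the degenerate-case reductions you list do not match the structure of $H_4$: the paper's reduction when one of the cells $Q_{v_1}$ or $Q_{v_3}$ is empty is to a blowup of $H_5$ (handled by Theorem~\ref{thm:bu-h5}), not to blowups of $C_5$, $F_3$ or $H_2$, and your plan never invokes $H_5$ at all, so even the preliminary reductions as stated are doubtful.

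The paper's proof shows that none of the machinery you set up (degree bounds, $t=5$ stable-set packing, balanced versus unbalanced $q$-cliques) is needed. If $Q_{v_5}\cup Q_{v_6}=\emptyset$ then $G$ is chordal; if $Q_{v_1}=\emptyset$ or $Q_{v_3}=\emptyset$ then $G$ is a blowup of $H_5$ and Theorem~\ref{thm:bu-h5} applies; otherwise $Q_{v_1},Q_{v_3},Q_{v_5}$ are all non-empty and $\{v_1,v_3,v_5\}$ is already a good stable set, because every maximal clique of a blowup of $H_4$ lives over a triangle of $H_4$ meeting $\{v_1,v_3,v_5\}$, so Theorem~\ref{thm:tools}~(\ref{goods}) finishes immediately. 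The missing idea in your write-up is this single observation about $H_4$; with it the theorem is a four-line corollary of the $H_5$ result, and without it your outline remains an unproved programme.
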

\begin{proof}
By the definition of a blowup, $V(G)$ is partitioned into nine cliques
$Q_v$, $v\in V(H_4)$.  If $Q_v\neq\emptyset$ we call $v$ one vertex of
$Q$.  If $Q_{v_5}\cup Q_{v_6}=\emptyset$, then $G$ is a chordal graph,
so $\chi(G)=\omega(G)$.  Hence let us assume up to symmetry that
$Q_{v_5}\neq\emptyset$.  If $Q_{v_1}=\emptyset$, then $G$ is a blowup
of $H_5$, so the result follows from Theorem~\ref{thm:bu-h5}.  Hence
let us assume that $Q_{v_1}\neq\emptyset$.  If $Q_{v_3}=\emptyset$,
then $G$ is a blowup of $H_5$ again.  Hence let us assume that
$Q_{v_3}\neq\emptyset$.  Now it is easy to see that $\{v_1,v_3,v_5\}$
is a good stable set, so the result follows from
Theorem~\ref{thm:tools}~(\ref{goods}).
\end{proof}

\noindent{\bf Blowups of $F_{k,\ell}$}

\begin{theorem}\label{thm:bu-fkl}
For integers $k,\ell\ge 0$, let $G$ be any blowup of
$F_{k,\ell}$.  Then $\chi(G)\le \lceil \frac{5}{4}\omega(G)\rceil$.
\end{theorem}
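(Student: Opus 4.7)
The plan is to induct on $|V(G)|$, with an outer induction on $k+\ell$ whose base case $k=\ell=0$ is Corollary~\ref{cor:bu-c5}. Throughout I write $q=\omega(G)$ and $q_v=|Q_v|$ for each $v\in V(F_{k,\ell})$. First, I would make a reduction to remove empty $U$- or $W$-cells: if $Q_{u_i}=\es$ for some $i\in\{1,\ldots,k\}$, then in $G$ the cliques $Q_{a_0}$ and $Q_{a_i}$ have identical closed neighborhoods (both being adjacent exactly to the rest of $A\cup Q_x\cup Q_z$), so $G$ is also a blowup of $F_{k-1,\ell}$ in which the free $A$-cell is taken to be $Q_{a_0}\cup Q_{a_i}$; a symmetric move handles $Q_{w_j}=\es$. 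Hence we may assume all $Q_{u_i}$ and $Q_{w_j}$ are non-empty.

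In the main subcase, $Q_{a_0}$ and $Q_{b_0}$ are also non-empty. I would pick one vertex $s_v\in Q_v$ for each $v\in\{a_0,b_0,u_1,\ldots,u_k,w_1,\ldots,w_\ell\}$ and set $S=\{s_{a_0},s_{b_0}\}\cup\{s_{u_i}:1\le i\le k\}\cup\{s_{w_j}:1\le j\le\ell\}$. The relevant non-adjacencies in $F_{k,\ell}$---namely $a_0$ and $b_0$ non-adjacent, the pair $\{a_0,b_0\}$ anticomplete to $U\cup W$, and $U\cup W$ stable---make $S$ a stable set. By enumerating the maximal cliques of $F_{k,\ell}$---$A\cup\{z\}$, $A\cup\{x\}$, $B\cup\{z\}$, $B\cup\{y\}$, $\{x,y,u_i\}$, $\{x,y,w_j\}$, $\{a_i,u_i,x\}$, and $\{b_j,w_j,y\}$---one checks that each contains a vertex of $S$ (via $s_{a_0}$, $s_{a_0}$, $s_{b_0}$, $s_{b_0}$, $s_{u_i}$, $s_{w_j}$, $s_{u_i}$, $s_{w_j}$ respectively). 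Thus $S$ is a very good stable set, and Theorem~\ref{thm:tools}(\ref{goods}) combined with the inductive hypothesis applied to $G\sm S$ yields $\chi(G)\le\lceil\tfrac{5}{4}q\rceil$.

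The remaining subcase---where $Q_{a_0}=\es$ (or $Q_{b_0}=\es$, by symmetry)---will be the main obstacle, because no very good stable set exists in $G$: hitting $\{x,y,u_i\}$ for every $i$ forces all $s_{u_i}\in S$, which forbids any $s_{a_i}\in S$ for $i\ge 1$, and then $\bigcup_{i\ge 1}Q_{a_i}\cup Q_x$ can only be hit by $s_{a_0}$ or $s_x$, neither of which is available. My plan is to weaken to a good stable set. Setting $S'=\{s_{b_0}\}\cup\{s_{u_i}:1\le i\le k\}\cup\{s_{w_j}:1\le j\le\ell\}$, the set $S'$ is stable and meets every maximal clique except possibly $K_{Az}:=\bigcup_{i\ge 1}Q_{a_i}\cup Q_z$ and $K_{Ax}:=\bigcup_{i\ge 1}Q_{a_i}\cup Q_x$; if neither of these is a $q$-clique, then $S'$ is good and Theorem~\ref{thm:tools}(\ref{goods}) finishes. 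Otherwise we have $q=\alpha+q_x$ or $q=\alpha+q_z$ with $\alpha=\sum_{i\ge 1}q_{a_i}$, and combining this with the triangle constraints $q_{a_i}+q_{u_i}+q_x\le q$ and $q_x+q_y+q_{u_i}\le q$ and an averaging argument over the $q_{a_i}$'s, some vertex $u_i$ will have degree at most $\lceil\tfrac{5}{4}q\rceil-1$ whenever $k$ is large enough for averaging to force some $q_{a_i}\le\lceil q/4\rceil$, so Theorem~\ref{thm:tools}(\ref{degq}) finishes; the residual small-$k$ cases will be handled by a direct case analysis, invoking analogous low-degree arguments for vertices like $z$, $a_1$, or $b_0$ (using the symmetric clique constraints on the $B$-side).
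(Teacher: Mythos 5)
Your opening moves are sound, and one of them is genuinely nicer than the paper's: the observation that when $Q_{a_0}$, $Q_{b_0}$ and all $Q_{u_i}$, $Q_{w_j}$ are non-empty, one representative from each of these cells forms a very good stable set (every maximal clique of $F_{k,\ell}$ contains one of $a_0,b_0,u_i,w_j$) is correct and disposes of that case immediately via Theorem~\ref{thm:tools}(\ref{goods}); the paper does not exploit this. Your reduction for empty $U$- or $W$-cells and your degree/averaging argument forcing $|Q_{a_i}|>\lceil q/4\rceil$ and hence $k\le 3$ (and symmetrically $\ell\le 3$) coincide with the paper's claim~(\ref{sizeq}).

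The gap is in the last sentence. The "residual small-$k$ cases" are not a cleanup step --- they are the bulk of the theorem, and the tools you propose for them (low-degree vertices, good stable sets) provably do not suffice. Concretely, take $k=\ell=2$, $q=12$, $Q_{a_0}=Q_{b_0}=Q_z=\es$ and $|Q_v|=4$ for every $v\in\{a_1,a_2,u_1,u_2,x,y,b_1,b_2,w_1,w_2\}$. Here $\omega(G)=12$, every vertex has degree at least $15=\lceil\frac{5}{4}q\rceil$ (e.g.\ $|N[u_i']|=|Q_{a_i}\cup Q_{u_i}\cup Q_x\cup Q_y|=16$), and there is no good stable set: to meet all four $q$-cliques $Q_x\cup Q_y\cup Q_v$ ($v\in\{u_1,u_2,w_1,w_2\}$) without using $Q_x$ or $Q_y$ one must take a vertex in each of $Q_{u_1},Q_{u_2},Q_{w_1},Q_{w_2}$, after which the $q$-clique $Q_{a_1}\cup Q_{a_2}\cup Q_x$ cannot be met; using $Q_x$ (resp.\ $Q_y$) one cannot meet both $Q_y\cup Q_{b_1}\cup Q_{w_1}$ and $Q_y\cup Q_{b_2}\cup Q_{w_2}$ (resp.\ the symmetric pair). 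Since the graph contains induced $C_5$'s through every cell, item~(\ref{stable}) of Theorem~\ref{thm:tools} is not available either. The paper gets past exactly these configurations with Theorem~\ref{thm:tools}(\ref{fives}): for each surviving parameter pattern ($k=0,\ell\in\{2,3\}$; $k=1,\ell=2$; $k=\ell=2$) it builds five explicit stable sets whose union meets every maximal clique in at least four vertices, after first establishing the auxiliary fact that some cell of $U\cup W$ has size at least $2$. Your proposal never invokes this tool, and also leaves unaddressed the sub-case where $Q_{a_0}$ and $Q_{b_0}$ are \emph{both} empty (which is not symmetric to $Q_{a_0}=\es$ alone, and is exactly where the configuration above lives). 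Until the residual cases are carried out with an argument of this kind, the proof is incomplete.
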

\begin{proof}
We use the same notation as in the definition of $F_{k,\ell}$.  By the
definition of a blowup $V(G)$ is partitioned into cliques $Q_v,$ $v\in
V(F_{k,\ell})$, such that $[Q_u,Q_v]$ is complete if $uv\in
E(F_{k,\ell})$ and otherwise $[Q_u,Q_v]=\es$.  Let $Q_A=
\bigcup_{i=0}^k Q_{a_i}$ and $Q_B=\bigcup_{j=0}^\ell Q_{b_j}$.  Let
$D= \bigcup_{v\in U\cup W}Q_v$.  As a convention it is convenient, for
any $u\in V(F_{k,\ell})$ such that $Q_u\neq\es$, to use the name $u$
for one vertex of $Q_u$; moreover if $|Q_u|\ge 2$ we call $u'$ another
vertex from $Q_u$, and if $|Q_u|\ge 3$ we call $u''$ a third vertex
from $Q_u$.  We denote, e.g., the clique $Q_x\cup Q_y\cup Q_{u_i}$ by
$Q_{xyu_i}$, etc.  Let $q=\omega(G)$.  We prove the lemma by induction
on $|V(G)|+k+\ell$.  We may assume that $G$ does not satisfy any of
the hypotheses (\ref{degq})--(\ref{stable}) of
Theorem~\ref{thm:tools}, for otherwise we can find a
$\lceil\frac{5}{4}q\rceil$-coloring of $G$ using induction.

We remark that if $k>0$ and $Q_{u_i}=\es$ for some
$i\in\{1,\ldots,k\}$, then the vertices of $Q_{a_i}$ can be moved to
$Q_{a_0}$, and so $G$ is a blowup of $F_{k-1,\ell}$ and the result
holds by induction.  Moreover, if $k>0$ and either $|Q_{a_i}|\le
\lceil\frac{q}{4}\rceil$ for some~$i$, or $|Q_y|\le
\lceil\frac{q}{4}\rceil$, then, since $N[u_i]= Q_{a_i}\cup Q_{u_i}\cup
Q_x\cup Q_y$ and $Q_{a_iu_ix}$ and $Q_{u_ixy}$ are cliques that
contain $u_i$, we have $d(u_i)\le q-1+ \lceil\frac{q}{4}\rceil
<\lceil\frac{5}{4}q\rceil$, so the result holds by
Theorem~\ref{thm:tools}~(\ref{degq}).  In summary, we may assume that:
\begin{equation}\label{sizeq}
\longbox{If $k>0$ then $Q_{u_i}\neq\es$ and $|Q_{a_i}|>
\lceil\frac{q}{4}\rceil$ for all $i$, and $|Q_y|>
\lceil\frac{q}{4}\rceil$.  Also if $\ell>0$ then $Q_{w_j}\neq\es$ and
$|Q_{b_j}|> \lceil\frac{q}{4}\rceil$ for all $j$ and $|Q_x|>
\lceil\frac{q}{4}\rceil$.}
\end{equation}
It follows from (\ref{sizeq}) that $k\le 3$, for otherwise $|Q_A|>q$;
and similarly $\ell\le 3$.  Moreover, if $\ell>0$ then $k\le 2$, for
otherwise $|Q_A\cup Q_x|>q$; and similarly if $k>0$ then $\ell\le 2$.
We assume up to symmetry that $k\le \ell$.  Consequently we have
either $k=0$ and $\ell\le 3$, or $k=1$ and $\ell\in \{1,2\}$, or
$k=\ell=2$.  In any case $k\le 2$.  If $k\le 1$ and $\ell\le 1$, then
$G$ is a blowup of (an induced subgraph of) $H_5$, so the result
follows from Theorem~\ref{thm:bu-h5}.  So we may assume that $\ell\ge
2$.  Consequently we have either $k=0$ and $\ell\in\{2,3\}$, or $k=1$
and $\ell=2$, or $k=\ell=2$.

Suppose that $Q_A=\es$.  Then $Q_z=\es$, for otherwise $d(z)\le q-1$,
and $Q_y=\es$, for otherwise $\{y\}$ is a good stable set.  Then we
can view $G$ as a blowup of $F_{0,\ell-1}$ (putting $Q_{b_\ell}$ and
$Q_{w_\ell}$ in the role of $Q_z$ and $Q_{a_0}$ respectively) and use
induction.  Therefore we may assume that $Q_A\neq\es$.  If $k\ge 1$,
then $|Q_{a_1}|\ge 2$ by (\ref{sizeq}), and if $k=0$ then
$|Q_{a_0}|\ge 2$, for otherwise either $d(z)\le q$ (if $Q_{z}\neq\es$)
or $d(a_0)\le q$ (if $Q_{z}=\es$).  Hence in any case we have
$|Q_A|\ge 2$.  Let $a,a'$ be two vertices from $Q_A$, chosen as
follows: if $k=0$, let $a,a'\in Q_{a_0}$.  If $k=1$, let $a,a'\in
Q_{a_1}$.  If $k= 2$, let $a\in Q_{a_1}$ and $a'\in Q_{a_2}$.

\medskip

Let $p=\max\{|Q_v|, v\in U\cup W\}$.  So $p\ge 1$.  We claim that:
\begin{equation}\label{p2}
\mbox{We may assume that $p\ge 2$.}
\end{equation}
Proof: Suppose that $p=1$; so $Q_v=\{v\}$ for all $v\in U\cup W$.  If
$|Q_z|\le 1$, then $U\cup W\cup Q_z$ is a stable set, and $G\sm (U\cup
W\cup Q_z)$ is perfect (a blowup of $P_4$), so the result follows from
Theorem~\ref{thm:tools}~(\ref{stable}).  Hence $|Q_z|\ge 2$.  Define
five stable sets as follows: Let $T_1=\{a,b_1\}$, $T_2=\{b_2,x\}$,
$T_3=\{z,x'\}$, $T_4=\{a',y\}$, and $T_5=\{z',y'\}$, where $y,y'\in
Q_y$, with the convention that $y'$ vanishes if $|Q_y|=1$, and in that
case if $|Q_x|\ge 3$ then $T_5=\{z',x''\}$ for some $x''\in
Q_x\sm\{x,x'\}$, and $y$ too vanishes if $Q_y=\es$.  Let $T^*
=T_1\cup\cdots\cup T_5$.  We claim that every maximal clique $K$ of
$G$ satisfies $|K\sm T^*|\le q-4$.  The following cases (i)--(vii)
occur: \\
(i) $K=Q_z\cup Q_A$.  Then $K$ contains four vertices ($z,z',a,a'$)
from $T^*$, so $|K\sm T^*|\le q-4$.  Likewise, if $K=Q_z\cup Q_B$,
then $K$ contains $z,z',b_1,b_2$.  \\
(ii) $K=Q_x\cup Q_A$.  Then $K$ contains $a,a',x, x'$ from $T^*$.  \\
(iii) $K=Q_y\cup Q_B$.  Then $Q_y\neq\es$ because $Q_B$ is not a
maximal clique (since $Q_z\neq\es$).  If $|Q_y|\ge 2$, then $K$
contains four vertices $b_1,b_2,y,y'$ from $T^*$.  If $|Q_y|=1$ then
(since $|Q_z|\ge 2$) $|K|<|Q_z\cup Q_A|\le q$, and $K$ contains three
vertices $b_1,b_2,y$ from $T^*$, so $|K\sm T^*|\le q-4$.  \\
(iv) $k\ge 1$ and $K=Q_{xyu_i}$ for some $i\in\{1, \ldots, k\}$.  Then
$Q_y\neq\es$ because $Q_{xu_i}$ is not a maximal clique (since
$Q_{a_i}\neq\es$).  If $|Q_y|\ge 2$, then $K$ contains four vertices
($x,y,x',y'$) from $T^*$.  If $|Q_y|=1$, then (since $|Q_{a_i}|\ge 2$)
$|K|<|Q_{xu_ia_i}|\le q$ and $K$ contains three vertices $x,x',y$ from
$T^*$.  \\
(v) $k\ge 1$ and $K= Q_{xa_iu_i}$ for some $i\in\{1,\ldots,k\}$, say
$i=1$.  If $k=1$ then $K$ contains $x,x', a,a'$.  Suppose $k= 2$.
Since $q\ge|Q_{xa_1a_2}|$, and $|Q_{a_2}|\ge 2$, we have $|K|\le q-1$.
Then $K$ contains three vertices $x,x',a$ from $T^*$, so $|K\sm
T^*|\le q-4$.  \\
(vi) $K=Q_{xyw_j}$ for some $j\in\{1,\ldots,\ell\}$.  If $|Q_y|\ge 2$
then $K$ contains four vertices ($x,y,x',y'$) from $T^*$.  If
$|Q_y|\le 1$, then $K$ contains at least two vertices from $T^*$, so
if $|K|\le q-2$ we are done.  If $|K|\ge q-1$, then $|Q_x|+2\ge |K|\ge
q-1\ge |Q_z\cup Q_B|-1\ge 2(\ell+1)-1\ge 5$, so $|Q_x|\ge 3$, so the
vertex $x''$ exists and $K$ contains three vertices $x,x',x''$ from
$T^*$.  \\
(vii) $K=Q_{yb_jw_j}$ for some $j\in\{1,\ldots,\ell\}$.  Since $q\ge
|Q_z\cup Q_B|$, we have $|Q_{b_j}|\le q-2\ell$.  If $\ell= 3$, then
either $|K|\le q-4$, or $|K|=q-3$ and $Q_y\neq\es$ and $K$ contains
$y$ from $T^*$.  Hence suppose that $\ell=2$.  So $b_j\in K$.  Then
either $|K|\le q-3$, or $|K|=q-2$ and $K$ also contains $y$ from
$T^*$.  So $|K\sm T^*|\le q-4$.  \\
In either case Theorem~\ref{thm:tools}~(\ref{fives}) implies the
desired result.  Thus (\ref{p2}) holds.

\medskip

Suppose that $k\le 1$.  We know that $\ell\in\{2,3\}$.  By
(\ref{sizeq}) we have $|Q_{b_j}|\ge \lceil\frac{q}{4}\rceil +1$ for
all $j\in\{1,\ldots,\ell\}$.  Recall that $Q_A\neq\es$.  Let $a^*=a_0$ if
$Q_{a_0}\neq\es$ and $a^*=a_1$ otherwise.  In either case the set
$N(a^*)$ can be partitioned into two cliques such that $Q_z$ is one of
them.  By Theorem~\ref{thm:tools}~(\ref{degq}) we may assume that
$d(a)\ge \lceil\frac{5}{4}q\rceil$, so $|Q_{z}|\ge
\lceil\frac{q}{4}\rceil +1$.  Consequently $q\ge |Q_z\cup Q_B|\ge
(\ell+1) (\lceil\frac{q}{4}\rceil +1)$.  The inequality $q\ge (\ell+1)
(\lceil\frac{q}{4}\rceil +1)$ is violated if $\ell\ge 3$, so $\ell=2$.
Moreover, the inequality with $\ell=2$ implies $q\ge 12$.  Hence
(\ref{sizeq}) yields that $|Q_x|\ge 3$, and $|Q_{b_j}|\ge 3$ for each
$j\in\{1,2\}$, and $|Q_A|\ge 3$, and similarly $|Q_z|\ge 3$.

Suppose that $k=0$.  We may assume that $Q_{xyw_j}$ is a $q$-clique
for each $j\in\{1,2\}$, for otherwise the set $\{a_0, b_j, w_{3-j}\}$
is a good stable set.  Hence $|Q_{w_1}|=|Q_{w_2}|=p$.  Note that the set of maximal cliques of $G$ is $\{Q_{za_0}, Q_{xa_0}, Q_{xyw_1},$ $ Q_{xyw_2}, Q_{yw_1b_1}, Q_{yw_2b_2}, Q_{zb_0b_1b_2}\}$ plus
$Q_{yb_0b_1b_2}$ if $Q_y\neq\es$.  Let $S_1=\{b_1,
w_2, a_0\}$, $S_2=\{b_2, w_1, a'_0\}$, $S_3=\{z, w'_1, w'_2\}$, and
$S_4=\{b'_1, x\}$.  If $Q_y\neq\es$, let $S_5=\{a''_0, y\}$.  If
$Q_y=\es$, then one of $Q_{w_1b_1}, Q_{w_2b_2}$ is a $q$-clique, for
otherwise $\{x,z\}$ is a good stable set; so for some $j\in\{1,2\}$ we
have $|Q_{w_jb_j}|=q\ge |Q_{b_1b_2}|$, whence $p=|Q_{w_j}|\ge
|Q_{b_{3-j}}|\ge 3$; so we let $S_5=\{a''_0, w''_1, w''_2\}$.  In
either case, $S_1,\ldots, S_5$ are stable sets and it is easy to see
that every maximal clique of $G$ contains at least four vertices from their
union, so the result follows from
Theorem~\ref{thm:tools}~(\ref{fives}).

Now suppose that $k=1$, and so $\ell=2$.  By (\ref{sizeq}), we have $|Q_y|\geq 2$. Note that the set of maximal cliques of $G$ is $\{Q_{za_0a_1}, Q_{xa_0a_1}, Q_{xa_1u_1}, Q_{xyu_1}, Q_{xyw_1}, Q_{xyw_2},$ $ Q_{yw_1b_1}, Q_{yw_2b_2}, Q_{yb_0b_1b_2}, Q_{zb_0b_1b_2}\}$. Let $S_1=\{b_1, w_2, u_1\}$
plus $a_0$ if $Q_{a_0}\neq\es$.  Let $S_2=\{b_2, w_1, a_1\}$,
$S_3=\{x,z\}$, $S_4=\{y',z'\}$, and $S_5=\{a'_1,y\}$.  In either case,
$S_1,\ldots, S_5$ are stable sets and  that every
maximal clique of $G$ contains at least four vertices from their union, so the result follows from
Theorem~\ref{thm:tools}~(\ref{fives}).

Finally suppose that $k=2$ and $\ell=2$.  Let
$S_1=\{a_1,b_1,u_2,w_2\}$, $S_2=\{a_2,b_2, u_1, w_1\}$,
$S_3=\{x,b'_1\}$, $S_4=\{y,a'_1\}$, and let $S_5$ consist of one
vertex from each component of $Q_z\cup (D\sm \{u_1,u_2,w_1,w_2\})$.
Let $S^*=S_1\cup\cdots\cup S_5$.  We claim that every maximal clique
$K$ of $G$ satisfies $|K\sm S^*|\le q-4$.  Indeed if $K=Q_x\cup Q_A$
then $K$ contains $x,a_1,a'_1,a_2$ from $S^*$.  If $K=Q_z\cup Q_A$
then $Q_z\neq\es$ and $K$ contains $z,a_1,a'_1,a_2$.  If
$K=Q_{xa_1u_1}$ then $K$ contains $x,a_1,a'_1,u_1$.  If
$K=Q_{xa_2u_2}$ then $K$ contains $x,a_2,u_2$ from $S^*$, so if
$|K|\le q-1$ we are done; and if $|K|=q$ then $|Q_{xa_2u_2}|=q\ge
|Q_{xa_1a_2}|$ so $|Q_{u_2}|\ge 2$, so $Q_{u_2}$ contains a vertex
$u'_2$ from $S_5$.  If $K=Q_{xyu_1}$ then $K$ contains $x,y,u_1$ from
$S^*$, so if $|K|\le q-1$ we are done; and if $|K|=q$ then since $p\ge
2$ we have $|Q_{u_1}|\ge 2$, so $Q_{u_1}$ contains a vertex $u'_1$
from $S^*$.  The other cases are symmetric.  Hence the result follows
from Theorem~\ref{thm:tools}~(\ref{fives}).  This completes the proof.
\end{proof}

\subsection{Chromatic bound for bands, belts and boilers}

\begin{theorem}\label{thm:colband}
Let $G$ be a band.  Then $\chi(G)\le
\lceil\frac{5}{4}\omega(G)\rceil$.
\end{theorem}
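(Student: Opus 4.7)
I would proceed by induction on $|V(G)|$. Let $q=\omega(G)$. First reductions: if $G$ has a universal vertex $v$, then $\chi(G\setminus v)=\chi(G)-1$ and $\omega(G\setminus v)=\omega(G)-1$, so the bound transfers by induction since $\lceil 5(q-1)/4\rceil+1\le\lceil 5q/4\rceil$; and if $R_2=R_3=\emptyset$, then the band is a blowup of $C_5$ on the partition $(Q_1,\ldots,Q_5)$, and Corollary~\ref{cor:bu-c5} applies directly. Hence I may assume that $G$ has no universal vertex and, up to the symmetry swapping $R_2$ with $R_3$, that $R_2\neq\emptyset$.

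For the main case, the plan is to apply Theorem~\ref{thm:tools}~(\ref{fives}) with $t=5$, by exhibiting five stable sets $S_1,\ldots,S_5$ whose union $X$ satisfies $\omega(G\setminus X)\le q-4$. The backbone of the construction is the one used for blowups of $C_5$ (Corollary~\ref{cor:bu-c5}): assuming all $|Q_i|\ge 2$, pick two vertices $u_i,v_i\in Q_i$ and set $S_i=\{u_i,v_{i+2}\}$ for $i=1,\ldots,5$ (indices mod~$5$), which contributes four points to every maximal clique $Q_i\cup Q_{i+1}$ of the $C_5$-blowup backbone. The enhancement needed for a band is to splice vertices of $R_2$ and $R_3$ into these five sets so that the additional maximal cliques of a band -- those contained in $Q_1\cup Q_2\cup R_2$, $Q_2\cup Q_3\cup R_2$, $Q_2\cup Q_3\cup R_3$, $Q_3\cup Q_4\cup R_3$, and $Q_2\cup Q_3\cup R_2\cup R_3$ -- are also hit four times. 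The graded structure of $\{R_2,R_3\}$ together with the anticompleteness of $R_2$ to $Q_4\cup Q_5$ and of $R_3$ to $Q_5\cup Q_1$ gives the freedom to place the extra vertices while keeping each $S_i$ stable.

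When some $|Q_i|$ or $|R_j|$ is at most~$1$, or the graded pair relations force $\omega$ to be attained on cliques other than the pure $Q_i\cup Q_{i+1}$ ones, I would split into subcases and apply the other items of Theorem~\ref{thm:tools}: item~(\ref{degq}) when some vertex has degree at most $\lceil 5q/4\rceil-1$ (typical when a $Q_i$ is forced to be small by the graded relation); item~(\ref{goods}) when a single good stable set -- built by choosing one vertex from three cyclically separated $Q_i$'s -- can be exhibited; and item~(\ref{stable}) when removing a stable set from $R_2\cup R_3$ leaves a blowup of a chordal (hence perfect) graph.

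The main obstacle will be the book-keeping: the maximal cliques of a band form a richer family than those of a pure $C_5$-blowup, because $R_2$ and $R_3$ attach to three consecutive $Q_i$'s and to each other in a graded manner. To cover every $q$-clique four times by the same five stable sets, one needs to exploit the fact that whenever $\omega$ is attained on a clique touching both $R_2$ and $R_3$, the graded constraints force one of $|Q_1|,|Q_4|,|R_2|,|R_3|$ to be small relative to $q$, which frees a slot in one of the $S_i$'s; verifying this in all configurations is the combinatorial heart of the argument.
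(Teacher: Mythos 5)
Your proposal is a plan rather than a proof, and the part you defer is exactly the part that is hard. The crucial step --- producing five stable sets $S_1,\ldots,S_5$ whose union meets every $q$-clique of a band in at least four vertices --- is never constructed, and the structural claim you invoke to make it plausible (``whenever $\omega$ is attained on a clique touching both $R_2$ and $R_3$, the graded constraints force one of $|Q_1|,|Q_4|,|R_2|,|R_3|$ to be small relative to $q$'') is asserted without proof and is too vague to carry the argument. The difficulty is genuine: because the pairs $\{Q_1,Q_2\}$, $\{Q_3,Q_4\}$, $\{R_2,R_3\}$ are only graded, a maximum clique of a band need not contain two full bags; it can be of the form (part of $Q_3$) $\cup$ (part of $Q_4$) $\cup R_3$, or $Q_2\cup Q_3\cup$ (part of $R_2$) $\cup$ (part of $R_3$), and so on. Hence the $C_5$-blowup covering trick (two vertices per bag, sets $\{u_i,v_{i+2}\}$) does not transfer: the ``spliced'' vertices from $R_2,R_3$ may simply miss the maximum cliques that avoid the chosen representatives, and your plan gives no rule for choosing them, no verification that $\omega(G\setminus X)\le q-4$, and no delineation of which degenerate configurations fall to items~(\ref{degq}), (\ref{goods}) or (\ref{stable}) of Theorem~\ref{thm:tools}. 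As written, the ``combinatorial heart'' you acknowledge is an open hole, so the proposal does not establish the theorem.

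For comparison, the paper's proof avoids this book-keeping entirely and is much shorter: if one of the three graded pairs is not complete, say $[R_2,R_3]$, then the fourth item of Lemma~\ref{lem:ab} yields non-adjacent vertices $u\in R_2$, $v\in R_3$ lying in every maximal clique of $G[R_2\cup R_3]$; since every maximal clique of the band either contains all of $Q_5$, or all of $R_2$, or all of $R_3$, or traces a maximal clique of $G[R_2\cup R_3]$, the set $\{u,v\}$ together with one (suitably chosen) vertex of $Q_5$ is a good stable set, and Theorem~\ref{thm:tools}~(\ref{goods}) plus induction on $|V(G)|$ applies. The same argument handles $[Q_1,Q_2]$ (with a vertex of $Q_4$) and $[Q_3,Q_4]$ (with a vertex of $Q_1$); once all three graded pairs are complete, $Q_2\cup R_2$ and $Q_3\cup R_3$ become single bags and $G$ is literally a blowup of $C_5$, so Corollary~\ref{cor:bu-c5} finishes. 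If you want to salvage your route, the honest fix is to adopt this reduction: a non-complete graded pair always produces a good stable set via the ``crossing non-edge'' of Lemma~\ref{lem:ab}, which removes the very configurations that defeat the five-stable-set covering.
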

\begin{proof}
We use the same notation as in the definition of a band  (see also Figure~\ref{fig:bbb}:(b)), and we prove
the theorem by induction on $|V(G)|$.   First suppose that $[R_2,R_3]$
is not complete.  By Lemma~\ref{lem:ab} there exist non-adjacent
vertices $u\in R_2$ and $v\in R_3$ such that every maximal clique in
$G[R_2\cup R_3]$ contains  $u$ or $v$.  If $Q_5\neq\emptyset$, pick
any $w\in Q_5$ and let $S=\{u,v,w\}$; else let $S=\{u,v\}$.  Then   $S$ is a very good stable set of $G$, so the
result follows from Theorem~\ref{thm:tools}~(\ref{goods}).  Therefore
we may assume that $[R_2,R_3]$ is complete.  Now suppose that
$[Q_1,Q_2]$ is not complete.  By Lemma~\ref{lem:ab} there exist
non-adjacent vertices $u\in Q_1$ and $v\in Q_2$ such that every
maximal clique in $G[Q_1\cup Q_2]$ contains  $u$ or $v$.  If
$Q_4\neq\emptyset$, pick any $w\in Q_4$ and let $S=\{u,v,w\}$; else
let $S=\{u,v\}$.  Then   $S$ is a very good
stable set of $G$, so the
result follows from Theorem~\ref{thm:tools}~(\ref{goods}).  Therefore we may assume that $[Q_1,Q_2]$ is
complete, and similarly that $[Q_3,Q_4]$ is complete.  Now $G$ is a
blowup of $C_5$, so the result follows from Corollary~\ref{cor:bu-c5}.
\end{proof}

We say that a graph $G$ is an \emph{extended} $\cal C$-pair if $V(G)$
can be partitioned into three sets $Q,X,A$ such that $(X,A)$ is a
$\cal C$-pair, $Q$ is a clique,  $[Q,X]$ is complete and $[Q,A]=\emptyset$.
\begin{lemma}\label{thm:colxcp}
Let $G$ be an extended $\cal C$-pair. Then $\chi(G)\le
\lceil\frac{5}{4}\omega(G)\rceil$.
\end{lemma}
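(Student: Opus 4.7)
The plan is to induct on $|V(G)|$, using Theorem~\ref{thm:tools} combined with the explicit augmentation structure from Theorem~\ref{thm:cpair}. I would first handle the easy case: if $\omega(G) > \omega(G[X\cup A])$, then $\omega(G)=|Q|+\omega(G[X])$ and every $\omega$-clique of $G$ contains all of $Q$, so any singleton $\{q\}$ with $q\in Q$ is a good stable set and Theorem~\ref{thm:tools}(\ref{goods}) finishes the argument. The genuine work is in the case $\omega(G)=\omega(G[X\cup A])$, where I would aim for the \emph{stronger} bound $\chi(G)\le \omega(G)$, which implies $\chi(G)\le\lceil\tfrac{5}{4}\omega(G)\rceil$ trivially.

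In this main case, I would proceed by constructing an explicit optimal coloring of $G$. Since $G[X\cup A]$ is chordal and hence perfect, $\chi(G[X\cup A])=\omega(G[X\cup A])$. The idea is to choose the coloring so that the number of color classes entirely contained in $A$ (i.e., meeting neither $X$ nor $Q$) equals $\omega(G[X\cup A])-\omega(G[X])$. Given such a coloring, each $A$-only class can absorb one vertex of $Q$ (this is valid because $[Q,A]=\emptyset$ and $Q$ is a clique), so the total number of colors needed for $G$ is
\[
\omega(G[X\cup A]) + \max\!\bigl(0,\ |Q|-(\omega(G[X\cup A])-\omega(G[X]))\bigr) \;=\; \max\!\bigl(\omega(G[X\cup A]),\ |Q|+\omega(G[X])\bigr) \;=\; \omega(G).
\]
To produce such a coloring, I would invoke Theorem~\ref{thm:cpair} to view $(X,A)$ as an augmentation of a basic $\cal C$-pair along an acceptable matching. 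In the basic bamboo structure, color $p_i$ with color $i$ and all leaves with one common color, then pair each class with a distinct non-adjacent $A$-vertex ($\{p_1\}$ with $a_0$, $\{p_i\}$ with $a_{p_{i-1}}$ for $i\ge2$, and the leaves' class with $a_{p_r}$), leaving the $|A|-(r+1)$ vertices $\{a_\ell:\ell\text{ a leaf}\}$ as $A$-only classes of size one; by construction this uses $\omega(G[X\cup A])$ colors with exactly $\omega(G[X\cup A])-\omega(G[X])$ of them $A$-only.

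The hard part will be lifting the basic-pair coloring through the augmentation. When a pair $\{Q_{x_i},Q_{a_i}\}$ is graded (i.e., in the acceptable matching $M$), I would use the staircase labeling guaranteed by Lemma~\ref{lem:ab} to pair up the individual color classes used for $Q_{x_i}$ with matching slots in $Q_{a_i}$, so that the number of $X$-meeting classes remains $\omega(G[X])$. The pairwise non-adjacency between leaf cliques $Q_\ell$ and the graded structure cooperate to keep this bookkeeping consistent across the augmentation. I expect the main obstacle to be precisely this verification — that the natural pairing in the basic $\cal C$-pair survives, in the right quantitative sense, the substitution by graded cliques — since small errors in the accounting could inflate the number of $X$-meeting classes beyond $\omega(G[X])$ and break the equality $\chi(G)=\omega(G)$.
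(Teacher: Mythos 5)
There is a fatal error at the heart of your main case: the claim that $\chi(G)\le\omega(G)$ whenever $\omega(G)=\omega(G[X\cup A])$ is false, because extended $\cal C$-pairs are not perfect. The five-cycle $v_1$-$v_2$-$v_3$-$v_4$-$v_5$-$v_1$ is itself an extended $\cal C$-pair with $Q=\{v_1\}$, $X=\{v_2,v_5\}$, $A=\{v_3,v_4\}$: here $G[X\cup A]$ is a $P_4$, $G[X]\in{\cal C}$, every vertex of $X$ has a neighbor in $A$, non-adjacent vertices of $X$ have disjoint neighborhoods in $A$, and $\omega(G)=\omega(G[X\cup A])=2$, yet $\chi(G)=3$. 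Concretely, the coloring you propose to build cannot exist in this example: you need a $2$-coloring of the $P_4$ $v_2$-$v_3$-$v_4$-$v_5$ with one color class entirely inside $A=\{v_3,v_4\}$, but the unique $2$-coloring has classes $\{v_2,v_4\}$ and $\{v_3,v_5\}$, both meeting $X$. Your own basic-pair construction in fact produces three classes here ($\{v_2,v_5\},\{v_3\},\{v_4\}$), i.e.\ it already exceeds $\omega(G[X\cup A])$ colors, so the identity $\chi(G)=\max\bigl(\omega(G[X\cup A]),\,|Q|+\omega(G[X])\bigr)$ fails before the augmentation step is even reached. The obstacle is not the "bookkeeping through the augmentation'' you flag; it is that vertices of $A$ can be forced into classes meeting $X$, so the number of $A$-only classes cannot in general be pushed up to $\omega(G[X\cup A])-\omega(G[X])$ without spending extra colors. (Your easy case, $\omega(G)>\omega(G[X\cup A])$, is fine: every maximum clique then contains all of $Q$.)

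Since extended $\cal C$-pairs contain blowups of $C_5$ (this is exactly why the class appears in the analysis of belts and boilers), no argument can avoid the $\frac54$ factor, and the paper's proof is structured accordingly: it first shows, via the degree hypothesis of Theorem~\ref{thm:tools}~(\ref{degq}) and the disjointness of the sets $N_A(s_i)$, that $G[X]$ has at most three pairwise non-adjacent simplicial vertices; Lemma~\ref{lem:3simp} then forces $G[X]$ to be a blowup of $P_3$ or of a dart, and a case analysis either exhibits a very good stable set (Theorem~\ref{thm:tools}~(\ref{goods})) or identifies $G$ as a band or a blowup of $F_{2,0}$, where Theorem~\ref{thm:colband} and Theorem~\ref{thm:bu-fkl} (ultimately resting on Corollary~\ref{cor:bu-c5}) supply the $\lceil\frac54\omega\rceil$ bound. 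If you want to salvage your approach, you would have to replace the target $\chi(G)\le\omega(G)$ by $\chi(G)\le\lceil\frac54\omega(G)\rceil$ and build in a mechanism (such as good stable sets or the $t$ stable sets of Theorem~\ref{thm:tools}~(\ref{fives})) that pays for the $C_5$-type obstructions; at that point you are essentially reconstructing the paper's argument.
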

\begin{proof}
We prove the lemma by induction on $|V(G)|$. Let $V(G)$ be
partitioned into $Q,X,A$ as in the definition above. Let $q=\omega(G)$. If some vertex
$a\in A$ has no neighbor in $X$, then $a$ is simplicial, so  $d(a)<q$, and we can
conclude using Theorem~\ref{thm:tools}(\ref{degq}) and by the
induction hypothesis. Therefore we may assume that every vertex in
$A$ has a neighbor in $X$.

Suppose that $G[X]$ has four pairwise non-adjacent simplicial vertices
$s_1, s_2,$ $s_3,$ $s_4$.  If $d(s_i)\le \lceil\frac{5}{4}q\rceil-1$,
then we can conclude using Theorem~\ref{thm:tools}(\ref{degq}).  So
assume that $d(s_i) \ge \lceil\frac{5}{4}q\rceil$.  We have $N(s_i) =
Q\cup N_X(s_i)\cup N_A(s_i)$, and $Q\cup N_X(s_i)$ is a clique, so we
must have $|N_A(s_i)|\ge \lceil\frac{q}{4}\rceil+1$.  By the
definition of a $\cal C$-pair the sets $N_A(s_1), \ldots, N_A(s_4)$
are pairwise disjoint.  It follows that $|A|\ge 4
(\lceil\frac{q}{4}\rceil+1) >q$, a contradiction.  Hence $G[X]$ has at
most three pairwise non-adjacent simplicial vertices.  If $X$ is a
clique then $G$ is a chordal graph, so $\chi(G)=\omega(G)$ and the
theorem holds trivially.  Therefore we may assume that $G[X]$ has
exactly $k$ pairwise non-adjacent simplicial vertices with
$k\in\{2,3\}$.  Since $G[X]\in {\cal C}$ and by Lemma~\ref{lem:3simp},
we have the following two cases (a) and (b).

(a) $k=2$, so $X$ is partitioned into three cliques $X_1, X_2$ and $U$
such that $X_1, X_2$ are non-empty, $[U, X_1\cup X_2]$ is complete and
$[X_1,X_2]=\emptyset$.  Suppose that $U\neq\emptyset$.  Then
Theorem~\ref{thm:cpair} and the fact that every vertex in $A$ has a
neighbor in $X$ implies that some vertex $u$ in $U$ is universal in
$G$, so $\{u\}$ is a very good stable set and we conclude using
Theorem~\ref{thm:tools}(\ref{goods}).  Hence $U=\emptyset$.  Then $G$
is a band, and we conclude with Theorem~\ref{thm:colband}.

\begin{figure}[h]
\centering
 \includegraphics{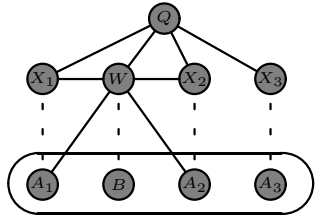}
\caption{Schematic representation of the graph in Case~(b) of Lemma~\ref{thm:colxcp} where $U=\es$.  Here, each shaded circle represents a clique, and the circles inside the oval form a clique,   a solid line between two circles indicates  that the two sets are complete to each other, the absence of line between any two circles indicates that the sets are  anticomplete to each other, and a dashed line between  two circles   indicates that the adjacency between the two sets are arbitrary.}\label{fig:caseB}
\end{figure}

(b) $k=3$, so  $X$ is partitioned
into five cliques $X_1, X_2, X_3, W$ and $U$ such that $X_1,X_2,X_3$
are non-empty and pairwise anticomplete, $[W, X_1\cup X_2]$ is
complete, $[W, X_3]=\emptyset$, and $[U, X\setminus U]$ is complete.
As in case~(a) we may assume that $U=\emptyset$.  By
Theorem~\ref{thm:cpair} and the fact that every vertex in $A$ has a
neighbor in $X$, the set $A$ is partitioned into four sets
$A_1, A_2, A_3, B$ such that $N_A(X_i)=A_i$ for each $i\in\{1,2,3\}$,
$N_A(W)=A_1\cup A_2\cup B$, and $[W, A_1\cup A_2]$ is complete, and
there is no other edge between $X$ and $A$. Moreover, if one of $[X_j, A_j]$ ($j\in\{1,2,3\}$) is not complete,
then $[X_t, A_t]$ is complete for each $t\in \{1,2,3\}\sm \{j\}$.  See Figure~\ref{fig:caseB}.

Suppose that $B\neq \es$. Since every vertex of $A$ has a neighbor in $X$, every vertex of $B$ has a neighbor in $W$. So by Lemma~\ref{lem:ab}, there exists a vertex $w\in W$ such that $[w,B]$ is complete. Hence $w$ is universal in $G[V(G)\sm (X_3\cup A_3)]$. We may assume that $[X_3,A_3]$ is not  complete (otherwise $\{w,x_3\}$, for any $x_3\in X_3$, is a very good stable set of $G$, and we can conclude by using Theorem~\ref{thm:tools}.). Then by Lemma~\ref{lem:ab}, there exist non-adjacent vertices $x_3\in X_3$ and $a_3\in A_3$ such that every maximal clique in $G[X_3\cup A_3]$ contains $x_3$ or $a_3$. Then $\{w,x_3,a_3\}$ is a very good stable set of $G$, and we can conclude by using Theorem~\ref{thm:tools}. So we may assume that $B=\es$.

Suppose that $[X_1,A_1]$ is not complete.  Then, as remarked earlier, $[X_2,A_2]$ and $[X_3,A_3]$ are complete. Also by Lemma~\ref{lem:ab}, there exist non-adjacent vertices $x_1\in X_1$ and $a_1\in A_1$ such that every maximal clique in $G[X_1\cup A_1]$ contains $x_1$ or $a_1$. Pick any $x_2\in X_2$ and $x_3\in X_3$. Then $\{a_1,x_1,x_2,x_3\}$ is a very good stable set of $G$, and we can conclude by using Theorem~\ref{thm:tools}. Therefore assume that $[X_1, A_1]$ is
complete, and, similarly, that $[X_2, A_2]$ is complete.

Suppose that $[X_3,A_3]$ is not complete. Then by Lemma~\ref{lem:ab}, there
are non-adjacent vertices $x_3\in X_3$ and $a_3\in A_3$ such that every maximal
clique in $G[X_3\cup A_3]$ contains $x_3$ or $a_3$.  If
$W\neq\emptyset$, then  any $w\in W$ is universal in $G[V(G)\sm (X_3\cup A_3)]$. But now $\{w,x_3,a_3\}$ is a very good stable set of $G$, and  we can conclude by using Theorem~\ref{thm:tools}. So $W=\es$. Now pick any $x_1\in X_1$ and $x_2\in X_2$. Then $\{x_1,x_2,x_3,a_3\}$ is a very good stable set of $G$, and we can conclude by Theorem~\ref{thm:tools}.  Therefore assume that
$[X_3,A_3]$ is complete.

Now $G$ is a blowup of $F_{2,0}$ (with $A_1\cup A_2$ is the role
of $Q_A$, and $X_3$ in the role of $Q_B$, and $A_3$ in the role of
$Q_z$, and $Q$ in the role of $Q_y$, and $W$ in the role of $Q_x$, and
$X_1,X_2$ in the role of $Q_{u_1}, Q_{u_2}$), so we can conclude using
Theorem~\ref{thm:bu-fkl}. This completes the proof.
\end{proof}

\begin{theorem}\label{thm:colbelt}
Let $G$ be a belt.  Then $\chi(G)\le
\lceil\frac{5}{4}\omega(G)\rceil$.
\end{theorem}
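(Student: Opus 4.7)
My plan is to prove Theorem~\ref{thm:colbelt} by induction on $|V(G)|$, relying on Theorem~\ref{thm:tools} as the main tool and on Lemma~\ref{thm:colxcp} as the key structural ingredient. The central observation is that $V(G)\setminus Q_5$ partitions naturally into two extended $\cal C$-pairs: set $G_L:=G[Q_1\cup R_2\cup Q_3]$ and $G_R:=G[Q_4\cup R_3\cup Q_2]$. Then $G_L$ is an extended $\cal C$-pair with $Q=Q_1$, $X=R_2$ and $A=Q_3$, because $(R_2,Q_3)$ is a $\cal C$-pair by the remark after Theorem~\ref{thm:belts}, $Q_1$ is a clique with $[Q_1,R_2]$ complete and $[Q_1,Q_3]=\emptyset$ by the belt axioms; symmetrically $G_R$ is an extended $\cal C$-pair. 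These will be the objects on which Lemma~\ref{thm:colxcp} is applied.

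I would first dispose of the easy case. If $R_2$ and $R_3$ are both cliques then $G$ satisfies every axiom of a band (each ``graded'' pair in the band definition is a complete or empty pair here, both of which are special cases of graded), so Theorem~\ref{thm:colband} yields the bound immediately. Hence we may assume, up to symmetry, that $R_2$ is not a clique. Since $G[R_2]$ is $(P_4,C_4,2P_3)$-free by Theorem~\ref{thm:belts}(d), Theorem~\ref{thm:cpair} applied to the $\cal C$-pair $(R_2,Q_3)$ gives a bamboo-tree decomposition of $G[R_2]$ together with a compatible partition of $Q_3$. Using this bamboo I would then pick one simplicial vertex $s_i$ from each leaf-clique of $T(G[R_2])$; because distinct leaves of a basic member of $\cal C$ are non-adjacent, the $s_i$ are pairwise non-adjacent, and by Theorem~\ref{thm:belts}(a) their $Q_3$-neighborhoods are pairwise disjoint. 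Analogously I would pick one vertex per leaf of $T(G[R_3])$ (or just any single vertex in $R_3$ if $R_3$ is already a clique), and finally adjoin any $q_5\in Q_5$ if $Q_5\neq\emptyset$. By the belt emptiness conditions $[R_2,R_3]=\emptyset$ (Theorem~\ref{thm:belts}(b)) and $[Q_5,R_2\cup R_3]=\emptyset$, the union $S$ of all these chosen vertices is a stable set.

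The main step is to verify that $S$ is very good, i.e.\ meets every maximal clique of $G$, and then to invoke Theorem~\ref{thm:tools}(\ref{goods}). The maximal cliques fall into five families: the two ``side'' cliques $Q_1\cup Q_5$ and $Q_4\cup Q_5$; the ``lane'' cliques $Q_1\cup Q_2\cup K$ and $Q_3\cup Q_4\cup K'$ for maximal cliques $K,K'$ of $R_2,R_3$; and the ``middle'' cliques in $G[Q_2\cup R_2\cup Q_3\cup R_3]$. The side cliques are hit by $q_5$, and the lane cliques by the choice of one $s_i$ (resp. $t_j$) in each leaf, since every maximal clique of $G[R_2]$ contains a simplicial vertex by chordality and the leaf structure of the bamboo. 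The hard part will be the middle cliques: they can mix a vertex of $R_2$, a vertex of $Q_2$, a vertex of $Q_3$ and possibly a vertex of $R_3$, and their exact shape depends on the $[Q_2,Q_3]$ adjacencies and on the $\cal C$-pair structure. Here I would lean on Theorem~\ref{thm:belts}(c) (every $Q_3$-vertex with a neighbor in $R_2$ is complete to $Q_2$) together with Lemma~\ref{lem:ab} applied to the disjoint cliques $Q_2,Q_3$ to locate an appropriate ``corner'' witness to add to $S$. In any degenerate corner case where no suitable very good stable set emerges — for instance when the bamboo of $G[R_2]$ has only two leaves and $Q_3$ is exhausted by their neighborhoods — I expect the belt to collapse to a band or to a blowup of some $H_i$ or $F_{k,\ell}$, permitting an appeal to Theorem~\ref{thm:colband} or Theorem~\ref{thm:bu-fkl}; alternatively, if some belt-vertex has degree below $\lceil 5\omega(G)/4\rceil-1$ we remove it and apply Theorem~\ref{thm:tools}(\ref{degq}) inductively.
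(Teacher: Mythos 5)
Your reduction of the easy case is fine (if $R_2$ and $R_3$ are both cliques the ``no universal vertex in $G[R_j]$'' axiom forces them both to be empty, and $G$ is then a band after a cyclic relabelling), but the main step of your argument has a genuine gap. The stable set $S$ you build --- one simplicial vertex per leaf of the bamboo of $G[R_2]$, same for $R_3$, plus a vertex of $Q_5$ --- is \emph{not} very good. Consider a maximal clique of the form $W\cup N_{R_2}(W)\cup Q_2$ with $W\subseteq Q_3$: by Theorem~\ref{thm:belts}(\ref{belta}) the set $N_{R_2}(v)$ of a vertex $v\in Q_3$ is a clique of $G[R_2]$, but it may well be an \emph{internal} node of the bamboo and hence contain none of your chosen leaf vertices; such a clique also avoids $Q_5$ (which is anticomplete to $Q_2\cup R_2\cup Q_3$) and avoids $R_3$ (which is anticomplete to $R_2$). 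You cannot repair this by adjoining a ``corner witness'' from $Q_2$ or $Q_3$, because $Q_2$ is complete to $R_2$ and the witness would typically be adjacent to your chosen $R_2$-vertices, destroying stability. The fallback claim that in such cases ``the belt collapses to a band or a blowup'' is asserted without proof and is not true in general. A secondary problem: the two induced extended $\cal C$-pairs $G[Q_1\cup R_2\cup Q_3]$ and $G[Q_4\cup R_3\cup Q_2]$ do not yield a colouring of $G$, since there are edges between them (e.g.\ $[Q_2,R_2]$ is complete) and $Q_5$ lies in neither.

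The missing idea is a counting argument that eliminates the hard configuration altogether. Assume every vertex has degree at least $\lceil\frac{5}{4}q\rceil$ (else apply Theorem~\ref{thm:tools}(\ref{degq}) and induct). If $R_2\neq\emptyset$ it is not a clique, so the chordal graph $G[R_2]$ has two non-adjacent simplicial vertices $r_1,r_2$; since $N[r_h]=Q_1\cup Q_2\cup N_{R_2}[r_h]\cup Y_h$ with $Y_h=N(r_h)\cap Q_3$ and everything outside $Y_h$ a clique, the degree bound forces $|Y_h|\ge\lceil\frac{q}{4}\rceil+1$, and $Y_1,Y_2$ are disjoint by Theorem~\ref{thm:belts}(\ref{belta}). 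If $R_3\neq\emptyset$ too, the same argument produces disjoint $Z_1,Z_2\subseteq Q_2$ of the same size, and Theorem~\ref{thm:belts}(\ref{beltc}) makes $Y_1\cup Y_2\cup Z_1\cup Z_2$ a clique of size exceeding $q$ --- a contradiction. Hence one of $R_2,R_3$ is empty, say $R_3=\emptyset$, and then the \emph{whole} graph is a single extended $\cal C$-pair with $Q=Q_1$, $X=Q_2\cup R_2\cup Q_5$ and $A=Q_3\cup Q_4$, so Lemma~\ref{thm:colxcp} finishes the proof. Your proposal has the right ingredients (Theorem~\ref{thm:belts}, the simplicial-vertex structure of $G[R_2]$, Lemma~\ref{thm:colxcp}) but does not assemble them into a complete argument.
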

\begin{proof}
We use the same notation as in the definition of a belt, and we will
also use the properties listed in Theorem~\ref{thm:belts}.  We prove
the theorem by induction on $\omega(G)$.  If $\omega(G)=2$ then $G$ is
a $C_5$ and the theorem holds obviously.  Now assume that
$\omega(G)\ge 3$.  Let $q=\omega(G)$.

Suppose that both $R_2, R_3$ are non-empty.  Recall from
Theorem~\ref{thm:belts} that $G[R_2]$ is $(P_4, C_4, 2P_3)$-free,
hence chordal.  Moreover, the axiom that $G[R_2]$ has no universal
vertex implies that $R_2$ is not a clique, so it has two non-adjacent
simplicial vertices $r_1, r_2$.  For each $h\in\{1,2\}$ let $X_h$ be
the closed neighborhood of $r_h$ in $R_2$; so $X_h$ is a clique.  Let
$Y_h=N(r_h)\cap Q_3$.  If $d(r_h)< \lceil\frac{5}{4}q\rceil$ then we
can conclude using Theorem~\ref{thm:tools}~(\ref{degq}) and induction.
Hence assume that $d(r_h)\ge \lceil\frac{5}{4}q\rceil$ for each
$h\in\{1,2\}$.  By the definition of a belt, we have $N[r_h] = Q_1\cup
Q_2\cup X_h\cup Y_h$, and $Q_1\cup Q_2\cup X_h$ is a clique, so we
must have $|Y_h|\ge \lceil\frac{q}{4}\rceil+1$.  By
Theorem~\ref{thm:belts}(\ref{belta}), the sets $Y_1, Y_2$ are pairwise
disjoint.  By the same argument $G[R_3]$ has two non-adjacent
simplicial vertices and consequently there are two disjoint subsets
$Z_1,Z_2$ of $Q_2$ with size at least $\lceil\frac{q}{4}\rceil+1$.  By
Theorem~\ref{thm:belts}(\ref{beltc}) the set $Y_1\cup Y_2\cup Z_1\cup
Z_2$ is a clique, and its size is strictly larger than $q$, a
contradiction.

Therefore we may assume that $R_3=\emptyset$.  Let $X=Q_2\cup R_2\cup
Q_5$ and $A=Q_3\cup Q_4$.  Then the partition of $V(G)$ into $Q_1, X$
and $A$ shows that $G$ is an extended $\cal C$-pair, so the result
follows from Lemma~\ref{thm:colxcp}.
\end{proof}

\begin{theorem}\label{thm:chiboil}
Let $G$ be a boiler.  Then $\chi(G)\le
\lceil\frac{5}{4}\omega(G)\rceil$.
\end{theorem}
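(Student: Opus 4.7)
The plan is to proceed by induction on $|V(G)|$, using Theorem~\ref{thm:tools} as a reduction lemma. By Theorem~\ref{thm:tools}(ii) we may assume that $G$ admits no very good stable set (else we are done), and then we split according to whether $L$ is a clique, mirroring the dichotomy sketched at the beginning of Section~\ref{subsec:boilers}.

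If $L$ is a clique, we use the observation from that subsection that $(A\cup M, B\cup L)$ is a $\mathcal C$-pair (here $B\cup L$ is a clique, $G[A\cup M]\in\mathcal C$ by the nested adjacency between $A$ and the $M_i$, and the $\mathcal C$-pair axioms hold). The boiler definition makes $Q$ a clique complete to $A\cup M$ and anticomplete to $B\cup L$, so the partition $(Q,\,A\cup M,\,B\cup L)$ of $V(G)$ realizes $G$ as an extended $\mathcal C$-pair. Lemma~\ref{thm:colxcp} yields $\chi(G)\le\lceil\frac{5}{4}\omega(G)\rceil$ at once.

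If $L$ is not a clique, we invoke Theorem~\ref{lem:boil1}: up to a suitable permutation of $\{3,\ldots,k\}$ and for some $j\in\{3,\ldots,k\}$, the set $A$ partitions into $A'_L$, $A_L\setminus A'_L$, and $A\setminus A_L$ with nested complete/anticomplete adjacencies to each block $M_i\cup B_i$, and there exist $b^*\in B_k$ and $m^*\in M_k$ anticomplete to $A$. The strategy is to exhibit a very good stable set $S$ of $G$. Since $G[L]$ and each $G[M_i]$ are $(P_4, C_4, 2P_3)$-free and therefore lie in $\mathcal C$, every connected component of $G[L]$ or of $G[M_i]$ possesses a vertex that is universal in that component (the root of its bamboo tree representation), which lies in every maximal clique of that component. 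Let $S$ consist of one such local universal vertex chosen from each connected component of $G[L]$ and of each $G[M_i]$. Using $[L,M]=\emptyset$, the pairwise anticompleteness of distinct $M_i$'s, and the pairwise anticompleteness of distinct components, $S$ is a stable set of $G$. Classifying the maximal cliques of $G$ according to their intersections with $Q$ and with $M$, they fall into three types: (I) $Q\cup K'$ with $K'$ maximal in $G[A\cup M]$; (II) $B\cup L'$ with $L'$ maximal in $G[L]$; and (III) $A'\cup B'\cup M'$ with $M'\subseteq M_i$ for some $i$ and no $Q$. Types (I) and (II) are covered by $S$ because maximality and the tree description of $\mathcal C$-graphs force $M_i'$ (respectively $L'$) to contain the universal vertex of its component.

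The main obstacle is type (III): a priori the universal vertex $m_C$ of the component of $G[M_i]$ containing $M'$ might not lie in $M'$. One shows using the $C_4$-freeness of $G$ together with the fact that every vertex of $M_i$ has a neighbor in $B_i$ that the only way to have $m_C\notin M'$ is for $m_C$ to have a non-neighbor $b'\in B'$, and then the neighborhood ordering on $B_i$ imposed by Lemma~\ref{lem:ab} forces a contradiction with the maximality of $B'$ inside $A'\cup M'\cup B_i$. Once $S$ is verified to be very good, Theorem~\ref{thm:tools}(ii) closes the induction. If this local argument turns out too delicate, the fallback is to apply Theorem~\ref{thm:tools}(iv) with $t=5$, using the underlying $C_5$-structure from the proof of Theorem~\ref{thm:final} (with $v_1\in Q$, $v_2,v_5\in M$, $v_3,v_4\in B$) to decompose the chromatic budget in the same spirit as the blowup-of-$C_5$ arguments behind Corollary~\ref{cor:bu-c5}.
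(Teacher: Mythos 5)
Your first case ($L$ a clique) is exactly the paper's: realize $G$ as an extended $\cal C$-pair via the partition $(Q,\,A\cup M,\,B\cup L)$ and invoke Lemma~\ref{thm:colxcp}. The gap is in the second case. The stable set $S$ of component-roots is not a very good stable set in general, and the obstacle you flag for type (III) is fatal rather than ``delicate''. Concretely, let a component $C$ of $G[M_i]$ be a star with center $u$ and leaves $m_1,m_2$, and let $N_{B_i}(u)=\{b_0\}$, $N_{B_i}(m_1)=\{b_0,b_1\}$, $N_{B_i}(m_2)=\{b_0,b_2\}$. This is consistent with $C_4$-freeness (the $B_i$-neighborhoods of adjacent vertices are nested, and the non-adjacent $m_1,m_2$ share only $b_0$, which is adjacent to $u$) and with every boiler axiom. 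Then $\{m_1,b_0,b_1\}\cup A'$ (with $A'$ the vertices of $A$ complete to $M_i\cup B_i$) is a maximal clique of $G$ containing no root of any component of $G[M]$ or $G[L]$: the unique universal vertex $u$ of $C$ is excluded because $ub_1\notin E(G)$. Your sketched rescue only yields $N_{B_i}(u)\subseteq N_{B_i}(m')$ for $m'\in M'$, hence $N_{B_i}(u)\subseteq B'$; that is perfectly compatible with $u$ missing some vertex of $B'$, so no contradiction with maximality arises. The same defect already breaks your type (II) claim, which you assert without proof: the universal vertex of a component of $G[L]$ need not be adjacent to the $A$-part of a maximal clique meeting $L$, since the boiler axioms only guarantee that each vertex of $L$ has \emph{some} neighbor in $A$. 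The fallback via Theorem~\ref{thm:tools}(iv) is not an argument.

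For comparison, the paper disposes of the case ``$L$ not a clique'' by a counting contradiction rather than by exhibiting a stable set: after reducing (via Theorem~\ref{thm:tools}(i)) to minimum degree at least $\lceil\frac{5}{4}q\rceil$, two non-adjacent simplicial vertices of the chordal graph $G[L]$ each need more than $\lceil\frac{q}{4}\rceil$ neighbors in $A$, and these neighborhoods are disjoint (a common neighbor would form a $C_4$ with the $A$-anticomplete vertex $b^*\in B$); similarly vertices of $M_1$ and $M_2$ force two disjoint sets of size greater than $\lceil\frac{q}{4}\rceil$ in $B_1$ and $B_2$. Since $[A,B_1\cup B_2]$ is complete, the union of these four sets is a clique of size exceeding $q$, a contradiction. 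So the non-clique case simply cannot occur, and no very good stable set needs to be found there. Your proposal never uses the degree reduction, which is the missing ingredient.
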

\begin{proof}
We use the same definition as in the definition of a boiler.  Let
$q=\omega(G)$.  By Theorem~\ref{thm:tools} we may assume that every
vertex in $G$ has degree at least $\lceil\frac{5}{4}q\rceil$.  If $L$
is a clique, then the partition of $V(G)$ into $Q$, $M\cup A$ and
$L\cup B$ shows that $G$ is an extended $\cal C$-pair, so the result
follows from Lemma~\ref{thm:colxcp}.  Therefore assume that $L$ is
not a clique.  By the same argument as in the proof of
Theorem~\ref{thm:colbelt} implies that there are two disjoint
subsets $A_1, A_2$ of $A$ of size at least
$\lceil\frac{q}{4}\rceil+1$.  By the same argument applied to
$G[M_1\cup M_2]$, there are two disjoint sets $Y_1\subseteq B_1$ and
$Y_2\subseteq B_2$ of size at least $\lceil\frac{q}{4}\rceil+1$.  Then
$A_1\cup A_2\cup B_1\cup B_2$ is a clique, with size  strictly
larger than $q$, a contradiction.
\end{proof}

\subsection{Chromatic bounds for $(P_6, C_4)$-free graphs}

\noindent{\it Proof of Theorem~\ref{thm:54bound}.}~Let $G$ be any
$(P_6, C_4)$-free graph.  We prove the theorem by induction on
$|V(G)|$.

If $G$ has a universal vertex $u$, then $\omega(G)=\omega(G\setminus
u)+1$, and by the induction hypothesis we have
$\chi(G)=\chi(G\setminus u)+1 \le \lceil \frac{5}{4}(\omega(G\setminus
u)\rceil+1$, which implies $\chi(G) \le \lceil
\frac{5}{4}(\omega(G)\rceil$.

If $G$ has a clique cutset $K$, let $A,B$ be a partition of
$V(G)\setminus K$ such that both $A,B$ are non-empty and
$[A,B]=\emptyset$.  Clearly $\chi(G)=\max\{\chi(G[K\cup A]),
\chi(G[K\cup B])\}$, so the desired result follows from the induction
hypothesis on $G[K\cup A]$ and $G[K\cup B]$.

Finally, if $G$ has no universal vertex and no clique cutset, then
the result follows from Theorem~\ref{thm:structure} and
Theorems~\ref{thm:bu-peter}---\ref{thm:chiboil}.  $\Box$

\bigskip
Next we  prove Theorem~\ref{thm:reeds} by using the following theorem.

 \begin{theorem}[\cite{KM-JGT}]\label{thm:xbound:reeds}  If a graph $G$ satisfies $\chi(G) \le \lceil
\frac{5}{4}\omega(G)\rceil$, then it satisfies $\chi(G) \le
\lceil\frac{\Delta(G) + \omega(G) +1}{2}\rceil$.
\end{theorem}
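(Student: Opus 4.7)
The plan is to establish the implication by a case analysis on the comparison between $\Delta(G)$ and $\omega(G)$, combining the numerical hypothesis with Brooks' theorem to close the gap between the two bounds $\lceil 5\omega/4\rceil$ and $\lceil(\Delta+\omega+1)/2\rceil$.

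First I would dispose of the easy large-$\Delta$ range. A direct arithmetic check shows that when $\Delta \ge \lceil 3\omega/2 \rceil - 2$ one has $\lceil 5\omega/4 \rceil \le \lceil (\Delta+\omega+1)/2 \rceil$. (The ceilings require a short split on $\omega \bmod 4$, but the computation is routine.) In this regime the hypothesis delivers the conclusion immediately.

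Second I would handle the small-$\Delta$ range via Brooks' theorem. If a connected component of $G$ is a complete graph $K_{\Delta+1}$, then $\chi(G)=\omega(G)=\Delta(G)+1$ which equals $\lceil (\Delta+\omega+1)/2 \rceil$ exactly; odd-cycle components (where $\Delta=2$) are checked by hand. Otherwise Brooks gives $\chi(G) \le \Delta(G)$, and when $\Delta \le \omega+1$ this already satisfies $\Delta \le \lceil (\Delta+\omega+1)/2 \rceil$, which closes this case.

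The main obstacle is the intermediate band $\omega + 2 \le \Delta \le \lceil 3\omega/2 \rceil - 3$, which is non-empty only for $\omega \ge 10$ and in which both the Brooks bound $\Delta$ and the hypothesis bound $\lceil 5\omega/4\rceil$ can exceed Reed's bound, potentially by as much as roughly $\omega/4$. The strategy here is to exhibit one or more good stable sets of $G$ (in the sense of Theorem~\ref{thm:tools}), i.e., stable sets meeting every maximum clique of $G$, so that removing them drops $\omega(G)$ while costing only one additional color each; re-applying the hypothesis to the residual subgraph (whose maximum clique size has dropped) then shaves off the remaining gap. The hard part is producing such good stable sets from the bare numerical hypothesis $\chi(G) \le \lceil 5\omega/4 \rceil$: one would likely pass to a minimum counterexample, exploit $\chi$-criticality (in particular $\delta \ge \chi-1$), and then derive structural constraints forcing the existence of a stable transversal of the $\omega$-cliques via Brooks-type arguments on a suitable subgraph. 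This is the technical core, and is the reason the theorem is extracted as a lemma from \cite{KM-JGT} rather than reproduced inline here.
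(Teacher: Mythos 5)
Your two easy regimes are handled correctly in spirit, but note a boundary slip in the first one: for $\omega=5$ and $\Delta=6=\lceil 3\omega/2\rceil-2$ we have $\lceil 5\omega/4\rceil=7>6=\lceil(\Delta+\omega+1)/2\rceil$, so the purely arithmetic case only begins at $\Delta\ge\lceil 3\omega/2\rceil-1$ in general. The genuine gap, however, is the middle band, which you explicitly leave open, and the route you sketch for it would not succeed. Extracting a stable set meeting every maximum clique ``from the bare numerical hypothesis'' via a minimum counterexample, $\chi$-criticality and Brooks-type arguments is exactly Borodin--Kostochka-hard territory: take $\omega=12$, $\Delta=15$, where the hypothesis gives $\chi\le 15$ while Reed's bound is $14$; excluding a graph with $\chi=\Delta=15$ and $\omega=\Delta-3$ is an instance of an open problem, and the best known criticality-based results (of Cranston--Rabern type, which from $\chi=\Delta\ge 13$ only force $\omega\ge\Delta-3$) do not rule it out. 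In addition, your plan to ``re-apply the hypothesis to the residual subgraph'' after deleting stable sets is not licensed by the statement as quoted, whose hypothesis constrains only $G$ itself, not its induced subgraphs (this is harmless in the present paper's application only because $(P_6,C_4)$-free graphs form a hereditary class, but your proof is supposed to establish the stated theorem).

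What actually closes the band --- and this is the content of the proof in \cite{KM-JGT}, which the present paper imports by citation rather than reproving --- is a known \emph{unconditional} partial result on Reed's conjecture, not anything manufactured from the $5\omega/4$ hypothesis. If $\lceil 5\omega/4\rceil>\lceil(\Delta+\omega+1)/2\rceil$, then since both sides are integers one gets $(\Delta+\omega+1)/2\le\lceil 5\omega/4\rceil-1\le(5\omega-1)/4$, hence $2\Delta\le 3\omega-3$ and so $\omega\ge\frac{2}{3}(\Delta+1)$. Precisely in this regime Reed's bound is a theorem: Rabern \cite{Rabern} proved it for $\omega\ge\frac{3}{4}(\Delta+1)$, and King \cite{King} extended it to $\omega\ge\frac{2}{3}(\Delta+1)$ via his lemma (proved with independent-transversal machinery) that every such graph has a stable set meeting all maximum cliques. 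Thus in the hard band the conclusion holds without using the hypothesis $\chi\le\lceil 5\omega/4\rceil$ at all, and the hypothesis is needed only in the complementary regime $\Delta\ge\lceil 3\omega/2\rceil-1$, where your arithmetic argument works. Without importing this hitting-set theorem (or an equivalent), the elementary tools you list --- Brooks, criticality, degree counting --- cannot close the gap.
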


\noindent{\it Proof of Theorem~\ref{thm:reeds}.}~ This follows from Theorems~\ref{thm:54bound} and \ref{thm:xbound:reeds}.  $\Box$

\paragraph{Acknowledgements.}
The first author thanks an  anonymous referee and \linebreak Mathew~C.~Francis  for their comments and suggestions.   The second author passed away while this paper was under review, and the first author would like to dedicate this revised version to his memory.


\end{document}